\begin{document}

\title{
Lecture Notes on \\
{\bf Voting Theory}}

\author{{\bf Davide Grossi}
\bigskip \\
Bernoulli Institute for Maths, CS and AI \\
{\em University of Groningen} 
\medskip \\
Amsterdam Center for Law and Economics \\
Institute for Logic, Language and Computation \\
{\em University of Amsterdam} 
\bigskip \\
\href{http://www.davidegrossi.me}{www.davidegrossi.me} 
}
\bigskip 
\date{\textcopyright Davide Grossi 2021}

\maketitle



\thispagestyle{empty}
\tableofcontents
\newpage




\thispagestyle{empty}

\vspace*{5cm}

\subsection*{Foreword}

These lecture notes have been developed for the course \textit{Computational Social Choice} of the Artificial Intelligence MSc programme at the University of Groningen (academic years 2019/20 and 2020/21). They cover mathematical and algorithmic aspects of voting theory. 


\medskip

The author wishes to thank the students of the above mentioned courses, and Yuzhe Zhang, for the many corrections and suggestions provided for these notes.


\pagestyle{fancy}
\setcounter{chapter}{0}
\setcounter{page}{1}


\chapter{Choosing One Out of Two}



This chapter introduces the basic ideas and definitions of standard voting theory and applies them to the context in which a group has to choose one out of two alternatives. In this context we prove two fundamental theorems of social choice over two alternatives.


\section{Preliminaries}

\subsection{Key definitions}

A social choice problem arises whenever a group needs to take a collective choice, among two or more options, based on the different preferences or opinions that the members of the group may have about those options. We can make this social choice context precise as follows:
\begin{itemize}
\item $\N$ is a finite set of $n$ {\em voters}, or agents (so, $|N| = n$). We specifically take $\N$ to be a set of natural numbers: $\N = \set{1, 2, \ldots, n}$.
\item $\A$ is a finite set of {\em alternatives}, or options, of size $m$ (so, $|A| = m$). We assume that $m \geq 2$. Alternatives are denoted by the initial letters of the alphabet: $a$, $b$, $c$, etc.
\end{itemize}

Each voter expresses a preference, or {\em ballot} $\p_i$. Such ballots are taken to be  linear orders\footnote{That is, a binary relations that are transitive, total (or complete), and antisymmetric (and therefore reflexive).} over $\A$: $x \p_i y$ stands for ``$x$ is strictly preferred to $y$ by $i$, or $x = y$''. The set of all linear orders over $\A$ is denoted $\LO(\A)$. So, for all $i \in \N$, $\p_i \in \LO(\A)$. The asymmetric (and thus irreflexive) part of each $\p_i$ is denoted $\succ_i$.\footnote{Note that $\p$ and $\succ$ are identical except for the fact that the first is reflexive and the second is irreflexive.} 

Collecting all these ballots together defines a so-called preference (or ballot) profile $\P = \tuple{\p_1, \p_2, \ldots, \p_n}$. That is, a tuple collecting all ballots of the voters in $\N$ about the alternatives in $\A$. Such a tuple is called a {\em profile}. The set of all profiles for the voters in $\N$ is denoted $\LO(\A)^n$.\footnote{
We can also conveniently think of a profile $\P$ as a matrix
\[
\P = 
\begin{bmatrix}
    x_{11} & x_{12} & x_{13} & \dots  & x_{1m} \\
    x_{21} & x_{22} & x_{23} & \dots  & x_{2m} \\
    \vdots & \vdots & \vdots & \ddots & \vdots \\
    x_{d1} & x_{d2} & x_{d3} & \dots  & x_{nm}
\end{bmatrix}
\]
where each entry $x_{ij}$ denotes the alternative in the $j^\mathit{th}$ position in the linear order of voter $i$. Using matrix notation we can write $\P_{ij}$ for such an entry. 
Similarly $\P_i$ denotes the vector representing the linear order $\p_i$ of voter $i$.
} 

\medskip

So the problem of social choice, loosely defined, is the problem of finding a suitable subset of alternatives that makes as many voters as `happy' as possible, given the preference information revealed by the ballots in a profile.

\begin{example}[A preference profile] \label{ex:profile}
Let $\N = \set{1, 2 , 3}$ and $A = \set{a, b}$. A possible preference profile is the triple $\tuple{\p_1, \p_2, \p_3}$ where: $\p_1 =  ab$, $\p_2 =  ab$ and $\p_3 =  ba$.\footnote{We denote linear orders simply as sequences, or vectors, of alternatives from the most preferred (left) to the least preferred (right).} This profile can be represented in tabular form as follows:
\begin{center}
\begin{tabular}{l|cc}
$1$ & $a$ & $b$ \\
$2$ & $a$ & $b$ \\
$3$ & $b$ & $a$
\end{tabular}
\end{center}
Here each row represents a voter, and the linear order representing to the voter's ballot is rendered from the most preferred (left) to the least preferred (right).
\end{example}

\begin{definition}
Let $\tuple{\N, \A}$ be given. A {\em social choice function} or SCF for $\tuple{\N, \A}$ is a function
\begin{align}
f: \LO(\A)^n \to 2^A \setminus \emptyset. \label{eq:SCF}
\end{align}
\end{definition}
That is, for any profile $\P$, an SCF outputs a non-empty set of alternatives. These are the alternatives constituting the `social choice' made by the group $\N$ via function $f$, given the expressed preferences in $\P$. If the size of the output of an SCF is larger than $1$ then the alternatives it contains can be thought of as `tied' choices. 

We will refer to concrete social choice functions as {\em voting rules}. The possibly best-known voting rule selects a single alternative as the social choice based on the number of first positions that each alternative gets in the ballots of the input voting profile:

\begin{VR}[Plurality] \label{vr:plurality}
The plurality rule is the SCF defined as follows. For any profile $\P \in \LO(\A)^n$:
\begin{align*}
\Plr(\P) & =  \set{x \in \A ~\middle|~ \nexists y \in \A, \size{\set{i \in \N \mid \max_{\p_i}(\A) = y}} > \size{\set{i \in \N \mid \max_{\p_i}(\A) = x }}}
\end{align*}
We will also refer to $\size{\set{i \in \N \mid \max_{\p_i}(\A) = x}}$ as the {\em plurality score} of alternative $x$ in profile $\P$ and we will sometimes denote it by $\Plr(\P)(x)$.
\end{VR}
So according to plurality, the social choice for a profile contains all alternatives for which there is no other alternative that occurs as top preference more often in the profile. Notice that this allows for ties in the social choice.

\begin{example}[Plurality] 
Applying the plurality rule to the profile $\tuple{\p_1, \p_2, \p_3}$ in Example \ref{ex:profile} we obtain $\Plr(\tuple{\p_1, \p_2, \p_3}) = \set{a}$. The single winner is therefore alternative $a$. 
\end{example}

A useful benchmark for the further exploration of the space of voting rules are dictatorships.
\begin{VR}[Dictatorship] \label{vr:dictatorship}
The dictatorship of a given $i \in \N$ is the SCF defined as follows. For any profile $\P \in \LO(\A)^n$:
\[
\Dr_i(\P) = \set{\max_{\p_i}(\A)}.
\]
\end{VR}


\subsection{Basic axioms}

Asking what is the best way to aggregate the preferences of a group in order to reach a social choice amounts to asking whether we can identify a `best' social choice function $f$.
For each $\tuple{\N, \A}$ there are many social choice functions: $(2^m-1)^{|\LO(\A)^n|}$. So even just for the social choice context of Example \ref{ex:profile} there are $(2^3-1)^{2^3}= 2562890625$ such functions, among which plurality and dictatorships are just some of them. 

A way to get a hold of this space is by imposing axioms on the SCF $f$ we would like to identify. Axioms are meant to capture properties of SCFs which are considered desirable. Ultimately one would like to identify a set of such desirable properties that, all together, uniquely determine ({\em characterize}) $f$. Each of the axioms reported below captures a property of SCFs that may be considered desirable, in at least some contexts of collective decision-making. Some may be considered more intuitive than others, but no axiom can really be considered unquestionable.

\medskip

The first group of axioms concerns some basic requirements roughly inspired by an idea of `democratic voting' in which all voters and all alternatives should be treated equally.
\begin{definition}[Equal treatment axioms] \label{def:equal}
Let $\tuple{\N, \A}$ be given. An SCF $f$ is:
\begin{description}
\item[Anonymous] iff for every permutation $\pi: \N \to \N$ and $\P \in \LO(\A)^n$, 
\[
f(\underbrace{\tuple{\p_1, \p_2, \ldots, \p_n}}_\P) = f\left(\tuple{\p_{\pi(1)}, \p_{\pi(2)}, \ldots, \p_{\pi(n)}}\right).
\]
{\em Intuitively}, $f$ treats all voters equally. 

\item[Non-dictatorial] iff there exists no $i \in \N$ s.t. $\forall \P \in \LO(\A)^n$, $f(\P) = \set{\max_{\p_i}(\A)}$. Otherwise $f$ is said to be dictatorial with $i$ being {\em the dictator}.

{\em Intuitively}, there exists no voter that can unilaterally determine the social choice. 

\item[Neutral] if for every permutation $\rho: \A \to \A$ and $\P \in \LO(\A)^n$, 
\[
\rho(f(\underbrace{\tuple{\p_1, \p_2, \ldots, \p_n}}_\P)) =  f(\tuple{\rho(\p_1), \rho(\p_2), \ldots, \rho_i(\succeq_n)})
\] 
where for $\succeq_i = x_1x_2 \ldots x_n$, $\rho(\p_i) = \rho(x_1) \rho(x_2) \ldots \rho(x_m)$.

{\em Intuitively}, $f$ treats all alternatives equally. 
\end{description}
\end{definition}

The next group of axioms concerns the ability of an SCF to respond appropriately to growing support for alternatives. They rule out odd SCFs such as 'choose $x$ whenever the number of voters supporting $x$ is odd', or SCFs allowing for too many ties (e.g., `choose all alternatives supported by at least one third of the voters'). Before introducing the axioms we define one extra piece of notation. Given a profile $\P$ and two distinct alternatives $x$ and $y$, $\N_\P^{xy} = \set{i \in \N \mid x \pp_i y}$ denotes the set of voters that strictly prefer $x$ to $y$ in $\P$. Throughout these notes will also write $\supp^{xy}_\P$ instead of $\size{\N_\P^{xy}}$ to denote the number of voters that strictly prefer $x$ to $y$ in $\P$.

\begin{definition}[Responsiveness axioms] \label{def:resp_axioms}
Let $\tuple{\N, \A}$ be given. An SCF $f$ is:
\begin{description}

\item[Pareto] iff for all $\P \in \LO(\A)^n$ and for all $x \in \A$, if there exists $y \in \A$ s.t. for all $i \in \N$ $y \succ_i x$, then $x \not\in f(\P)$.


{\em Intuitively}, the social choice cannot contain dominated alternatives, that is, alternatives for which the unanimity of voters prefers a different alternative.

\item[Unanimous] iff for all $\P \in \LO(\A)^n$ and $x \in \A$, if for all $i \in \N$ $\max_{\p_i}(\A) = x$, then $f(\P) = \set{x}$.

{\em Intuitively}, whenever an alternative is the top choice of all voters, that alternative is the unique social choice.

\item[Monotonic]  iff for all $\P \in \LO(\A)^n$ and $x \in \A$, if $x \in f(\P)$ then $x \in f(\P')$, where:
\begin{itemize}
\item $\P' \in \LO(\A)^n$
\item $\N^{xy}_\P \subseteq \N^{xy}_{\P'}$, for all $y \in \A \setminus \set{x}$
\item $\N^{yz}_\P = \N^{yz}_{\P'}$, for all $y, z \in \A \setminus \set{x}$
\end{itemize}

{\em Intuitively}, if an alternative is a social choice it remains so whenever a voter decides to rank it higher in her ballot. 

\item[Positively responsive] iff for all $x \in \A$ and $\P \in \LO(\A)^n$, if $x \in f(\P)$ then $f(\P') = \set{x}$ where:
\begin{itemize}
\item $\P \neq \P' \in \LO(\A)^n$
\item $\N^{xy}_\P \subseteq \N^{xy}_{\P'}$, for all $y \in \A \setminus \set{x}$
\item $\N^{yz}_\P = \N^{yz}_{\P'}$, for all $y, z \in \A \setminus \set{x}$
\end{itemize}

{\em Intuitively}, if an alternative is selected as a social choice in a profile, it becomes the only social choice those profiles in which some voter ranks $x$ higher.

\end{description}
\end{definition}

The axioms in this last group concern the ability of an SCF to output a single social choice, therefore ruling out ties, and its ability to effectively consider the whole space of possible preferences in the voters' population.
\begin{definition}[Imposition and resoluteness axioms]
Let $\tuple{\N, \A}$ be given. An SCF $f$ is:
\begin{description}

\item[Non-imposed] iff for all $x \in \A$ there exists $\P \in \LO(\A)^n$ s.t. $f(\P) = \set{x}$. It is said to be {\em imposed} otherwise.

{\em Intuitively}, every alternative is selected as unique social choice for at least one profile.

\item[Resolute] iff for all $\P \in \LO(\A)^n$, $f(\P)$ is a singleton.

{\em Intuitively}, the social choice is unique.

\end{description}
An SCF is imposed whenever there exists an alternative that can never be the social choice. It is irresolute when it admits ties.

\end{definition}

Some immediate consequences can be drawn from the above definitions. If $f$ is anonymous, then for all profiles $\P$ and $\P'$ such that, for all distinct alternatives $x$ and $y$ it holds that $\supp^{xy}_\P = \supp^{xy}_{\P'}$, we have that $f(\P) = f(\P')$. That is, for anonymous SCFs the only information that matters in a profile is the size of support for each alternative against each other alternative. As a direct consequence dictatorships are not anonymous. We state here some further observations:

\begin{fact} \label{fact:axioms}
Let $\tuple{\N, \A}$ be given, and let $f$ ba an SCF.
\begin{enumerate}[a)]
\item If $f$ is non-imposed and monotonic, then it is unanimous; 
\item If $f$ is is Pareto, then it is unanimous.
\item If $f$ is unanimous, then it is non-imposed.
 \end{enumerate}
\end{fact}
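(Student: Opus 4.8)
The plan is to treat parts (c) and (b) as near-immediate unfoldings of the definitions and to reserve the real work for part (a), proving them in the order (c), (b), (a). Throughout I use only that an SCF returns a non-empty set, together with the axiom definitions above. For a fixed alternative $x$ I write $\P^x$ for the profile in which every voter ranks $x$ in the top position; note that over two alternatives this profile is unique.

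For part (c), I would fix an arbitrary $x \in \A$ and apply $f$ to $\P^x$. Since $\max_{\p_i}(\A) = x$ for every voter $i$, unanimity yields $f(\P^x) = \set{x}$; as $x$ was arbitrary, every alternative is the sole winner of some profile, which is exactly non-imposition. For part (b), I would fix a profile $\P$ in which all voters rank some $x$ on top and take any $z \neq x$. Because $x$ sits above $z$ in every ballot we have $x \succ_i z$ for all $i$, so $z$ is Pareto-dominated (witnessed by $x$) and the Pareto axiom gives $z \notin f(\P)$. Hence $f(\P) \subseteq \set{x}$, and since $f(\P)$ is non-empty we conclude $f(\P) = \set{x}$, i.e.\ $f$ is unanimous.

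For part (a) I would fix $x$ and, using non-imposition, pick a profile $\P^\ast$ with $f(\P^\ast) = \set{x}$, so in particular $x \in f(\P^\ast)$. Passing from $\P^\ast$ to $\P^x$ only increases the support of $x$ against its rival while leaving the (vacuous) comparisons among the remaining alternatives untouched, so $\P^x$ is a legal target of the monotonicity axiom applied to $x$ at $\P^\ast$; monotonicity therefore gives $x \in f(\P^x)$. It then remains to rule out the rival $y \neq x$ from $f(\P^x)$. Here I use that $\P^x$ is extremal for $y$: no voter ranks $y$ above $x$, so every profile $\P'$ satisfies $\N^{yx}_{\P^x} \subseteq \N^{yx}_{\P'}$ (all other comparisons being vacuous), making each $\P'$ a monotonicity target for $y$ at $\P^x$. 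Thus if $y \in f(\P^x)$ then $y \in f(\P')$ for \emph{every} profile $\P'$, contradicting $f(\P^\ast) = \set{x}$. Hence $f(\P^x) = \set{x}$, and running the same argument for the other alternative shows $f$ is unanimous.

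The main obstacle is precisely this last uniqueness step: monotonicity on its own delivers only $x \in f(\P^x)$, not that $x$ is the \emph{sole} winner, and excluding the rival requires combining monotonicity with non-imposition through the extremality of $\P^x$. This is also where the chapter's two-alternative setting does real work: with only one rival to $x$, the profile $\P^x$ is simultaneously maximal for $x$ and minimal for that rival, and it is the unique top-$x$ profile. With three or more alternatives neither feature survives, since one can raise $x$ to the top in many inequivalent ways and a rival can be minimal without every profile being reachable, so I would not expect the bare monotonicity-plus-non-imposition argument to carry over, and I would flag that (a), unlike (b) and (c), genuinely leans on $|\A| = 2$.
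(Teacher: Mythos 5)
The paper offers no proof to compare against---Fact \ref{fact:axioms} is deferred to Exercise \ref{ex:axioms}---so your argument stands on its own, and it is correct. Parts (b) and (c) are the routine unfoldings you describe, and in part (a) you correctly identify where each hypothesis is spent: non-imposition supplies a witness profile $\P^\ast$ with $f(\P^\ast)=\set{x}$, monotonicity imports $x$ into the all-$x$-on-top profile $\P^x$, and the extremality of $\P^x$ for the rival $y$ (every profile is a legal monotonicity target for $y$ at $\P^x$, since $\N^{yx}_{\P^x}=\emptyset$ and the remaining side conditions are vacuous when $m=2$) is what expels $y$. Your closing caveat is also justified: part (a) genuinely fails for $m>2$. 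For a single voter over $\set{a,b,c}$, let $f$ return the singleton of the top-ranked alternative on every ballot except $cba$, where it returns $\set{c,b}$. This $f$ is non-imposed and monotonic (the only ballots reachable from $cba$ by raising $b$ while preserving the $c$-versus-$a$ comparison are $cba$ and $bca$, and $b$ wins in both), yet it is not unanimous. So part (a), unlike (b) and (c), must indeed be read under the chapter's standing assumption $m=2$, exactly as you flag.
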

\begin{proof}
Exercise \ref{ex:axioms}
\end{proof}


\section{Plurality is the best \ldots when $m = 2$}

We discuss now the plurality rule in the context of social choice problems where $m = 2$. Observe immediately that in such cases the plurality rule (Rule \ref{vr:plurality}) simplifies as follows, for all $\P \in \LO(\A)^n$:
\begin{align}
\Plr(\P) & =  \set{x \in \A ~\middle|~ \nexists y \in \A, \size{\set{i \in \N \mid \max_{\p_i}(\A) = y}} > \size{\set{i \in \N \mid \max_{\p_i}(\A) = x }}} \nonumber \\
 & =  \set{x \in \A ~\middle|~ \nexists y \in \A,  \supp^{yx}_\P > \supp^{xy}_\P} \nonumber \\
 & = 
 \left\{
 \begin{array}{ll}
 \set{x} & \mathit{if}~\supp^{xy}_\P \geq \lceil\frac{n+1}{2}\rceil   \\
 \set{y} & \mathit{if}~\supp^{yx}_\P \geq  \lceil\frac{n+1}{2}\rceil  \\
  \A & \mathit{otherwise}
 \end{array}
 \right.
 \label{eq:simple}
\end{align}
where $y$ is the second element of $\A$. The rule defined by \eqref{eq:simple} is also known as {\em simple majority}.

In this section we will also be working with the following generalization of the simple majority rule, under the assumption that $m=2$.
\begin{VR}[Quota] \label{vr:quota}
Let $\A$ be such that $m = 2$. The quota rule $\Qr_q$, where $q$ (the quota) is an integer such that $1 \leq q \leq n+1$, is an SCF defined as follows. For any profile $\P \in \LO(\A)^n$:
\begin{align*}
\Qr_q(\P) & = 
\left\{
 \begin{array}{ll}
 \set{x \in \A ~\middle|~ \size{\set{i \in \N \mid \max_{\p_i}(\A) = x}} \geq q} & \mbox{{\em if such set is non-empty}} \\
 \A & \mbox{{\em otherwise}}
 \end{array}
 \right.
\end{align*}
\end{VR}
Clearly, simple majority \eqref{eq:simple} is the quota rule where $q = \lceil \frac{n+1}{2} \rceil$. 
More generally, quota rules select an alternative as social choice whenever the support for that alternative reaches a given quota. Ties can arise when the quota are below a majority quota, or when the quota is impossible to be met $(q = n+1)$. 

\begin{fact}
Let $\A$ be such that $m = 2$. If $\frac{n}{2} < q \leq n + 1$ then:
\begin{align}
\Qr_q(\P) & =  
\left\{
 \begin{array}{ll}
 \set{x} & \mathit{if}~\supp^{xy}_\P \geq q  \\
 \set{y} & \mathit{if}~\supp^{yx}_\P \geq q \\
 \A & \mathit{otherwise}
 \end{array}
\right.
\end{align}
\end{fact}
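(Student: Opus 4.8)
The plan is to reduce the definition of $\Qr_q$ to the two support quantities $\supp^{xy}_\P$ and $\supp^{yx}_\P$ and then argue by cases. First I would observe that, since $m=2$, every voter $i$ has either $\max_{\p_i}(\A)=x$ or $\max_{\p_i}(\A)=y$, so the plurality score of $x$ is exactly $\supp^{xy}_\P$ and that of $y$ is exactly $\supp^{yx}_\P$; moreover these two counts partition $\N$, giving the identity
\[
\supp^{xy}_\P + \supp^{yx}_\P = n.
\]
Under this translation the inner set in the definition of $\Qr_q$ contains $x$ iff $\supp^{xy}_\P \geq q$ and contains $y$ iff $\supp^{yx}_\P \geq q$, so the whole definition is governed by these two inequalities.

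The crucial step is to show that the hypothesis $\frac{n}{2} < q$ forbids both alternatives from meeting the quota simultaneously. Suppose toward a contradiction that $\supp^{xy}_\P \geq q$ and $\supp^{yx}_\P \geq q$; then the identity above gives $n = \supp^{xy}_\P + \supp^{yx}_\P \geq 2q > n$, a contradiction. Hence the inner set has at most one element. This is really the only place where the bound on $q$ is used, and it is precisely what makes the piecewise definition in the statement well-posed, i.e.\ what guarantees that the first two cases are mutually exclusive.

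With this in hand the result follows by a three-way case split matching the claimed right-hand side. If $\supp^{xy}_\P \geq q$, the inner set is nonempty and, by the previous paragraph, equals $\set{x}$, so $\Qr_q(\P) = \set{x}$; symmetrically, if $\supp^{yx}_\P \geq q$ then $\Qr_q(\P) = \set{y}$. Otherwise neither score reaches $q$, the inner set is empty, and the definition yields $\Qr_q(\P) = \A$. These three cases are exhaustive and, by the mutual-exclusivity argument, non-overlapping, so together they reproduce the stated formula.

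I expect no genuine obstacle here: the content is entirely the counting identity $\supp^{xy}_\P + \supp^{yx}_\P = n$ together with the inequality $2q > n$. The only points requiring minor care are confirming that the two singleton cases cannot both trigger, which is exactly the role of the assumption $q > \frac{n}{2}$, and checking the boundary value $q = n+1$, where $\supp^{xy}_\P \leq n < q$ forces the quota never to be met, so the rule always returns $\A$, consistently with the \emph{otherwise} branch.
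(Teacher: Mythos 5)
Your proposal is correct and follows essentially the same route as the paper, whose entire proof is the observation that for $\frac{n}{2} < q$ at most one alternative can meet the quota — exactly the mutual-exclusivity step you isolate via $\supp^{xy}_\P + \supp^{yx}_\P = n \geq 2q > n$. Your additional remarks (the translation of plurality scores into support counts, and the boundary case $q = n+1$, which the paper's one-line proof implicitly restricts to $q \leq n$) are just a more explicit write-up of the same argument.
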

\begin{proof}
It suffices to observe that by the definition of Rule \ref{vr:quota}, for every profile $\P$ and alternatives $x,y \in \A$, if $\frac{n}{2} < q \leq n$ then $x \in \Qr_q(\P)$ implies $y \not\in \Qr_q(\P)$.
\end{proof}

\subsection{Axiomatic characterizations of plurality when $m=2$}

The following three theorems provide an axiomatic justification for the use of simple majority voting in social choice contexts involving only two options.

\begin{theorem}[May's theorem \cite{may52set}] \label{th:may}
Let $\tuple{\N, \A}$ be given such that $n$ is odd and $m=2$. The plurality rule is the only SCF that is resolute, anonymous, neutral and monotonic.
\end{theorem}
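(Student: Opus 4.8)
The plan is to prove both directions of the characterization: that simple majority (the $m=2$ form of plurality given in \eqref{eq:simple}) satisfies the four axioms, and conversely that these four axioms force an SCF to coincide with simple majority. The first direction is a routine check — I would read off resoluteness and anonymity directly from \eqref{eq:simple}, get neutrality by observing that swapping $a$ and $b$ interchanges $\supp^{ab}_\P$ and $\supp^{ba}_\P$, and get monotonicity by noting that raising $a$ in some ballot can only increase $\supp^{ab}_\P$ — so I would concentrate the effort on the converse (uniqueness) direction.

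For the converse, write $\A = \set{a,b}$ and let $f$ be any SCF satisfying the four axioms. First I would use anonymity to collapse every profile to a single number: as noted after the imposition axioms, an anonymous $f$ depends only on the support counts, and with $m=2$ a profile is determined up to anonymity by $k := \supp^{ab}_\P$ (with $\supp^{ba}_\P = n-k$). So $f$ can be regarded as a function of $k \in \set{0,1,\ldots,n}$ alone. Let $\P_k$ be a representative profile with exactly $k$ voters ranking $a$ above $b$. By resoluteness each $f(\P_k)$ is a singleton, so I define $g(k) \in \set{a,b}$ by $f(\P_k) = \set{g(k)}$.

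The next step extracts structure from the remaining two axioms. Monotonicity shows that $\set{k : g(k)=a}$ is upward closed: if $a \in f(\P_k)$, then passing one more voter from $ba$ to $ab$ gives $\N^{ab}_{\P_k} \subseteq \N^{ab}_{\P_{k+1}}$ with no other pair to constrain (since $m=2$), so monotonicity forces $a \in f(\P_{k+1})$, hence $g(k+1)=a$ by resoluteness. Thus there is a threshold $t$ with $g(k)=a \iff k \geq t$. Neutrality then pins down $t$: applying the transposition $\rho$ swapping $a$ and $b$ sends $\P_k$ to a profile that is equivalent (by anonymity) to $\P_{n-k}$, so the neutrality condition $\rho(f(\P_k)) = f(\rho(\P_k))$ reads $g(n-k) = \rho(g(k))$, i.e. $g$ changes value under $k \mapsto n-k$. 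Combined with the threshold description this gives $k \geq t \iff n-k < t \iff k \geq n-t+1$ for every integer $k$, forcing $t = n-t+1$, that is $t = \frac{n+1}{2}$. Since $n$ is odd this $t$ is an integer, and $g(k)=a \iff k \geq \frac{n+1}{2} = \lceil\frac{n+1}{2}\rceil$, which is precisely simple majority \eqref{eq:simple}. Hence $f = \Plr$.

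The main obstacle I anticipate is the neutrality step, for two reasons. First, the statement in Definition~\ref{def:equal} is notationally awkward, so I would begin by restating it cleanly as ``$f$ commutes with permutations of the alternatives,'' and then carefully verify that flipping every ballot turns a profile with $k$ supporters of $a$ into one with $n-k$ supporters of $a$. Second, it is exactly here that the oddness of $n$ is essential: the symmetry $g(n-k)=\rho(g(k))$ is consistent with a single-valued (resolute) $g$ only because $n$ odd guarantees $k \neq n-k$ for all $k$; I would flag that for even $n$ the middle case $k=n/2$ would demand $g(n/2)=\rho(g(n/2))$, impossible for a singleton, which is why the hypothesis that $n$ is odd cannot be dropped.
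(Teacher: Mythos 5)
Your proof is correct, but it is organized quite differently from the one in the text. The paper argues by contradiction on a single profile where $f$ and $\Plr$ would disagree: neutrality is applied first to flip that profile, and then anonymity and monotonicity are used to transport the flipped outcome back and contradict the assumed disagreement. You instead use anonymity up front to collapse each profile to the single parameter $k = \supp^{ab}_\P$, use monotonicity to show that $\set{k : g(k)=a}$ is upward closed and hence a threshold set, and only then invoke neutrality, via the symmetry $g(n-k)=\rho(g(k))$, to force the threshold to equal $\frac{n+1}{2}$. The two arguments use the same three axioms for the same underlying combinatorial facts, but your decomposition buys more: the intermediate threshold classification is exactly the quota-rule structure of Theorem \ref{th:may3} (and matches the hint for Exercise \ref{ex:may3}), so your argument proves that generalization essentially for free, and your closing remark about $k = n/2$ cleanly isolates why oddness of $n$ is needed (cf.\ Fact \ref{fact:warmup}). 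The paper's route is shorter for the specific statement but leaves the threshold structure implicit. One small point to make explicit when writing this up: after flipping one voter from $ba$ to $ab$ in $\P_k$, you obtain some profile with $k+1$ supporters of $a$, and you need one more application of anonymity to identify its value with $g(k+1)$; since you defined $g$ through anonymity this is harmless, but it should be said.
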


\begin{proof}
\LtoR Assuming that $f$ is plurality, we need to show that $f$ is therefore resolute, anonymous, neutral and monotonic. See Exercise \ref{ex:may} \RtoL Let $f$ be resolute, anonymous, neutral and monotonic. We need to show that $f = \Plr$. Observe first of all that since $m=2$ there are only two possible ballots: $xy$ or $yx$. Furthermore, since $f$ is resolute there are only two possible outcomes: $\set{x}$ or $\set{y}$. Let us proceed towards a contradiction, and assume that $f \neq \Plr$. Then there exists a profile $\P$ where $f$ returns a choice different from what $\Plr$ returns. There are two possible cases: either $\supp^{xy}_\P > \supp^{yx}_\P$, or $\supp^{yx}_\P > \supp^{xy}_\P$.

\fbox{$\supp^{xy}_\P > \supp^{yx}_\P$} By assumption, $f(\P) = \set{y}$. Now permute the alternatives to obtain a profile $\P'$ where 
$\supp^{xy}_\P = \supp^{yx}_{\P'}$ (and therefore $\supp^{yx}_\P = \supp^{xy}_{\P'}$). Then, by the neutrality of $f$ it follows that $f(\P') = \set{x}$. Notice that $\P$ differs from $\P'$ in having more voters preferring $x$ to $y$. By permuting the agents in $N$ we can then construct a new profile $\P''$ such that $\N^{xy}_{\P''} \subseteq \N^{xy}_{\P}$. That is, we make sure that the same agents ranking $x$ above $y$ in $\P''$ also rank $x$ above $y$ in $\P$.\footnote{Note that this is precisely what is required by the second condition in the definition of monotonicity. Note furthermore that as we are handling only two alternatives, the third condition of the unanimity axiom is trivially satisfied.} By anonymity $f(\P'') = \set{x}$ and by monotonicity $f(\P) = \set{x}$. Contradiction. We thus conclude that  $f(\P) = \set{x}$.

\fbox{$\supp^{yx}_\P > \supp^{xy}_\P$} By assumption, $f(\P) = \set{x}$. An argument identical to the one for the previous case applies to conclude $f(\P) = \set{y}$. We thus obtain a contradiction in this case too.

In both cases we obtain a contradiction. Function $f$ is therefore the plurality rule \eqref{eq:simple}.
\end{proof}
At a high level the proof exploits this argument. Since $m = 2$ and $n$ is odd, $f$ can choose either the minority or the majority alternative. If it chooses the minority alternative, then by anonymity and neutrality the minority alternative should be chosen in all profiles that split $N$ in the same way size-wise (i.e., the cell with ballot $xy$ vs. the cell with ballot $yx$, or the cell with ballot $yx$ vs. the cell with ballot $xy$). However, consistently selecting the minority option goes against monotonicity. Hence $f$ must select the majority alternative. We now look at generalizations of Theorem \ref{th:may}.

\medskip

First of all Theorem \ref{th:may} assumes an odd number of voters. When that is not the case, it is reasonable for the SCF to output both alternatives (a tie). The theorem can be generalized to this setting.
\begin{theorem}[May's theorem with ties \cite{may52set}] \label{th:may2}
Let $\tuple{\N, \A}$ be given such that $m=2$. The plurality rule is the only SCF that is anonymous, neutral and positively responsive.
\end{theorem}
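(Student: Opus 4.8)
The plan is to prove the two implications separately, with the substance lying in the direction that the three axioms force $f = \Plr$. Throughout I fix $\A = \set{x,y}$, so each ballot is $xy$ or $yx$ and, as every SCF returns a non-empty subset of $\A$, the only possible outputs are $\set{x}$, $\set{y}$ and $\A$. Since $f$ is anonymous, $f(\P)$ depends only on the support count $k := \supp^{xy}_\P \in \set{0, 1, \ldots, n}$, so the first move is to record this reduction and abbreviate by $g(k)$ the common value of $f$ on all profiles with $\supp^{xy}_\P = k$. The forward direction (plurality satisfies the axioms) I would dispatch quickly against the simplified form \eqref{eq:simple}: anonymity and neutrality are immediate because $\Plr$ reads off only the two support counts and treats them symmetrically, and for positive responsiveness I would observe that $x \in \Plr(\P)$ forces $\supp^{xy}_\P \geq \frac{n}{2}$, so any strict increase in the support for $x$ pushes it strictly above $\frac{n}{2}$ and returns the singleton $\set{x}$.

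For the converse the key preliminary step is to translate the two remaining axioms into statements about $g$. Applying neutrality to the transposition $\rho$ of $x$ and $y$, which sends a profile of support $k$ to one of support $n-k$, yields the symmetry $g(n-k) = \rho(g(k))$, where $\rho$ interchanges $\set{x}$ and $\set{y}$ and fixes $\A$. Since any target count $k' > k$ is realizable by switching some $y$-voters to $x$ while keeping the current $x$-voters (so that the set of $x$-supporters grows and the profile genuinely changes), positive responsiveness becomes a propagation rule: if $x \in g(k)$ then $g(k') = \set{x}$ for every $k' > k$, and symmetrically, applying the axiom to $y$, if $y \in g(k)$ then $g(k') = \set{y}$ for every $k' < k$.

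With these two rules I would split on the parity of $n$. When $n$ is even, the symmetry at the fixed point $k = \frac{n}{2}$ reads $g(\frac{n}{2}) = \rho(g(\frac{n}{2}))$, and the only $\rho$-invariant non-empty output is $\A$, so $g(\frac{n}{2}) = \A$; the propagation rule then fires in both directions to give $g(k) = \set{x}$ for $k > \frac{n}{2}$ and $g(k) = \set{y}$ for $k < \frac{n}{2}$, which is precisely $\Plr$.

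The odd case is the main obstacle, because there is no midpoint at which to anchor a tie, so the value of $g$ at the two central counts must be extracted indirectly. Here I would pin down $g(\frac{n+1}{2})$ by excluding the two non-majority possibilities. If $g(\frac{n+1}{2})$ were $\set{y}$ or $\A$, then since $\frac{n-1}{2} < \frac{n+1}{2}$ the propagation rule would force $g(\frac{n-1}{2}) = \set{y}$, whereas neutrality, using $\frac{n-1}{2} = n - \frac{n+1}{2}$, would give $g(\frac{n-1}{2}) = \rho(g(\frac{n+1}{2})) \in \set{\set{x}, \A}$, a contradiction in either subcase. Hence $g(\frac{n+1}{2}) = \set{x}$; propagation extends this to $g(k) = \set{x}$ for all $k \geq \frac{n+1}{2}$, and neutrality transfers it to $g(k) = \set{y}$ for all $k \leq \frac{n-1}{2}$. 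As $n$ is odd these two ranges exhaust all counts, so again $g = \Plr$, completing the argument.
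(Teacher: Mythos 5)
Your proof is correct. Note that the paper itself does not prove Theorem \ref{th:may2} --- it defers it to Exercise \ref{ex:may2} --- so the only in-paper benchmark is the proof of Theorem \ref{th:may}, which argues by contradiction: assuming $f \neq \Plr$ on some profile, it permutes the alternatives to flip the outcome via neutrality, then permutes voters and invokes monotonicity to derive the opposite outcome on the original profile. You instead give a direct classification: anonymity collapses $f$ to a function $g$ of the single support count $k = \supp^{xy}_\P$, neutrality becomes the symmetry $g(n-k) = \rho(g(k))$, and positive responsiveness becomes a one-sided propagation rule, after which the even case is anchored at the $\rho$-fixed midpoint $k = n/2$ and the odd case at the central count $\frac{n+1}{2}$ (where your elimination of $\set{y}$ and $\A$ via the clash between propagation and the symmetry at $\frac{n-1}{2}$ is the right move). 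Your route buys a uniform treatment of both parities and an explicit description of $f$ on every support level, rather than a refutation of a single deviating profile; it also makes transparent exactly where positive responsiveness (as opposed to mere monotonicity) is needed, namely to force singletons strictly above and below the tie point. One cosmetic remark: when you instantiate positive responsiveness you should note explicitly that the third condition ($\N^{yz}_\P = \N^{yz}_{\P'}$ for $y,z \in \A \setminus \set{x}$) is vacuous when $m=2$, as the paper does in a footnote to the proof of Theorem \ref{th:may}; this costs one sentence and closes the only formal gap in your appeal to the axiom.
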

\begin{proof}
See Exercise \ref{ex:may2}.
\end{proof}

\begin{theorem}[May's theorem for quota rules] \label{th:may3}
Let $\tuple{\N, \A}$ be given such that $m=2$. Let $f$ be an SCF that is anonymous, neutral and monotonic. Then there exists $q \in (\frac{n}{2}, n+1]$ such that $f = \Qr_q$.
\end{theorem}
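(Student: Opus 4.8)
The plan is to exploit the fact that with only two alternatives the whole problem collapses to a single integer parameter, and then to read off the quota from the threshold behaviour of $f$. With $\A = \set{x,y}$ there are only two possible ballots, so by anonymity $f(\P)$ depends only on the number $k := \supp^{xy}_\P \in \set{0,1,\ldots,n}$ of voters ranking $x$ above $y$. First I would use this to define a function $g$ on $\set{0,\ldots,n}$ by $g(k) = f(\P)$ for any $\P$ with $\supp^{xy}_\P = k$; it takes values in $\set{\set{x},\set{y},\A}$ and, since $f$ is an SCF, is never $\emptyset$.

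Next I would extract the order structure. Given $x \in g(k)$ and $k' \geq k$, I can build a profile of support $k'$ whose set of $x$-over-$y$ voters contains that of a support-$k$ profile, so monotonicity (whose third clause is vacuous when $m=2$) forces $x \in g(k')$; hence $\set{k : x \in g(k)}$ is upward closed, and symmetrically $\set{k : y \in g(k)}$ is downward closed. Neutrality with the swap $\rho$ of $x$ and $y$ sends a support-$k$ profile to a support-$(n-k)$ profile and yields $g(n-k) = \rho(g(k))$, so $y \in g(k) \iff x \in g(n-k)$. A short argument shows $\set{k : x \in g(k)}$ is non-empty (otherwise $g \equiv \set{y}$, so $g(n)=\set{y}$, contradicting $g(n)=\rho(g(0))=\set{x}$), so I may set $p := \min\set{k : x \in g(k)}$. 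Then $x \in g(k) \iff k \geq p$ and, by neutrality, $y \in g(k) \iff k \leq n-p$.

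Finally I would identify the quota. Since every $g(k)$ is non-empty, every $k$ lies in $\set{k \geq p}$ or $\set{k \leq n-p}$, which forces $p \leq \frac{n+1}{2}$; setting $q := n - p + 1$ then gives $q \in (\frac{n}{2}, n+1]$. It remains to check $g = \Qr_q$ by splitting $\set{0,\ldots,n}$ into the three consecutive blocks $k \leq p-1$, $p \leq k \leq n-p$, and $k \geq n-p+1$, and comparing with $\Qr_q$ block by block. Using $x \in g(k) \iff k \geq p$ and $y \in g(k) \iff k \leq n-p$, these blocks reproduce exactly the outputs $\set{y}$, $\A$, and $\set{x}$ of $\Qr_q$, so $f = \Qr_q$.

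The main obstacle I anticipate is correctly reading off the quota. The natural first guess is that the quota equals $p$, the least support at which $x$ enters the outcome; but because $\Qr_q$ with $q > \frac{n}{2}$ declares a tie precisely when \emph{neither} side reaches the quota, the correct value is $q = n-p+1$, the least support at which $x$ becomes the \emph{sole} winner. Showing that this $q$ genuinely exceeds $\frac{n}{2}$ is exactly where the non-emptiness bound $p \leq \frac{n+1}{2}$ is needed. It is also worth flagging that the quota representation is not unique (one checks $\Qr_{q'} = \Qr_{n-q'+1}$ as SCFs), so the statement is really selecting the canonical representative with $q > \frac{n}{2}$.
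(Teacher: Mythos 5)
Your proof is correct and complete. The paper leaves this theorem as Exercise \ref{ex:may3} with only a hint, and your argument follows essentially the strategy that hint suggests---reduce by anonymity to a function of $k=\supp^{xy}_\P$, obtain upward/downward-closed threshold sets from monotonicity (with the third clause vacuous for $m=2$), and use neutrality for the $k \mapsto n-k$ symmetry---the only cosmetic difference being that the hint's set $Q$ of supports at which $\set{x}$ is the sole choice yields $q=\min Q$ directly (or $q=n+1$ when $Q=\emptyset$), whereas you recover the same $q$ as $n-p+1$ from the least support $p$ at which $x$ merely enters the outcome; your closing observation that $\Qr_{q'}=\Qr_{n-q'+1}$, so the statement selects the canonical representative with $q>\frac{n}{2}$, is also correct and worth keeping.
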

\begin{proof}
See Exercise \ref{ex:may3}.
\end{proof}
Observe that this theorem allows us to obtain Theorem \ref{th:may} as a corollary.


\subsection{Plurality as maximum likelihood estimator when $m=2$} \label{sec:mle}

Theorem \ref{th:may} above provides a justification of the use of simple majority in contexts involving two alternatives based on the fact that such a rule is the only one satisfying a set of desirable properties. Simple majority can be further justified via a truth-tracking, or epistemic, route. We turn to it in this section.

\subsubsection{Maximum likelihood estimation}

Assume that there exists an objective {\em true} ranking $\pp$ of the two alternatives. Either $\pp = xy$ or $\pp = yx$.\footnote{You can think for instance of $x$ and $y$ as two policies, of which only one is the truly better for the group. Or you can thing of $x$ and $y$ as two candidates for a job, for which only one is the objectively better fit.}  
Voters are uncertain about which one of the two is the true ranking, but each voter can track the correct option with some given probability $0.5 < p \leq 1$, which is the same for all voters. Such $p$ represents the voters' individual accuracy and corresponds to the conditional probability $\Pr(x \succ_i y \mid \pp = xy)$. That is, the probability that $i$ ranks $x$ above $y$ given that $x$ is (actually) better than $y$. 

So voters observe the state of the world imperfectly and vote accordingly to their imperfect observation. We can then think of each voter's ballot $\p_i$ as a random variable generated by the probability distribution $p$, once the true state of the world $\pp$ has been fixed. So assuming each such vote is independent, a profile is therefore a series of identically distributed independent random variables that can take two values ($xy$ or $yx$).\footnote{You can think of a profile as a sequence of $n$ coin tosses, where each coin corresponds to a voter, and the coin used is a biased coin with bias $p$ towards the correct option.} 
The most likely state of the world is therefore the one that has the highest likelihood of generating the observed profile $\P$, that is: $\argmax_{\pp \in \set{xy,yx}} Pr(\P \mid \pp)$. 

\begin{example}
Assume $\N = \set{1,2,3}$ and $\A = \set{a,b}$. Assume furthermore that $p = \frac{2}{3}$ and that $ab$ and $ba$ have identical priors: $\Pr(\pp = ab) = \Pr(\pp = ba) = 0.5$. Voters can express one of two ballots ($ab$ or $ba$), and assume that the profile we observe is $\P = \tuple{ab, ab, ba}$. 
\begin{align*}
\Pr(\P \mid \pp = ab) & = p^2 \cdot (1-p) 
\end{align*}
while 
\begin{align*}
\Pr(\P \mid \pp = ba) & = p \cdot (1-p)^2 
\end{align*}
So ranking $ab$ is $\frac{p}{1-p} = 2$ times as likely as ranking $ba$ given $\P$.
\end{example}

\begin{remark}
If we assume that the alternatives are equally likely (i.e., that there is an equal prior for $xy$ and $yx$), then the above approach is essentially equal to establishing the probability of each state of the world given the observed profile. Continuing on the previous example, by Bayes rule:
\begin{align*}
\Pr(\pp = ab \mid \P) & = \frac{\Pr(\P \mid \pp = ab) \cdot \Pr(\pp = ab)}{\Pr(\P \mid \pp = ab) \cdot \Pr(\pp = ab) + \Pr(\P \mid \pp = ba) \cdot \Pr(\pp = ba)} \\
& = \frac{p^2 \cdot (1-p) \cdot 0.5}{p^2 \cdot (1-p) \cdot 0.5 + ((1-p)^2 \cdot p) \cdot 0.5} \\
& = \frac{p^2 \cdot (1-p)}{p^2 \cdot (1-p) + (1-p)^2 \cdot p} = \frac{p \cdot (1-p)}{p (1-p) + (1-p)^2} = p.
\end{align*}
This is the probability that $ab$ is the correct state of the world, given the observation of profile $\P$, equals $p$. 
\end{remark}

\subsubsection{A jury theorem}


Can the maximum likelihood estimation approach described above be implemented through a SCF? Intuitively we would like the SCF to select, given the profile, the alternative corresponding to the most likely correct ranking. It turns out that this is possible, and again such SCF is the plurality rule.


\begin{theorem}[Condorcet's jury theorem] \label{th:condorcet}
Fix $\A$ such that $m=2$.
Assume furthermore that $0.5 < p \leq 1$ and that, for any $\N$ such that $n$ is odd, each profile for $\tuple{\N, \A}$ is an i.i.d. sequence of random variables $\p_i$ generated by $p$.
Then:
\begin{align}
p_{\Plr}(n) \leq p_{\Plr}(n + 2) & \ \ \mbox{for every odd $n$}  \label{th:growth} \\
p \leq p_{\Plr}(n) & \ \ \mbox{for every odd $n$}  \label{th:non-asymptotic} \\
\lim_{n \to \infty} p_{\Plr}(n)  = 1 & \label{th:asymptotic}
\end{align}
where $p_{\Plr}(n)$ denotes the accuracy of the social choice for $\tuple{N, A}$ determined via plurality.
\end{theorem}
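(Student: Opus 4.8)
The plan is to reduce all three claims to a single binomial tail probability and then treat them in turn, with the growth inequality \eqref{th:growth} as the analytical core.

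First I would fix, without loss of generality, the true ranking to be $\pp = xy$; this is legitimate because the generating process is symmetric under swapping the two alternatives, so the accuracy $p_{\Plr}(n)$ does not depend on which of the two states is the true one. Under $\pp = xy$ each ballot independently ranks $x$ on top with probability $p$, so the number $S_n$ of ``correct'' ballots is distributed as $\mathrm{Binomial}(n,p)$. Since $n$ is odd, plurality coincides with simple majority \eqref{eq:simple} and selects the correct alternative $x$ exactly when $S_n \geq k$, where $k := \frac{n+1}{2}$. Hence $p_{\Plr}(n) = \Pr\bigl(S_n \geq k\bigr) = \sum_{j=k}^{n}\binom{n}{j}p^j(1-p)^{n-j}$, and the three displayed claims become statements about this tail.

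For the growth inequality \eqref{th:growth} I would pass from $n$ to $n+2$ by conditioning on the first $n$ ballots and on the number $T\in\{0,1,2\}$ of correct ballots among the two added voters, where $\Pr(T=2)=p^2$ and $\Pr(T=0)=(1-p)^2$, independently of $S_n$. Majority over $n+2$ voters is correct iff $S_n + T \geq k+1$. A short case analysis shows the outcome can only change for two values of $S_n$: when $S_n=k$ (correct at size $n$, now correct only if $T\geq1$, i.e.\ lost with probability $(1-p)^2$) and when $S_n=k-1$ (wrong at size $n$, now correct only if $T=2$, i.e.\ gained with probability $p^2$); for all other values of $S_n$ the two verdicts agree. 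Therefore
\[
p_{\Plr}(n+2)-p_{\Plr}(n) = \Pr(S_n=k-1)\,p^2 - \Pr(S_n=k)\,(1-p)^2 .
\]
The decisive simplification is that $n$ odd forces $n-k = k-1$, whence $\binom{n}{k-1}=\binom{n}{k}$ and the two probabilities share the same binomial coefficient; collecting powers of $p$ and $1-p$ collapses the right-hand side to $\binom{n}{k}\,p^k(1-p)^k\,(2p-1)$, which is nonnegative precisely because $p>\tfrac12$. This is the step I expect to be the main obstacle, since it is where the hypotheses ``$n$ odd'' and ``$p>\tfrac12$'' are both consumed and where one must resist grinding through the full binomial sum instead of isolating the two boundary terms.

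The remaining two claims are then quick. For \eqref{th:non-asymptotic} I would compute the base case $p_{\Plr}(1)=\Pr(S_1\geq1)=p$ and chain the growth inequality to get $p = p_{\Plr}(1)\leq p_{\Plr}(3)\leq\cdots\leq p_{\Plr}(n)$ for every odd $n$. For the asymptotic claim \eqref{th:asymptotic} I would invoke the weak law of large numbers (or Chebyshev's inequality, using $\operatorname{Var}(S_n/n)=p(1-p)/n$): since $S_n/n\to p$ in probability and the majority threshold $k/n=\tfrac12+\tfrac{1}{2n}$ tends to $\tfrac12<p$, for large $n$ the event $\{S_n\geq k\}$ contains $\{|S_n/n-p|\leq\delta\}$ for a fixed $\delta<p-\tfrac12$, whose probability tends to $1$. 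Hence $p_{\Plr}(n)\to1$.
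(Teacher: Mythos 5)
Your proposal is correct and follows essentially the same route as the paper's proof: your case analysis on the two boundary values $S_n = k$ and $S_n = k-1$ is exactly the paper's decomposition into the events $H_{n+2}\setminus H_n$ and $H_n \setminus H_{n+2}$, and both arguments collapse to the same recursion $p_{\Plr}(n+2)=p_{\Plr}(n)+(2p-1)\binom{n}{(n+1)/2}\left(p(1-p)\right)^{(n+1)/2}$. The treatments of \eqref{th:non-asymptotic} (base case $p_{\Plr}(1)=p$ plus chaining the growth inequality) and of \eqref{th:asymptotic} (weak law of large numbers) also coincide with the paper's.
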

\begin{proof}
First of all observe that a decision taken by plurality is correct if and only if a majority of voters has voted correctly \eqref{eq:simple}. That is, whenever a number $h \geq \frac{n + 1}{2}$ of voters votes correctly. There are 
$\sum_{\frac{n + 1}{2} = h}^{n}  \binom{n}{h}$ such majorities. For each of these the probability that precisely that majority of $h$ voters votes correctly is $p^h \cdot (1 - p)^{n - h}$.
We thus obtain that:
\begin{align}
p_{\Plr}(n) & = \sum_{\frac{n + 1}{2} = h}^{n}  \binom{n}{h} \cdot p^h \cdot (1 - p)^{n - h}
\end{align}
We proceed to prove the three claims of the theorem.

\fbox{\eqref{th:growth}} The claim is a consequence of the following recursive formula:
\begin{align}
p_\Plr(n + 2) & = p_\Plr(n) + \underbrace{(2p - 1) \cdot \binom{n}{\frac{n+1}{2}} \cdot (p \cdot (1-p))^{\frac{n+1}{2}}}_{\phi}.  \label{eq:comb}
\end{align}
Given the assumption $0.5 < p \leq 1$, it is easy to see that $\phi \geq 0$ (and that $\phi > 0$ when in addition $p \neq 1$), from which we can conclude that $p_\Plr(n + 2) > p_\Plr(n)$ as desired. We now proceed to derive \eqref{eq:comb} by a combinatorial argument. We are focusing on profiles of length $n+2$. It is now helpful to think of $p_\Plr(n + 2)$ as the probability that more than half of the first $n+2$ votes in the profile are correct, and of $p_\Plr(n)$ as the probability than more than half of the first $n$ votes are correct. Call the corresponding events $H_{n+2}$ and $H_{n}$. So: $p_\Plr(n + 2) = Pr(H_{n+2})$ and $p_\Plr(n) = Pr(H_{n})$
Consider now the two following events:
\begin{itemize}
\item $H_{n+2} \setminus H_n$ denotes the event that more than half of the first $n+2$ votes, but less than half of the first $n$ votes, are correct;
\item $H_{n+2} \cap H_n$ denotes the event that more than half of the first $n+2$ votes and more than half of the first $n$ votes are correct.
\end{itemize}
Using these events and the standard laws of probability we can rewrite $p_\Plr(n + 2)$ as follows:
\begin{align}
p_\Plr(n + 2) & = Pr(H_{n+2}) \nonumber \\
& = Pr(H_{n+2} \setminus H_n) + Pr(H_{n+2} \cap H_n) \nonumber \\
& = Pr(H_{n+2} \setminus H_n) + Pr(H_{n}) - Pr(H_n \setminus H_{n+2}) \nonumber \\
& = p_\Plr(n) + \underbrace{Pr(H_{n+2} \setminus H_n)}_\alpha - \underbrace{Pr(H_n \setminus H_{n+2})}_\beta \label{eq:precomb}
\end{align}
We proceed to establish $\alpha$ and $\beta$. \fbox{$\alpha$} Event $H_{n+2} \setminus H_n$ occurs if the first $n$ votes contain a narrow incorrect majority (the majority is decided by one voter) and the two subsequent votes are both correct. The first event happens with probability $\binom{n}{\frac{n+1}{2}} \cdot p^{\frac{n-1}{2}} \cdot (1 - p)^{\frac{n+1}{2}}$ and the second with probability $p^2$. We thus obtain:
\begin{align}
\alpha & = \binom{n}{\frac{n+1}{2}} \cdot p^{\frac{n-1}{2}} \cdot (1 - p)^{\frac{n+1}{2}} \cdot p^2  \nonumber \\
& = p \cdot \binom{n}{\frac{n+1}{2}} \cdot (p \cdot (1-p))^\frac{n+1}{2}. \label{eq:alpha}
\end{align} 
\fbox{$\beta$} In the same fashion, event $H_n \setminus H_{n+2})$ occurs if the first $n$ votes contain a correct majority but the first $n+2$ do not, implying the last two votes are incorrect. The first event happens with probability $\binom{n}{\frac{n+1}{2}} \cdot p^{\frac{n+1}{2}} \cdot (1 - p)^{\frac{n-1}{2}}$ and the second with probability $(1-p)^2$. We thus obtain:
\begin{align}
\beta & = \binom{n}{\frac{n+1}{2}} \cdot p^{\frac{n+1}{2}} \cdot (1 - p)^{\frac{n-1}{2}} \cdot (1- p)^2  \nonumber \\
& = (1 - p) \cdot \binom{n}{\frac{n+1}{2}} \cdot (p \cdot (1-p))^\frac{n+1}{2}. \label{eq:beta}
\end{align} 
We can now replace \eqref{eq:alpha} and \eqref{eq:beta} in \eqref{eq:precomb} and obtain \eqref{eq:comb} as desired:
\begin{align*}
p_\Plr(n + 2) & = p_\Plr(n) + p \cdot \binom{n}{\frac{n+1}{2}} \cdot (p \cdot (1-p))^\frac{n+1}{2} - (1 - p) \cdot \binom{n}{\frac{n+1}{2}} \cdot (p \cdot (1-p))^\frac{n+1}{2} \\
& = p_\Plr(n) + (2p - 1) \cdot \binom{n}{\frac{n+1}{2}} \cdot (p \cdot (1-p))^{\frac{n+1}{2}}.
\end{align*}

\fbox{\eqref{th:non-asymptotic}} The claim follows directly from  \eqref{th:growth} once observed that $p = p_{\Plr}(1)$.

\fbox{\eqref{th:asymptotic}} By the weak law of large numbers as $n$ grows to infinity the average number of correct votes converges to $p$. As $p > 0.5$ it follows that the probability of a correct majority goes to $1$.
\end{proof}

The theorem is probably the first mathematical analysis of a form of `wisdom of the crowds'. It states that groups become more accurate as they grow larger \eqref{th:growth}, that majorities are more accurate that single individuals \eqref{th:non-asymptotic}, and that infinite groups achieve perfect accuracy \eqref{th:asymptotic}.


\section{Chapter notes}

The bulk of this chapter is based on the introductions to voting theory provided in \cite[Ch. 2]{comsoc_handbook}, \cite[Ch. 1]{taylor05social} and \cite{endriss11logic}. The Pareto principle was first formulated in \cite{pareto19manuale}.
Theorem \ref{th:may} was first presented in \cite{may52set}, still a beautiful example of the application of the axiomatic method to voting. Theorem \ref{th:condorcet} appeared first in \cite{Condorcet1785} and was rediscovered several times during the last century (e.g., \cite{moore56reliable}). Even though Theorem \ref{th:condorcet} is a well-known result, not many detailed and accessible proofs exists. The proof presented is based on \cite{dietrich20jury}.


\section{Exercises}

\begin{exercise} \label{ex:axioms}
Provide a proof of Fact \ref{fact:axioms}
\end{exercise}

\begin{exercise} \label{ex:may}
Provide a proof of the \LtoR direction of Theorem \ref{th:may}.
\end{exercise}

\begin{exercise} \label{ex:may2}
Provide a proof of Theorem \ref{th:may2}.
\end{exercise}

\begin{exercise} \label{ex:may3}
Provide a proof of Theorem \ref{th:may3}.  \fbox{Hint} You want to show there exists a number $q$ that has two properties: $x$ is the only social choice if at least $q$ voters rank $x$ above $y$; and $q \in(\frac{n}{2}, n+1]$. Now collect in a set $Q$ all numbers $k$ that have the property that if $k$ voters rank $x$ over $y$ then $\set{x}$ is the social choice. $Q$ can be empty, or not. Reason from there.
\end{exercise}

\begin{exercise}[Odd rule]
Consider the following rule.
\begin{VR}\label{vr:odd}
Let $\A = \set{a, b}$. The odd rule is the SCF defined as follows. For every $\P$:
\[
{\tt Odd}(\P) = 
\left\{
\begin{array}{ll}
\set{a} & \mbox{if} \ \supp^{ab}_\P \ \mbox{is odd} \\
\set{b} & \mbox{otherwise}
\end{array}
\right.
\]
\end{VR}
Intuitively, the rule chooses $a$ whenever the size of the set of voters supporting it is odd, otherwise it chooses $b$.
Determine whether the odd rule is: anonymous, neutral, unanimous, positively responsive, non-imposed, resolute. Explain your answers.

\end{exercise}

\begin{exercise}[Iterated voting when $m = 2$]
Consider a series of choice contexts 
\[ 
\tuple{\N_1,\A_1}, \tuple{\N_2, \A_2}, \ldots, \tuple{\N_k,\A_k},
\]
with $k$ large (e.g., $k \geq 2^n$) and such that for every $1 \leq i < k$, $\N_i = \N_{i+1}$, $\A_i = \A_{i+1}$ and such that $m = 2$. For any series $\P_1, \P_2, \ldots, \P_k$ of profiles, a social choice function $f$ would thus determine the series of social choices $f(\P_1), f(\P_2), \ldots, f(\P_k)$. According to your intuition, would plurality be a `fair' way of determining such series of social choices?
Justify your answer and if you think the answer is negative devise a voting rule that would behave more `fairly' according to your insights. Again, explain your answer.
\end{exercise}

\begin{exercise}
Consider Theorem \ref{th:condorcet} and determine how its statement should be modified, for all its parts \eqref{th:growth}, \eqref{th:non-asymptotic} and \eqref{th:asymptotic}, once the assumption $0.5 < p \leq 1$ is changed to $p = 0.5$ (first variant), and to $0 \leq p < 0.5$ (second variant). State these two variants of the theorem.
\end{exercise}


\chapter{Choosing One Out of Many} \label{ch:many-one}


\section{Beyond plurality}

The classic contributions to social choice theory sparked from the dissatisfaction of how the plurality rule (Rule \ref{vr:plurality}) behaves in social choice contexts when $m > 2$.

\subsection{Plurality selects unpopular options}

\begin{example}[Roman Senate, 1 century b.C.] \label{ex:pliny}
`` \ldots the consul Africanus Dexter had been found slain, and it was uncertain whether it had died at his own hand or at those of his freedmen. When the matter came before the Roman Senate, Pliny wished to acquit them [alternative $a$], another senator moved that they should be banished to an eiland [alternative $b$]; and a third that they should be put to death [alternative $c$]'' \cite[p. 54]{ordeshoek03game}:
\[
\begin{array}{l | lll}
\# 102 & a & b & c \\
\# 101 & b & a & c \\
\# 100 & c & b & a
\end{array}
\]
The standard voting practice in the Roman Senate would have consisted of a staged procedure: first a vote on $a$ (are the freedmen innocent, and therefore to-be-aquitted, or are they guilty?), which would have been lost; second a vote on $b$ vs. $c$, which would have been won by $b$. Pliny, who was chairing the session, favoured $a$, and so demanded a vote by plurality. Plurality would select $a$ as unique social choice which, however, is hugely unpopular: it corresponds to about $\frac{1}{3}$ of the votes of all the senators; and there is one alternative, namely $b$, who is preferred to both $a$ and both $c$ by a strong majority of senators. 
\end{example}

\begin{example}[2002 US Presidential elections, stylized \cite{dasgupta04fairest}] \label{ex:gore}
The example assumes only four kind of voters and abstracts away from the electoral college procedure in use for the election of the US president. It assumes there are 100000 voters (rows indicate multiple of 1000).
\begin{center}
\begin{tabular}{l | llll}
$\# 2$ & Nader  & Gore & Bush & Buchanan \\
$\# 49$ & Gore & Bush & Nader & Buchanan \\
$\# 48$ & Bush & Buchanan & Gore & Nader \\
$\# 1$ & Buchanan & Bush & Gore & Nader
\end{tabular}
\end{center}
Again here plurality would elect Gore, who is receiving a minority of votes. At the same time Bush appears to occupy higher positions than Gore in the voters' rankings.
\end{example}

\subsection{More voting rules}

What seems to be the root of plurality's problems in convincingly aggregating preferences over more than two alternatives is that it disregards all information provided by voters, except for their top choices. Each of the rules discussed in this section puts that extra information at work, in different ways, to establish a `reasonable' social choice.

\subsubsection{Pairwise comparison rules}

One piece of information that appears to be relevant in the above examples is how often an alternative $x$ `beats' another alternative $y$ in a pairwise contest, that is, whether there exists a larger support for $x$ over $y$ than there is for $y$ over $x$. 
We can easily tabulate such information as a {\em pairwise comparison} $m \times m$ matrix $[ \supp^{xy}_\P]$ with $x, y \in \A$. 
Clearly, for $x \neq y$, $\supp^{xy}_\P = n- \supp^{yx}_\P$ and, for any $x \in \A$, $\supp^{xx} = 0$.\footnote{This method of representing relevant election information was known already to R. Lull \cite{llull75artifitium} in the 13th century.}
\begin{example}
The pairwise comparisons matrix for the election described in Example \ref{ex:gore} is: 
\[
 \begin{matrix}
		&  \mbox{Buchanan} & \mbox{Bush} & \mbox{Gore} & \mbox{Nader} \\
\mbox{Buchanan} &         0        &         1               &           49            &        49                 \\
\mbox{Bush}   &             99        &            0               &         49                &        98               \\
\mbox{Gore}   &          51           &           51              &            0             &      98                     \\
\mbox{Nader} &        51             &             2              &         2                &             0              \\
\end{matrix}
\]
\end{example}

Based on such information we can then identify which alternatives are considered better than which other alternatives in pairwise comparisons and thereby possibly identifying the `best' alternatives. 

\begin{definition}[Net preference]
Let $\P$ be a profile for context $\tuple{\N, \A}$. For any pair of distinct alternatives $xy$, the {\em net preference} for $x$ over $y$ in $\P$ is
\begin{align*}
\net_\P(xy) & = \supp^{xy}_\P - \supp^{yx}_\P.
\end{align*}
\end{definition}
Note that $\net_\P(xy) = \supp^{xy}_\P - \supp^{yx}_\P = 2 \cdot \supp^{xy}_\P-n$.

\begin{definition}[Pairwise majority tournament] \label{def:tournament}
The pairwise majority tournament (or {\em majority graph}) of $\P$ is the graph $\tuple{\A, \succ_\P^\net}$ where $\pp^\net_\P \subset A^2$ is a binary relation such that $x \succ^\net_\P y$ iff $\net_\P(xy) > 0$. The weighted pairwise majority tournament (or {\em weighted majority graph}) is the graph $\tuple{\A, \succ_\P^\net}$ where each edge $xy \in \succ^\net_\P$ is labeled by $\net_\P(xy)$.
When the profile is clear from the context we will omit reference to it and write simply $\net(xy)$ and $\succ^\net$. The {\em weak majority tournament}, allowing for ties, can be naturally defined as: $x \succeq^\net_\P y$ iff $\net_\P(xy) \geq \net_\P(yx)$.
\end{definition}

Given a majority tournament, there is therefore one natural way to identify a possible social choice: check whether there exists an alternative that beats any other alternative in the tournament.

\begin{definition}[Condorcet winner]
Let $\tuple{\A, \succ^\net}$ be the majority graph of $\P$. An alternative $x \in \A$ is the {\em Condorcet winner} of $\P$ if for all $y \in \A \setminus \set{x}$, $x \succ^\net_\P y$. It is a {\em weak Condorcet winner} of $\P$  if for all $y \in \A \setminus \set{x}$, $x \succeq^\net_\P y$.
\end{definition}
So a Condorcet winner is an alternative that beats every other alternative in a pairwise comparison---in graph-theoretical terms it is a `source' in the majority graph. As such, it can be considered the natural social choice, at least if we care about majorities.
Notice that relation $\succ^\net_\P $ is irreflexive and complete when $n$ is odd, for any profile $\P$. However, it is not transitive in general. It follows that a Condorcet winner may not exist.

\begin{example} \label{ex:tour}
The majority graph of the profile in Example \ref{ex:pliny} is: $b \succ^\net a \succ^\net c$. Alternative $b$ is therefore the Condorcet winner of the profile. The majority graph of the profile in Example \ref{ex:gore} is:
\[
\mbox{Gore} \succ^\net \mbox{Bush} \succ^\net \mbox{Nader} \succ^\net \mbox{Buchanan}
\]
\end{example}

\begin{example}[Condorcet paradox \cite{Condorcet1785}] \label{ex:condorcet}
Three instantiations of the paradox for $\tuple{\set{1,2,3}, \set{a,b,c}}$ (left), $\tuple{\set{1, \ldots, 303}, \set{a, b, c}}$ (center), $\tuple{\set{1,2,3}, \set{a,b,c, d}}$ (right):
\begin{align*}
\P_1 = 
\begin{array}{l | lll}
1 & a & b & c \\
2 & b & c & a \\
3 & c & a & b
\end{array}
& &
\P_2 = 
\begin{array}{l | lll}
102 \times & a & b & c \\
101 \times & b & c & a \\
100 \times & c & a & b
\end{array}
& &
\P_3 = 
\begin{array}{l | llll}
1 & a & b & c & d \\
2 & b & c & a & d\\
3 & c & a & b & d
\end{array}
\end{align*}
The pairwise majority tournament all the above profiles contains a $3$-cycle $a \succ^\net b \succ^\net c \succ^\net a $. These are also called {\em majority cycles} and can arise whenever $m > 2$.
\end{example}
Finally it is worth knowing that any tournament (i.e., irreflexive, complete binary relation) can be generated via Definition \ref{def:tournament} by some profile, under the assumption that $n$ is odd (McGarvey's theorem \cite{mcgarvey53theorem}).

\medskip

We introduce now two rules based on (unweighted) majority graphs.

\begin{VR}[Condorcet] \label{vr:condorcet}
The Condorcet rule is the SCF defined as follows. For every profile $\P \in \LO(\A)^n$:
\begin{align*}
\Con(\P) & = 
\left\{
\begin{array}{ll}
\set{x} & \mbox{if} \ \forall y \in \A \setminus \set{x}, x \succ^\net_\P y \\
\A & \mbox{otherwise} 
\end{array}
\right.
\end{align*}
\end{VR}
So the Condorcet rule selects a Condorcet winner when it exists and let otherwise all alternatives tie.

\begin{VR}[Copeland \cite{llull75artifitium,copeland75reasonable}] \label{vr:copeland}
The Copeland rule is the SCF defined as follows. For any profile $\P \in \LO(\A)^n$:
\begin{align*}
\Cop(\P) & = \argmax_{x \in \A} C_\P(x)
\end{align*}
where 
\[
C_\P(x) =  \size{\set{y \in \A \mid x \succ^\net_\P y }} - \size{\set{y \in \A \mid y \succ^\net_\P x }}. 
\]
This value is called the {\em Copeland score} of $x$ (in $\P$).
\end{VR}
So the Copeland score consists of the number of times an alternative beats other alternatives, minus the number of times it is beaten by other alternatives (in the majority graph of the corresponding profile). The social choice is then the set of alternatives with the highest Copeland score. Notice that the rule disregards the margins whereby an alternative beats, or is beaten by, other alternatives: the Copeland score can be computed by just looking at the majority graph of a profile, disregarding the net preference information. In other words, it disregards the relative positions of alternatives in the voters' rankings. This type of information becomes available in the next class of rules we are going to discuss.

\begin{example}
Consider again Example \ref{ex:condorcet}. We have that: $\Con(\P_1) = \Cop(\P_2) = \set{a,b,c}$ but $\Con(\P_3) = \set{a, b, c, d}$ while $\Cop(\P_3) = \set{a, b, c}$.
\end{example}

\begin{remark}
Another voting rule based on (unweighted) majority graphs is the Slater rule \cite{slater61inconsistencies}: given the majority graph $\tuple{A, \succ^\net}$ find the strict linear order $\succ$ that minimizes the number of edges one needs to switch in  $\succ^\net$ to get $\succ$. The social choice is the singleton consisting of the top alternative in $\succ$.
\end{remark}

\begin{remark}
There are voting rules that need the information of the weighted majority graph to be computed. An example is the Ranked-pairs rule: given the weighted majority graph $\tuple{A, \succ^\net, \net}$ order the edges of relation $\succ^\net$ by the size of their weight (largest to smallest), then construct a linear order over $A$ by starting from the edge with the largest weight and proceeding to add edges following their weights unless a cycle is formed. The social choice is the top alternative in the constructed order. 
\end{remark}

\subsubsection{Positional scoring rules}

Positional scoring rules are a class of SCFs that determine the social choice by assigning points to alternatives depending on how often the alternative occupies a given position in the voters' ballots. The social choice is the alternative collecting the most points.

\begin{definition}[Positional scoring rules] \label{def:scoring}
A positional scoring rule $f$ is an SCF, for $\tuple{\N, \A}$, such that there exists a vector of weights ({\em score vector}) $\w = \tuple{w_1, w_2, \ldots, w_m}$ and, for all $\P \in \LO(\A)^n$ 
\[
f(\P) = \argmax_{x \in \A} \sum_{i \in \N} w_{i(x)}
\]
where $i(x)$ denotes $x$'s position in the linear order $\p_i$. 
\end{definition}
Observe that plurality (Rule \ref{vr:plurality}) can also be considered a scoring rule, with score vector $\tuple{1, {\underbrace{0, \ldots, 0}_{m \ \mbox{times}}}}$. 

The most famous, and most widely deployed, positional scoring rule is the Borda rule.\footnote{The method carries the name of Jean-Charles de Borda, French mathematician and engineer \cite{borda81memoires}. The method already appeared in the writings of Nicholas of Cusa in the 15th century.}
\begin{VR}[Borda] \label{vr:borda}
The Borda rule $\Br$ is the positional scoring rule with score vector 
\[
\w = \tuple{m-1, m-2, \ldots, 0}.
\] 
For every alternative $x$ and profile $\P$, $\sum_{i \in \N} w_{i(x)}$ is called the {\em Borda score} of $x$ (in $\P$).
\end{VR}
In the Borda method an alternative $x$ therefore collects, for each voter, as many points as the alternatives that that voters places under $x$ in her ballot. Such points are then summed up across voters to obtain the Borda score of the alternative in the given profile.

\begin{example}[Borda vs. Condorcet]
Consider again the profile in Example \ref{ex:gore}. We have seen that the Condorcet winner in that profile is Gore (Example \ref{ex:tour}). However, the Borda rule selects Bush:
\begin{align*}
\sum_{i \in \N} w_{i(\mbox{Bush})} = 3 \cdot 48 + 2 \cdot 50 + 1 \cdot 2 & & \sum_{i \in \N} w_{i(\mbox{Gore})} = 3 \cdot 49 + 2 \cdot 2 + 49 \\
\sum_{i \in \N} w_{i(\mbox{Nader})} = 3 \cdot 2 + 1 \cdot 49 & & \sum_{i \in \N} w_{i(\mbox{Buchanan})} =  2 \cdot 48
\end{align*}
where the relevant score vector is $\tuple{3, 2, 1, 0}$.
\end{example}
So the Borda rule may fail to select a Condorcet winner when it exists. This can be shown to generalize to all positional scoring rules.

\medskip

Other examples of positional scoring rules follow:

\begin{VR}[Veto] \label{vr:veto}
The veto rule $\Vr$ is the positional scoring rule with score vector $\w = \tuple{\underbrace{1, \ldots, 1}_{m-1 \ \mbox{times}}, 0}$. 
\end{VR}

\begin{VR}[$k$-Approval] \label{vr:approval}
The $k$-Approval rule $\Ar_k$, with $1 \leq k \leq m$, is the positional scoring rule with score vector $\w = \tuple{\underbrace{1, \ldots, 1}_{k \ \mbox{times}}, \underbrace{0, \ldots, 0}_{m - k \ \mbox{times}}}$. 
\end{VR}

So plurality is the $1$-Approval rule, and veto is the $m-1$-Approval rule.

An area where positional scoring rules are widely applied are sport or music competitions, such as the F1 championship or the Eurovision Song Contest competition. For example, the winner of the F1 championship is determined via a positional scoring rule based on the ranking of each pilot in a fixed number of races. Notice that each race here acts as a voter: it determines a linear order over the pilots. Over the years various score vectors have been used for the computation of winners, for example: in the 1961-1990 seasons the vector $\tuple{9,6,4,3,2,1, 0, \ldots, 0}$ was used; in the 1991-2002 seasons, the vector $\tuple{10,6,4,3,2,1, 0, \ldots, 0}$; and in the 2003-2208 seasons, the vector $\tuple{10, 8, 6, 5 , 4, 3 , 2, 1, 0, \ldots, 0}$. In song competitions, like the Eurovision Song Contest, juries (and the public) act as voters. The score vector of the Eurovision Song Context is currently $\tuple{12,10,8, 7, 6, 5, 4,3,2,1, 0, \ldots, 0}$.

\subsubsection{Multiround rules}

This third class of rules is based on the idea that ballots can be revised in an iterated manner by removing, at each round in the process, the `least popular' alternatives from the ballots, until one alternative achieves majority.

Before introducing these rules we need some auxiliary notation. Let $\P$ be a profile and $X \subseteq A$ a set of alternatives. Then the profile restricted to $X$ is $\P|_{X} = \tuple{\p_1|_X, \ldots,  \p_n|_X}$, where each $\p_i|_X$ is the restriction of $\p_i$ to the alternatives in $X$. Then let $\sigma_\P: 2^{\A} \to 2^{\A}$ be a set-transformer defined as follows:
\[
\sigma_\P(X) = X \setminus \set{x \in \A \mid x \ \mbox{has lowest plurality score in} \ \P|_X}.
\]
That is, given a set of alternatives $X$, $\sigma_\P(X)$ removes from $X$ the alternatives that are ranked highest by the smallest number of voters. For $k \in \mathbb{N}$ we denote by $\sigma^{k}_\P$ the $k^{\mathit{th}}$-fold iteration of $\sigma_\P$.

\begin{VR}[Plurality with run-off] \label{vr:ploff}
The plurality with run-off rule is the SCF defined as follows, for any profile $\P$:
\begin{align*}
\PlrO(\P) & =
\left\{
\begin{array}{ll}
\set{x} & \mbox{if} \ \size{\set{i \in \N \mid \max_{\p_i}(\A) = x }} \geq \frac{n+1}{2}  \\
\Plr(\P|_\mathit{Top2}) & \mbox{otherwise}
\end{array}
\right.
\end{align*}
where $\mathit{Top2} \subseteq A$ is the set of the $2$ alternatives in $\A$ with the two highest plurality scores in $\P$.
\end{VR}

\begin{example}[Non-monotonicity]
Consider the two profiles
\begin{align*}
\P = 
\begin{array}{l | lll}
\# 8 & a & b & c \\
\# 10 & c & a & b \\
\# 7 & b & c & a
\end{array} 
& & 
\P' = 
\begin{array}{l | lll}
\# 6 & a & b & c \\
\# 2 & c & a & b \\
\# 10 & c & a & b \\
\# 7 & b & c & a
\end{array}
\end{align*}
We have that $\PlrO(\P) = \set{c}$ ($b$ is eliminated in the first round, and $c$ wins in the second round against $a$ by a net preference of $17-8$) but $\PlrO(\P') = \set{b}$ ($a$ is eliminated in the first round, and $b$ wins in the second round against $b$ by a net preference of $13-12$) even though in $\P'$ $2$ voters have ranked $c$ higher than in $\P$.

\end{example}

\begin{VR}[Single transferable vote] \label{vr:STV}
The single transferable vote rule (STV) is the SCF defined as follows, for any profile $\P$:
\begin{align*}
\STV(\P) & = \sigma^{k^\star - 1}_\P(A)
\end{align*}
where $k^*$ is the step at which all alternatives are removed: $\sigma^{k^\star}_\P(A) = \emptyset$. 
\end{VR}
Intuitively, STV iteratively removes the alternatives with lowest plurality score until all alternatives are removed. So STV is still based on the plurality score, but it extracts as much information as possible from the ballot profile, thereby reducing the amount of votes `wasted'. By means of comparison, in a standard plurality election normally more than half the votes would be `wasted' in the sense of supporting losers of the election. STV obviates to this problem by iteratively removing plurality losers and recomputing the plurality score based on the preferential ballots. So the alternatives that are removed last constitute the social choice, which therefore consists of either one alternative with a majority support, or by alternatives with equal plurality score.\footnote{Several variants to this definition exist, mostly focusing on what to do when ties exist among the alternatives with lowest scores.}
The STV rule goes also under other names, e.g., Hare rule \cite{hare59treatise},\footnote{In 1862 John Stuart Mill referred to it as ``among the greates improvements yet made in the theory and practice of government''. It is popular among electoral reform groups and it is widely deployed to elect representatives (among others in Ireland, Northern Ireland, Australia).} instant run-off voting, ranked choice voting.


\subsection{More axioms for SCFs}

\begin{definition} \label{def:more}
Let $\tuple{\N, \A}$ be given. An SCF $f$ is:
\begin{description}

\item[Non-dictatorial] iff there exists no $i \in \N$ s.t. for all $\P \in \LO(\A)^n$, $f(\P) = \Dr_i(\P)$ (cf. Rule \ref{vr:dictatorship}). Otherwise $f$ is said to be dictatorial with $i$ being {\em the dictator}.

{\em Intuitively}, there exists no voter that can unilaterally determine the social choice. 

\item[Condorcet-consistency] iff for all $\P \in \LO(\A)^n$, if $x$ is the Condorcet winner of $\P$, then $f(\P) = \set{x}$.

{\em Intuitively}, the rule agrees with the Condorcet rule on all profiles that give rise to transitive majority graphs.

\item[Independent] iff  for any two profiles $\P, \P' \in \LO(\A)^n$ and two alternatives $x, y \in \A$, if $\N^{xy}_\P = \N^{xy}_{\P'}$ and $x \in f(\P)$ but $y \not\in f(\P)$ then $y \not\in f(\P')$.

{\em Intuitively}, whether one alternative belongs to the social choice while the other does not depends only on how voters rank the two alternatives, and not on how they rank other alternatives.

\item[Liberal] iff for all $i \in \N$ there exists $x \neq y \in \A$ s.t. for every profile $\P \in \LO(\A)$, $i \in \N^{xy}_\P$ implies $y \not\in f(\P)$ and $i \in \N^{yx}_\P$ implies $x \not\in f(\P)$.  Each such agent $i$ is said to be {\em two-way decisive} on $x$ and $y$.

{\em Intuitively}, each agent should be able to unilaterally veto, for at least two alternatives, whether the alternative is part of the social choice.

\end{description}
\end{definition}

Axioms can help understand the different behavior of different voting rules by looking at what rules satisfy what axioms. See Exercise \ref{ex:rules_axioms}.

\subsection{Impossibility results for SCFs: examples}

As Table \ref{tab:rules_axioms} makes evident, the rules we have considered so far satisfy some of the desirable properties---the axioms---we came up with, but none satisfies all. The question then arises of whether such `ideal' social choice functions could actually be found. The upshot of the next section is that this is not the case, and the key means to show that is through so-called {\em impossibility} results: theorems stating the inexistence of functions with a given set of properties. Before moving to the next section and discuss one of the main impossibility results, we illustrate here `style' of such results with two simple(r) examples.

\begin{fact} \label{fact:warmup}
Let $\tuple{\N, \A}$ be given such that $n = m = 2$. There exists no SCF that is anonymous, neutral and resolute.
\end{fact}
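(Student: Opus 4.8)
The plan is to exploit the single ``disagreement'' profile in which the two voters rank the alternatives oppositely, which turns out to be fixed (up to the relevant relabelings) by both the swap of voters and the swap of alternatives. Since $m=2$, write $\A = \set{x,y}$, so that the only two ballots are $xy$ and $yx$; and since $f$ is resolute, $f$ always returns either $\set{x}$ or $\set{y}$. I would consider the profile $\P = \tuple{\p_1, \p_2}$ with $\p_1 = xy$ and $\p_2 = yx$, and argue towards a contradiction that resoluteness, anonymity and neutrality cannot jointly hold on $\P$.

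First I would apply anonymity with the transposition $\pi$ that swaps voters $1$ and $2$. This produces the profile $\P' = \tuple{\p_2, \p_1} = \tuple{yx, xy}$, and anonymity yields $f(\P) = f(\P')$. Next I would apply neutrality with the transposition $\rho$ that swaps $x$ and $y$. Relabeling each ballot of $\P$ turns $\p_1 = xy$ into $yx$ and $\p_2 = yx$ into $xy$, so that $\tuple{\rho(\p_1), \rho(\p_2)} = \tuple{yx, xy} = \P'$; neutrality then gives $\rho(f(\P)) = f(\P')$. Combining the two identities yields the single equation $f(\P) = \rho(f(\P))$.

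Finally I would invoke resoluteness: $f(\P)$ is a singleton, either $\set{x}$ or $\set{y}$, and in either case $\rho(f(\P))$ is the \emph{other} singleton, so $f(\P) \neq \rho(f(\P))$, a contradiction. The only step that requires a little care is verifying that the voter-permuted profile and the alternative-relabeled profile coincide, i.e.\ that this one disagreement profile really is the common fixed point of the two symmetries; once that is checked, anonymity and neutrality chain into $f(\P) = \rho(f(\P))$ and the clash with resoluteness is immediate. There is no serious obstacle here beyond choosing the right witnessing profile: the whole force of the impossibility is that, with $n=m=2$, the symmetric ``tie'' profile admits no consistent single winner.
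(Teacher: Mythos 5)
Your proof is correct: the disagreement profile $\tuple{xy,yx}$ is carried to the same profile $\tuple{yx,xy}$ both by the voter transposition (so anonymity gives $f(\P)=f(\P')$) and by the alternative transposition $\rho$ (so neutrality gives $\rho(f(\P))=f(\P')$), whence $f(\P)=\rho(f(\P))$, which no singleton can satisfy. The paper states this fact as a warm-up without supplying a proof, and your argument is precisely the canonical one it is meant to have; nothing is missing.
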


\begin{theorem}
\label{th:sen}
Let $\tuple{\N, \A}$ be given such that $m > 2$ and $n \geq 2$. There exists no SCF that is Pareto (recall Definition \ref{def:resp_axioms}) and liberal.
\end{theorem}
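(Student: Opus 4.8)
The plan is to prove this version of Sen's impossibility theorem (the "liberal paradox") by constructing an explicit profile on which the liberal axiom forces certain alternatives out of the social choice while the Pareto axiom also forces alternatives out, until the SCF is left with an empty social choice—contradicting the fact that an SCF must always return a non-empty set. Since liberalism guarantees at least two distinct agents each being two-way decisive on some pair of alternatives, I would start by extracting that structure explicitly. Let $i_1, i_2 \in \N$ be two distinct voters (using $n \geq 2$), and let $i_1$ be two-way decisive on some pair $x_1 \neq y_1$ and $i_2$ be two-way decisive on some pair $x_2 \neq y_2$. The first case analysis should split on whether these two pairs share alternatives or are disjoint; the cleanest case uses $m > 2$ to guarantee enough alternatives to work with.

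The heart of the argument is to design a single profile $\P$ that triggers a preference cycle among a handful of alternatives. I would aim for a configuration of four alternatives (or three, in the overlapping case) arranged so that: agent $i_1$'s decisiveness on $(x_1, y_1)$ removes one alternative from $f(\P)$, agent $i_2$'s decisiveness on $(x_2, y_2)$ removes another, and then unanimous (Pareto) preferences close the remaining links of the cycle, removing the rest. Concretely, in the disjoint case with four distinct alternatives $x_1, y_1, x_2, y_2$, I would set every voter (including $i_1, i_2$) to rank so that $i_1$ strictly prefers $x_1$ to $y_1$ and $i_2$ strictly prefers $x_2$ to $y_2$, while arranging the cross-pairs $y_1$ versus $x_2$ and $y_2$ versus $x_1$ to be unanimous across all of $\N$. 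Liberalism then forces $y_1 \notin f(\P)$ and $y_2 \notin f(\P)$; Pareto forces the remaining two alternatives out via the unanimous cross-comparisons; hence $f(\P) = \emptyset$, contradicting \eqref{eq:SCF}.

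I would treat the overlapping case separately: if the two decisive pairs share an alternative—say they involve only three distinct alternatives $x, y, z$—the same idea produces a three-cycle. Here I would arrange the profile so that $i_1$'s decisiveness kills one alternative, $i_2$'s kills another, and a single unanimous comparison kills the third, again emptying $f(\P)$. The subtle bookkeeping is making sure the constructed orders are genuine linear orders (transitive and total) that simultaneously realize all the required strict comparisons; since the forced comparisons form a cycle among the relevant alternatives but each \emph{individual} voter's ballot need only be acyclic, I must check that each voter's required comparisons can be completed to a consistent linear order over all of $\A$. This is where $m > 2$ and the freedom to place the remaining $m - 3$ or $m - 4$ alternatives arbitrarily becomes essential.

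The main obstacle I anticipate is precisely this constructibility check: the two decisive pairs could overlap in several ways (sharing one alternative in the "same-direction" or "opposite-direction" sense), and for each configuration I must exhibit linear orders for $i_1$, $i_2$, and the remaining voters that jointly force the cycle. The cleanest route is probably to fix the cross-pair comparisons as unanimous and verify that no single voter is asked to hold a cyclic preference—each voter holds at most the one or two strict comparisons dictated by their own decisiveness plus the unanimous ones, which never close a cycle within one ballot. Once the profile is in hand, the contradiction is immediate from \eqref{eq:SCF}, so the real work is entirely in the case analysis and the explicit profile construction rather than in any delicate calculation.
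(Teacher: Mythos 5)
Your strategy is exactly the one the paper intends: the paper's own ``proof'' of Theorem \ref{th:sen} simply defers to Exercise \ref{ex:sen}, whose hint is precisely to argue towards a contradiction by constructing a profile on which liberalism and Pareto jointly force an empty social choice, which is impossible by \eqref{eq:SCF}. Your case split (disjoint decisive pairs giving a four-element cycle, overlapping pairs giving a three-element configuration, two agents decisive on the same pair needing no cycle at all) and your check that each individual ballot can be completed to a genuine linear order are the right skeleton, and the constructions you sketch do go through.

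One concrete gap: you say the remaining $m-3$ or $m-4$ alternatives can be placed ``arbitrarily.'' That is not enough to conclude $f(\P) = \emptyset$. Liberalism and Pareto as you have deployed them only exclude the three or four alternatives involved in the cycle; if $m$ exceeds that number, $f$ could legitimately select one of the leftover alternatives and no contradiction arises. You must additionally place every alternative outside the cycle below some fixed cycle alternative in \emph{every} voter's ballot, so that each such alternative is Pareto-dominated and hence excluded by the Pareto axiom. With that amendment --- which is easy to arrange and does not disturb the acyclicity of any individual ballot --- the contradiction $f(\P) = \emptyset$ is genuine and the proof is complete.
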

\begin{proof}
See Exercise \ref{ex:sen}.
\end{proof}


\section{There is no obvious social choice function when $m > 2$}

\subsection{Social preference functions}

The rules discussed in the previous section not only identify a social choice for each profile, but actually determine a weak `social preference' over the set of alternatives. That is, they can be viewed as {\em social preference functions}\footnote{A more common terminology is `social welfare function'.} mapping profiles of individual linear orders to total preorders (that is, linear orders admitting ties):
\begin{align}
F: \LO(\A)^n \to \TP(\A). \label{eq:SPF} 
\end{align}
where $\TP(\A)$ denotes the set of all total preorders\footnote{That is, reflexive, transitive and total binary relations.} on the set of alternatives. So, for any profile $\P$, we refer to $F(\P) = \p^F_\P$ as the social preference determined by $F$ in profile $\P$, and to $\pp^F_\P$ as its asymmetric (strict) part.

\smallskip

It is worth stressing that SCFs and SPFs are closely related:
\begin{remark}[Correspondence between SCFs and SPFs]
Let $\tuple{\N, \A}$ be given. 
\begin{enumerate}[a)]
\item For every SPF $F$,  $\max_{F( \cdot )}(\A)$ is the SCF that selects, for any profile $\P \in \LO(\A)^n$, the top alternatives in  $\p^F_\P$.
\item For every SCF $f$, the linear order 
\[
f(\P) \succ f(\P|_{\A \setminus f(\A)}) \succ f(\P_{f(\P|_{\A \setminus f(\A)})}) \succ \ldots.
\]
defines a social preference function $\p^F_\P$ where the tied top-ranked alternatives are the alternatives in $f(\P)$, the tied second-ranked alternatives are the alternatives in $f(\P|_{\A \setminus f(\A)})$ and so on. Observe that the length of $\p^F_\P$ is clearly at most $m-1$ when $f$ is resolute.
\end{enumerate}
\end{remark}
Intuitively, the social choice corresponds to the best alternatives in such a social preference. Vice versa, the social preference can be induced by iterated social choices on sets of alternatives from which the winners of the previous social choice have been removed. 

\medskip

It is worth observing right away that May's and Condorcet's theorems (Theorems \ref{th:may} and \ref{th:condorcet}) for the setting $m = 2$ can be straigthorwardly formulated for the setting of SPFs.
In general, the axioms we studied so far can all be adapted to the SPF setting. Here we present three, that will be used for the impossibility result discussed later in this section.

\begin{definition}
Let $\tuple{\N, \A}$ be given. An SPF $F$ is:
\begin{description}

\item[Non-dictatorial] iff there exists no $i \in \N$ s.t. $\forall \P \in \LO(\A)^n$, $F(\P) =  \p_i$. Otherwise $f$ is said to be dictatorial with $i$ being {\em the dictator}.

{\em Intuitively}, there exists no voter whose ballot is always identical to the social preference. 

\item[Pareto] iff for any profile $\P \in \LO(\A)^n$ whenever $\N^{xy}_\P = \N$, $x \pp^F_\P y$.

{\em Intuitively}, if everybody agrees on the relative ranking of two alternatives that ranking should be part of the social preference.

\item[Independent of irrelevant alternatives (IIA)] iff  for any two profiles $\P, \P' \in \LO(\A)$ and two alternatives $x, y \in \A$, if $\N^{xy}_\P = \N^{xy}_{\P'}$ then $x \p^F_\P y$ iff $x \p^F_{\P'} y$

{\em Intuitively}, if two profiles are identical with respect to the relative ranking of two alternatives, then the relative ranking of the two alternatives in the the social preferences for those two profiles is the same.

\end{description}
\end{definition}

The following proposition shows how properties such as the above ones may interact in ways that may be unexpected at first. The proposition will also be of importance in the discussion of Arrow's theorem in the next section.
\begin{proposition} \label{prop:linear}
Let $F$ be a SPF for a $\tuple{\N, \A}$ s.t. $m > 2$. If $F$ satisfies Pareto and IIA, then $F$ always outputs a linear order.
\end{proposition}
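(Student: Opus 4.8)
The plan is to prove the statement by ruling out ties. Since $F$ is an SPF, $F(\P)$ is always a total preorder, and a linear order is precisely an \emph{antisymmetric} total preorder; so it suffices to show that $F(\P)$ never declares two distinct alternatives indifferent. I would therefore argue by contradiction: suppose there is a profile $\P$ and distinct $x,y$ with both $x \p^F_\P y$ and $y \p^F_\P x$ (a tie). Put $S = \N^{xy}_\P$. Pareto already disposes of the extreme cases: if $S = \N$ then Pareto forces $x \pp^F_\P y$, and if $S = \emptyset$ then $\N^{yx}_\P = \N$ forces $y \pp^F_\P x$; either way there is no tie. Hence in the remaining case $\emptyset \neq S \subsetneq \N$, i.e.\ the voters genuinely disagree on the $x$-versus-$y$ ranking. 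Because $m > 2$, I can fix a third alternative $z$; this is the only place the hypothesis $m>2$ is used.

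The tempting move is to slot $z$ strictly between $x$ and $y$ in every ballot so as to ``split'' the tie and then invoke Pareto on the pairs $\{x,z\}$ and $\{z,y\}$. This is exactly the main obstacle: since some voters rank $x$ above $y$ and others the reverse, $z$ cannot lie between them unanimously, so Pareto never fires on both pairs at once. The resolution is to place $z$ \emph{adjacent} to $x$ rather than between $x$ and $y$. Concretely I build two auxiliary profiles from $\P$ by relocating only $z$ (keeping the relative order of all other alternatives, in particular of $x$ and $y$, untouched): let $\P^{(1)}$ put $z$ immediately below $x$ in every ballot, and let $\P^{(3)}$ put $z$ immediately above $x$ in every ballot. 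A quick check of the pairwise supports gives, writing $\bar S = \N \setminus S$: in both profiles $\N^{xy} = S$, and in both profiles $\N^{yz} = \bar S$; moreover $\P^{(1)}$ has $\N^{xz} = \N$ while $\P^{(3)}$ has $\N^{zx} = \N$. The equality $\N^{xy} = S = \N^{xy}_\P$ is what lets IIA transport the tie, so $x$ and $y$ remain socially tied in each of $\P^{(1)}$ and $\P^{(3)}$.

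Now I extract opposite strict verdicts on the pair $\{y,z\}$. In $\P^{(1)}$, Pareto gives $x \pp^F_{\P^{(1)}} z$; combining this with the tie and transitivity of the preorder (a tied element is interchangeable: if $x\pp^F z$ held but $y\pp^F z$ failed, totality plus $y\p^F x$ would yield $z \p^F y \p^F x$, contradicting $x \pp^F z$) I conclude $y \pp^F_{\P^{(1)}} z$. In $\P^{(3)}$, Pareto gives $z \pp^F_{\P^{(3)}} x$, and the same substitution yields $z \pp^F_{\P^{(3)}} y$. But $\N^{yz}_{\P^{(1)}} = \bar S = \N^{yz}_{\P^{(3)}}$, so IIA forces the social relation on $\{y,z\}$ to coincide across the two profiles; yet we have derived $y \pp^F z$ in one and $z \pp^F y$ in the other, which is impossible.

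This contradiction shows no tie between distinct alternatives can occur, so $F(\P)$ is antisymmetric and hence a linear order for every $\P$. I expect the delicate points to be two small verifications I would make explicit before running the argument: first, the claimed identity $\N^{yz} = \bar S$ in \emph{both} auxiliary profiles (the content that makes adjacency, rather than betweenness, the right construction), and second, the interchangeability step, which is where the transitivity of the social preorder—as opposed to mere Pareto and IIA—does the essential work.
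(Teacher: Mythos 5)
Your proof is correct and follows essentially the same route as the paper's: build two auxiliary profiles by relocating a third alternative $z$ while preserving all other pairwise supports, transport the tie via IIA, obtain one strict comparison with $z$ from Pareto and the other from transitivity through the tied pair, and then derive a contradiction with IIA on the remaining pair. The only (cosmetic) difference is that you insert $z$ adjacent to $x$ uniformly so that Pareto fires on $\{x,z\}$ and the IIA clash lands on $\{y,z\}$, whereas the paper places $z$ differently for the two voter groups so that Pareto fires on $\{z,y\}$ and the clash lands on $\{x,z\}$.
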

\begin{proof}
Assume towards a contradiction that there exists a profile $\P$ in which $x$ and $y$ are tied in $F(\P)$---$x \sim^F_\P y$. Now consider a new profile $\P'$ such that $\N^{xy}_\P = \N^{xz}_{\P'} \cap \N^{zy}_{\P'}$ and $\N^{yx}_\P = \N^{zy}_{\P'} \cap \N^{zx}_{\P'}$. That is, in $\P'$ the voters that were ranking $x$ over $y$ now place an alternative $z$ between $x$ and $y$, while the voters that were ranking $y$ over $x$ now place an alternative $z$ above both $y$ and $x$. By IIA we get $x$ and $y$ tied in $F(\P')$---$x \sim^F_{\P'} y$---and by Pareto we get $z \pp^F_{\P'} y$ and therefore $z \pp^F_{\P'} x$. 

Now consider a different profile $\P''$ where we choose a different placement for $z$: $\N^{xy}_\P = \N^{xz}_{\P''} \cap \N^{yz}_{\P''}$ and $\N^{yx}_\P = \N^{yz}_{\P''} \cap \N^{zx}_{\P''}$. That is, in $\P''$ the voters that were ranking $x$ over $y$ now place the alternative $z$ below both $x$ and $y$, while the voters that were ranking $y$ over $x$ now place an alternative $z$ between the two. By IIA we get again $x \sim^F_{\P''} y$ and by Pareto we get $y \pp^F_{\P''} z$ and therefore $x \pp^F_{\P''} z$. However $\N^{xz}_{\P'} = \N^{xz}_{\P''}$, but $z \pp^F_{\P'} x$ and $x \pp^F_{\P''} z$, which violates IIA. Contradiction.
 \end{proof}

The lemma tells us that SPFs that are Pareto and IIA, when there are at least three alternatives, map profiles of linear orders to linear orders, that is, they do not allow for ties in the social preference.


\subsection{Arrow's theorem}

Arrow's theorem establishes that it is impossible to aggregate the preferences of a finite set of voters while respecting Pareto, independence of irrelevant alternatives and non-dictatorship.  The three axioms are inconsistent.\footnote{The enormous impact that this theorem had in social choice, economics and political theory in general was well summarized by Paul Samuelson (like Arrow, a winner of the Nobel Memorial Prize in Economic Science):``Arrow's devastating discovery is to mathematical politics what Kurt G\"odel's 1931 impossibility-of-proving-consistency theorem is to mathematical logic''.}
\begin{theorem}[Arrow's theorem \cite{arrow50difficulty,arrow63social}] \label{th:arrow}
Let $F$ be a SPF for a $\tuple{\N, \A}$ s.t. $m > 2$. $F$ satisfies Pareto and IIA if and only if it is a dictatorship.
\end{theorem}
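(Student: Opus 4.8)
The plan is to prove the nontrivial direction: any SPF $F$ satisfying Pareto and IIA must be a dictatorship (the converse is an easy check, since a dictatorship trivially respects Pareto and IIA). I would follow the now-standard \emph{decisive coalition} argument. By Proposition \ref{prop:linear} we already know that $F$ outputs a strict linear order on every profile, so there are never ties to worry about; this is the crucial simplification that lets the combinatorial argument go through cleanly. I call a coalition $G \subseteq \N$ \emph{decisive} for the ordered pair $(x,y)$ if, whenever every voter in $G$ ranks $x$ above $y$, the social preference has $x \pp^F_\P y$ regardless of how everyone else votes; and \emph{decisive} (simpliciter) if it is decisive for every ordered pair. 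Pareto says exactly that the grand coalition $\N$ is decisive. The goal is to show some \emph{singleton} is decisive --- that voter is then the dictator.

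The heart of the proof is the \emph{Field Expansion} (or Contagion) Lemma: if a coalition $G$ is decisive for even a single pair $(x,y)$, then $G$ is decisive for every pair. First I would establish a local-to-global step: suppose $G$ forces $x$ over $y$; introduce a third alternative $z$ (available since $m > 2$) and construct a profile in which voters in $G$ rank $x \pp z \pp y$ while everyone outside $G$ ranks $z$ above both $x$ and $y$ but is otherwise unconstrained. By the decisiveness of $G$ on $(x,y)$ we get $x \pp^F y$, and by Pareto we get $z \pp^F y$ is \emph{not} forced but rather we use unanimity on $y,z$ appropriately; transitivity of the social order then yields $x \pp^F z$. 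Since the only thing pinned down about $x$ versus $z$ was how $G$ ranked them, IIA lets us conclude $G$ is decisive for $(x,z)$. Cycling this construction through the alternatives (and swapping roles of the pairs) propagates decisiveness from one pair to all pairs. This lemma is the step I expect to be the main obstacle: it requires carefully choosing the auxiliary rankings so that Pareto pins down one comparison, the hypothesis pins down another, transitivity of the (now guaranteed linear) social order combines them, and IIA then isolates the conclusion as depending only on $G$'s internal ranking.

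With the Field Expansion Lemma in hand, the remainder is a \emph{contraction / divide-and-conquer} argument. Among all decisive coalitions pick one, $G$, of minimal size; by Pareto such a minimal decisive coalition exists. I claim $|G| = 1$. Suppose not; partition $G$ into nonempty parts $G_1$ and $G_2 = G \setminus G_1$, and pick distinct alternatives $x,y,z$. Build a profile where $G_1$ ranks $x \pp y \pp z$, where $G_2$ ranks $z \pp x \pp y$, and where the voters outside $G$ rank $y \pp z \pp x$. Since $G$ is decisive and all of $G$ ranks $x$ above $y$, we get $x \pp^F y$. Now the social order places $z$ somewhere relative to these; a case split on whether $x \pp^F z$ or $z \pp^F x$ shows (via IIA, looking only at who ranks the relevant pair which way) that either $G_1$ is decisive for $(x,z)$ or $G_2$ is decisive for $(z,y)$. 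Either way, by the Field Expansion Lemma a proper subset of $G$ is decisive, contradicting minimality. Hence $G$ is a singleton $\set{i}$, and by Field Expansion $i$ is decisive for every pair, so $F(\P) = \p_i$ on every profile --- i.e., $i$ is a dictator, contradicting non-dictatorship and completing the equivalence.
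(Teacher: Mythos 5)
Your overall architecture is sound, and you share the first half of the argument with the paper: the Contagion/Field Expansion Lemma (the paper's Lemma \ref{lemma:contagion}). After that the two proofs diverge. The paper proceeds structurally: it shows that the family $\Win$ of decisive coalitions forms an ultrafilter on $\N$ (Lemma \ref{lemma:ultrafilter}) and then invokes the purely set-theoretic fact that every ultrafilter on a finite set is principal (Lemma \ref{lemma:dictators}), the generator being the dictator. You instead take a minimal decisive coalition and split it with a Condorcet-cycle-like profile to show it is a singleton. The two routes are close cousins --- the paper's proof that $\Win$ is closed under intersections uses essentially the same three-way cyclic profile as your contraction step --- but yours is more elementary and self-contained, while the ultrafilter formulation isolates exactly where finiteness of $\N$ enters (and explains why the theorem fails for infinite electorates, where non-principal ultrafilters yield ``invisible dictators'').

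Two points in your sketch need repair. First, the auxiliary profile you give for Field Expansion does not work as written: with $G$ ranking $x \pp z \pp y$ and everyone else ranking $z$ above both $x$ and $y$, decisiveness gives $x \succ^F_\P y$ and Pareto gives $z \succ^F_\P y$, and no application of transitivity yields a comparison between $x$ and $z$. The standard fix is to have $G$ rank $x \pp y \pp z$ and every voter rank $y$ above $z$ (leaving the outsiders' ranking of $x$ versus $z$ free), so that decisiveness gives $x \succ^F_\P y$, Pareto gives $y \succ^F_\P z$, transitivity gives $x \succ^F_\P z$, and IIA detaches this conclusion from the outsiders' votes. Second, your contraction step only shows that $G_1$ (or $G_2$) forces its preferred ranking \emph{in a profile where everyone outside it votes the opposite way}; that is ``almost decisiveness'', not decisiveness. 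This is harmless only if your Field Expansion Lemma is stated and proved starting from almost-decisive coalitions (as in Sen's classical treatment); as you state it, for fully decisive coalitions, the contradiction with minimality does not yet follow. Both repairs are routine, so I would count this as a correct alternative proof modulo bookkeeping.
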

Another way to look at the theorem is that of a characterization of the dictatorship rule, exactly like May's theorem is a characterization of the plurality rule in the $m = 2$ context: dictatorships are the only SPFs that are Pareto and IIA. In a way the theorem shows that there is no `obvious' method---in the sense of being independent of irrelevant alternatives and Pareto---to aggregate preferences on more than $2$ alternatives.

\subsubsection{Proof strategy and decisive coalitions}

The proof of the theorem presented here relies on one main definition and three lemmas. The central definition in the proof of Arrow's theorem is that of a decisive coalition of voters.
\begin{definition}[Decisive coalition] \label{def:decisive}
Let $F$ be a SPF (for a given $\tuple{\N, \A}$), and $x, y \in \A$. A coalition $C \subseteq \N$ is {\em decisive for} $x$ {\em over} $y$ (or $xy$-decisive), under $F$, if
\[
\forall \P \in \LO(\A)^n:  \ \mbox{if} \ C \subseteq \N^{xy}_\P \ \mbox{then} \ x \pp^F_\P y.
\] 
A coalition $C \subseteq \N$ is decisive if it is decisive for every pair of alternatives.
The set of $xy$-decisive coalitions (under $F$) is denoted $\Win^{xy}_F$ (or simply $\Win^{xy}$ when $F$ is clear from the context). The set of decisive coalitions (under $F$) is denoted $\Win$.
\end{definition}
Intuitively, a set of voters is decisive for $x$ over $y$ whenever, if they all agree with ranking $x$ over $y$, then $x$ is ranked over $y$ in the social preference. 

\medskip

The proof then relies on three main lemmas. 
The first lemma shows that if a coalition is decisive for a pair of alternatives, then it is decisive for all alternatives.

The second one shows that if SPFs are Pareto and IIA, then their set of decisive coalitions $\Win$ takes the form of a so-called ultrafilter. Ultrafilters are collections of sets and were originally introduced in \cite{cartan37filtres} to capture a handful of properties characterizing the notion of  `large set', like: i) the largest set is a large set; ii) a set is large iff its complement is not large; iii) the intersection of two large sets is large. 

The third lemma is a fact about ultrafilters constructed on finite sets (like the set of voters $\N$), and it states that every finite ultrafilter contains a singleton. It follows that if $\Win$ is an ultrafilter, it must be finite since $\N$ is finite, and it must therefore contain a singleton decisive coalition, that is, a dictator. The intuition behind the use of ultrafilters in voting theory is that a notion of `large set' can naturally be used to define SFPs responding to the rough intuition: issue $x$ is ranked above $y$ in the social preference if and only if there is a `large set' of individuals---a `large coalition'---supporting it.

\subsubsection{The three lemmas}

\begin{lemma}[Contagion lemma] \label{lemma:contagion}
Let $F$ be a SPF (for a given $\tuple{\N, \A}$) which is Pareto and IIA. If $C \in \Win^{xy}$ for some $x,y \in A$, then $C \in \Win$.
\end{lemma}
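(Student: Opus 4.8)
The plan is to show that decisiveness over a single ordered pair ``contaminates'' every other pair, by isolating two elementary propagation moves and then chaining them. Both moves exploit the only resource not yet used---a third alternative $z$, which exists because $m > 2$---together with a Pareto step on a unanimously ranked pair and the transitivity of the strict social preference $\pp^F_\P$ (which is available since $F(\P)$ is a total preorder, so its asymmetric part is transitive).

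\emph{First move (changing the loser).} I would first prove that if $C \in \Win^{xy}$ then $C \in \Win^{xz}$ for every $z \neq x$. Fix an arbitrary target profile $\P'$ with $C \subseteq \N^{xz}_{\P'}$. I construct an auxiliary profile $\P$ in which every voter in $C$ ranks $x \succ y \succ z$, every voter outside $C$ ranks $y$ at the top, and the relative order of $x$ and $z$ for each outside voter is copied from $\P'$, so that $\N^{xz}_\P = \N^{xz}_{\P'}$. In $\P$ we then have $\N^{xy}_\P = C$, whence $xy$-decisiveness gives $x \pp^F_\P y$; every voter ranks $y$ above $z$, so Pareto gives $y \pp^F_\P z$; transitivity of $\pp^F_\P$ then yields $x \pp^F_\P z$. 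Since $\P$ and $\P'$ agree on the $x$-versus-$z$ comparison, IIA transfers this to $x \pp^F_{\P'} z$. As $\P'$ was arbitrary subject to $C \subseteq \N^{xz}_{\P'}$, this establishes $C \in \Win^{xz}$.

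\emph{Second move (changing the winner).} By an entirely symmetric construction---putting every voter in $C$ at $z \succ x \succ y$, keeping all voters unanimous on $z$ over $x$ (for the Pareto step), arranging $\N^{xy}_\P = C$ (for the decisiveness step), and copying the $z$-versus-$y$ comparison from the target---one obtains $C \in \Win^{xy} \Rightarrow C \in \Win^{zy}$ for every $z \neq y$. Finally, I would compose the two moves to reach an arbitrary ordered pair $(a,b)$ with $a \neq b$: using the two moves and the availability of a third alternative at each step, a short chain such as $\Win^{xy} \to \Win^{xb} \to \Win^{ab}$ (with the few degenerate cases where $a$ or $b$ coincides with $x$ or $y$ handled by rerouting through an auxiliary alternative) shows $C \in \Win^{ab}$ for all $a \neq b$, which is precisely $C \in \Win$.

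The main obstacle is that the paper's notion of decisiveness is the \emph{strong} one: $C \in \Win^{xy}$ quantifies over \emph{all} profiles with $C \subseteq \N^{xy}_\P$, not a single witnessing configuration of the complement. Consequently each move must deliver its conclusion for every relevant target profile, not merely for the specific auxiliary profile I build. This is exactly what the IIA step is for---matching the auxiliary profile to the target on the single comparison that matters---and the care lies in designing the auxiliary profile so that $\N^{xz}_\P = \N^{xz}_{\P'}$ holds simultaneously with $\N^{xy}_\P = C$ and the $(y,z)$-unanimity needed for Pareto. Once these constraints are seen to be jointly satisfiable (they are, because they constrain disjoint pairwise comparisons), the remaining work---the bookkeeping of the chaining and the degenerate cases---is routine.
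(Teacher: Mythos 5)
Your proof is correct and rests on the same engine as the notes' own proof: build an auxiliary profile in which $C$ is unanimous on the pair $xy$, invoke $xy$-decisiveness, apply Pareto to a unanimously ranked pair involving a fresh alternative, use transitivity of the strict social preference, and then transfer the conclusion to an arbitrary target profile via IIA. The difference is one of decomposition. The notes do the propagation in a single step, using a profile built over four alternatives with $w \succ x \succ y \succ z$ inside $C$ and two Pareto links, which literally requires $w,x,y,z$ to be pairwise distinct; the cases where the target pair meets $\set{x,y}$ (which are unavoidable when $m=3$, since four distinct alternatives then do not exist) are not spelled out there. You instead isolate two three-alternative moves---replace the loser, replace the winner---and chain them, explicitly flagging the degenerate cases and the rerouting needed for, e.g., $\Win^{xy} \Rightarrow \Win^{yx}$. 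This buys a fully rigorous covering of all ordered pairs at the cost of some bookkeeping, and is the standard ``field expansion'' presentation of the lemma. One small point to make explicit when you write it up: IIA as stated transfers the weak relation $\p^F$, so to carry the strict conclusion $x \pp^F_\P z$ over to $\P'$ you should observe that $\N^{xz}_\P = \N^{xz}_{\P'}$ also forces $\N^{zx}_\P = \N^{zx}_{\P'}$ (the two sets are complements because ballots are linear orders), so that both $x \p^F z$ and $z \p^F x$ transfer and strictness is preserved.
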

\begin{proof}
We want to show that if $C \in \Win^{xy}$ for some $x,y \in A$ then $C \in \Win^{wz}$ for all $w,z \in A$. So consider any profile $\P$ where each voter $i$ in $C$ holds preferences satisfying $w \succ_i x \succ_i y \succ_i z$ and any other voters $j$ holds preferences satisfying $w \succ_j x$ and $y \succ_j z$:
\begin{align*}
\P  =
\begin{array}{c | lllllllll}
C & \ldots & w & \ldots & x &  \ldots & y & \ldots & z & \ldots \\
\overline{C}    & & \ldots & w  & \ldots &  x  & \ldots & \\
		      &   & \ldots & y  & \ldots &  z  & \ldots & 
\end{array} 
\end{align*}
Notice that this specification of $\P$ leaves it open to how voters in $\overline{C}$ rank $w$ w.r.t. $z$ and $x$ w.r.t $y$.
Now $C \subseteq \N_{\P}^{wz}$ and, since $C$ is $xy$ decisive, we therefore have that $x \succ^F_\P y$ and, by Pareto, we have that $w \succ^F_\P x$ and $y \succ^F_\P z$. By transitivity we thus have that $w \succ^F_\P z$. Now consider any profile $\P'$ such that $C \subseteq \N_{\P'}^{wz}$. By the specification we gave for $\P$, there exists a profile $\P$ such that $\N_{\P}^{wz} = \N_{\P'}^{wz}$. By IIA, $w \succ^F_{\P'} z$. We can then conclude that $C \in \Win^{wz}$, as desired.
\end{proof}

\begin{lemma}[Ultrafilter lemma] \label{lemma:ultrafilter}
Let $F$ be a SPF (for a given $\tuple{N,A}$), that satisfies Pareto and IIA.
The set $\Win$ of decisive coalitions (for $F$) is an {\em ultrafilter} over $\N$, that is:
\begin{enumerate}[i)]

\item $\N \in \Win$, i.e., the set of all individuals is a decisive coalition;

\item $C \in \Win$ iff $\overline{C} \not\in \Win$, i.e., a coalition is decisive if and only if its complement is not;


\item $\Win$ is closed under finite intersections: if $C, D \in \Win$ then $C \cap D \in \Win$, i.e., if two coalitions are winning then the individuals they have in common form a winning coalition.
\end{enumerate}
\end{lemma}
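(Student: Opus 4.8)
The plan is to verify the three ultrafilter conditions one at a time, using the Contagion Lemma (Lemma \ref{lemma:contagion}) to reduce each statement about full decisiveness to decisiveness on a single, conveniently chosen pair of alternatives, and invoking Proposition \ref{prop:linear} to guarantee that the social preference $\succ^F_\P$ is a linear order, so that no two distinct alternatives are ever tied. The standing assumption $m > 2$ lets me introduce a third alternative wherever an intermediary is needed to route a conclusion through transitivity.

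Condition (i) is immediate from Pareto: if $\N \subseteq \N^{xy}_\P$ then every voter prefers $x$ to $y$, hence $x \succ^F_\P y$. So $\N \in \Win^{xy}$ for every pair, i.e. $\N \in \Win$. The recurring device for the remaining parts is that, by the Contagion Lemma, it suffices to exhibit decisiveness of a coalition for one pair in order to conclude membership in $\Win$.

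For condition (iii) I would fix three distinct alternatives $a,b,c$ and construct a profile $\P$ in which every voter in $C$ ranks $a \succ b$, every voter in $D$ ranks $b \succ c$, the voters in $C \cap D$ rank $a \succ b \succ c$, and the remaining voters are arranged so that exactly $C \cap D$ ends up preferring $a$ to $c$. Decisiveness of $C$ then forces $a \succ^F_\P b$, decisiveness of $D$ forces $b \succ^F_\P c$, and transitivity of the linear social preference gives $a \succ^F_\P c$ while $\N^{ac}_\P = C \cap D$. The step to check with care is that the $a$-versus-$c$ ranking of every voter outside $C \cap D$ can be set freely without disturbing the constraints $C \subseteq \N^{ab}_\P$ and $D \subseteq \N^{bc}_\P$: a voter in $C \setminus D$ is pinned only on $a$ vs $b$, one in $D \setminus C$ only on $b$ vs $c$, and the rest not at all. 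This freedom lets me reproduce, for any profile $\P'$ with $C \cap D \subseteq \N^{ac}_{\P'}$, a profile $\P$ agreeing with $\P'$ on the pair $ac$, so that IIA transports $a \succ^F_{\P'} c$ to it. Hence $C \cap D \in \Win^{ac}$, and Contagion yields $C \cap D \in \Win$.

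Condition (ii) is where I expect the real work. The easy half, $C \in \Win \Rightarrow \overline C \notin \Win$, follows by taking a profile with $\N^{xy}_\P = C$ (so $\N^{yx}_\P = \overline C$): decisiveness of $C$ gives $x \succ^F_\P y$ while decisiveness of $\overline C$ would give $y \succ^F_\P x$, contradicting asymmetry. For the converse I would show that at least one of $C,\overline C$ is always decisive. Partitioning $\N$ into $C$ and $\overline C$ and comparing $a,b,c$ in the two-block profile where $C$ ranks $a \succ b \succ c$ and $\overline C$ ranks $b \succ c \succ a$, Pareto fixes $b \succ^F c$, and a case analysis on the (tie-free, by Proposition \ref{prop:linear}) position of $a$ shows that some pair is decided by precisely $C$ or precisely $\overline C$ — that is, $C$ or $\overline C$ is \emph{almost decisive} for some pair, in the sense of being decisive in the single profile where exactly it supports one alternative over another. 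The crux is then a field-expansion step upgrading almost decisiveness to full decisiveness; I would prove it by the very mechanism of the Contagion Lemma, introducing a third alternative, routing the conclusion through it via Pareto and transitivity, and using the freedom left in the opposing voters' rankings together with IIA to pass from ``decisive in one profile'' to ``decisive in all profiles,'' after which Contagion promotes the almost-decisive coalition to full decisiveness. Combined with the easy half, exactly one of $C,\overline C$ lies in $\Win$, which is the required biconditional.
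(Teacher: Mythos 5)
Your proof is correct; part (i) and the ``not both'' half of (ii) coincide with the paper's argument, but your treatments of the converse half of (ii) and of (iii) take genuinely different routes. For (ii) the paper works directly from the witness of $\overline{C} \not\in \Win$: it fixes a profile $\P$ with $\overline{C} \subseteq \N^{yx}_\P$ yet $x \succ^F_\P y$, isolates the subcoalition $C'$ of $C$ that ranks $x$ above $y$ there, inserts a third alternative to conclude $C' \in \Win^{xz}$, and finishes by superset-monotonicity of decisiveness plus the Contagion Lemma. You instead establish the unconditional dichotomy that at least one of $C, \overline{C}$ is always decisive, via the two-block profile, the no-ties consequence of Proposition \ref{prop:linear}, and a field-expansion step from almost-decisiveness to decisiveness---the classic textbook route to Arrow. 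This costs you an extra lemma, since the Contagion Lemma as stated assumes full decisiveness for a pair (over \emph{all} profiles in which $C$ supports $x$ over $y$), whereas your case analysis only yields decisiveness in profiles where the complement unanimously opposes; the field-expansion step bridging that gap is the one thing you must write out in full, though your sketch (third alternative, Pareto plus transitivity, the opposing voters' freedom on the target pair, IIA) is exactly the right mechanism and does go through. In exchange you get a cleaner decomposition of the biconditional into ``not both'' and ``at least one''. For (iii) the paper argues by contradiction, invoking part (ii) to obtain $\overline{C \cap D} \in \Win$ and building a Condorcet-style profile that would force a cycle in the social preference; your proof is direct, independent of (ii), and rests on the correct observation that the constraints $C \subseteq \N^{ab}_\P$ and $D \subseteq \N^{bc}_\P$ leave every voter outside $C \cap D$ free on $a$ versus $c$, so that IIA transports $a \succ^F_\P c$ to every profile $\P'$ with $C \cap D \subseteq \N^{ac}_{\P'}$, after which Contagion applies.
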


\begin{proof}
\begin{enumerate}[i)]

\item The claim is a direct consequence of the assumption that $F$ satisfies Pareto.

\item \LtoR Suppose, towards a contradiction, that $C, \overline{C} \in \Win$. Consider now a profile $\P$ where $C = \N_\P^{xy}$ and $\overline{C} = \N_\P^{yx}$. This profile must exist as SPFs admit any profile in $\LO(\A)^n$ as input, and be
such that $xy, yx \in F(\P)$. But $F(\P)$ must be a linear order by Proposition \ref{prop:linear}. Contradiction.

\RtoL Assume $\overline{C} \not\in \Win$. Then there exists a pair of (distinct) alternatives $yx$ such that $\overline{C} \not\in \Win^{yx}$. So assume, for $x \neq y \in \A$, that $\overline{C} \not\in \Win^{yx}$. Then, by Definition \ref{def:decisive}, there exists a profile $\P$ such that $\overline{C} \subseteq \N^{yx}_\P$ and $x \succ^F_\P y$. Such profile can be depicted as follows:
\begin{align*}
\P  =
\begin{array}{c | llllll}
C' & \ldots & x &  \ldots & y & \ldots \\
C''    & \ldots & y &  \ldots & x & \ldots \\
\overline{C} & \ldots & y &  \ldots & x & \ldots \\
\end{array} 
\end{align*}
where $C = C' \cup C''$ and $C''$ may be empty. Now consider the following variant of $\P$ constructed by placing a third alternative $x$ in different positions in the individual ballots:
\begin{align*}
\P' & =
\begin{array}{c | llllllll}
C' & \ldots & x &  \ldots & y & \ldots & z & \ldots \\
C''   & \ldots & y &  \ldots & z & \ldots & x & \ldots \\
\overline{C} & \ldots & y &  \ldots & z & \ldots & x & \ldots  \\
\end{array} 
\end{align*}
The social preference $F(\P')$ for this profile is such that: $x \succ^F_{\P'} y$ by IIA; $y \succ^F_{\P'} z$ by Pareto; $x \succ^F_{\P'} z$ by the definition of SPF \eqref{eq:SPF}. By IIA, in any profile $\P''$ such that $\N^{xz}_{\P'} = \N^{xz}_{\P''}$ we have that $x \succ^F_{\P''} z$. It follows that $C' \in \Win^{xz}$. Since, by construction, $C' \subseteq C$ we also obtain $C \in \Win^{xz}$. Finally, by Lemma \ref{lemma:contagion}, we conclude $C \in \Win$ as desired.


\item Assume towards a contradiction that $C, D \in \Win$ and $C \cap D \not\in \Win$. By the previous item, $\overline{C \cap D} \in \Win$. Construct a profile $\P$ with the following features:\footnote{Note the resemblance with the Condorcet paradox (Example \ref{ex:condorcet}).}
\[
\begin{array}{l | lllllll}
C \cap D & \ldots & x & \ldots & y & \ldots & z & \ldots \\
D \setminus C & \ldots & y & \ldots & z & \ldots & x & \ldots \\
C \setminus D & \ldots & z & \ldots & x & \ldots & y & \ldots \\
\overline{C \cup D} & \ldots & z & \ldots & y & \ldots & x & \ldots \\
\end{array} 
\]
We have that:
\begin{itemize}
\item $(C \cap D) \cup (C \setminus D) = C$, which is decisive by assumption. So, as for all $i \in C$ $x \pp_i y$, it follows that $x \succ^F_\P y$;
\item $(C \cap D) \cup (D \setminus C) = D$, which is decisive by assumption. So, as for all $i \in D$ $y \pp_i z$, it follows that $y \succ^F_\P z$;
\item $\overline{C \cup D} \cup (C \setminus D) \cup (D \setminus C) = \overline{C \cap D}$, which is also decisive by claim ii). So, as for all $i \in \overline{C \cap D}$ $z \pp_i x$, it follows that $z \succ^F_\P x$.
\end{itemize}
The above forces the social preference $\p^F_\P$ to be cyclical, which is impossible by the definition of SPFs \eqref{eq:SPF}. Contradiction.
\end{enumerate}
All claims have been proven and the proof is therefore complete.
\end{proof}

\medskip

The second lemma concerns a general well-known fact about finite ultrafilters. Worth noticing is that the fact and its proof do not involve any reference to SPFs.
\begin{lemma}[Existence of a dictator] \label{lemma:dictators}
Let $\Win$ be an ultrafilter on $\N$. Then $\Win$ is {\em principal}, i.e.:
$
\exists i \in \N \ \mbox{s.t.} \ \set{i} \in \Win.
$
\end{lemma}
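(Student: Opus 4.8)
The plan is to prove that any ultrafilter $\Win$ on a finite set $\N$ must contain a singleton. The key observation is that the three defining properties of an ultrafilter (containing $\N$, being closed under complementation in the sense that $C \in \Win$ iff $\overline{C} \notin \Win$, and being closed under finite intersection) together pin down the structure of $\Win$ very tightly when $\N$ is finite.

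First I would note that $\N$ itself is a finite set, say $\N = \set{1, 2, \ldots, n}$, and consider the intersection of all members of $\Win$. Define $I = \bigcap_{C \in \Win} C$. Since $\N$ is finite, $\Win$ is a finite collection of subsets of $\N$, so this is a finite intersection. By property (iii) (closure under finite intersection), applied iteratively, $I \in \Win$. In particular, by property (i) we know $\Win$ is non-empty (it contains $\N$), so $I$ is a genuine member of $\Win$ and hence by property (ii) cannot be empty, since $\emptyset \in \Win$ would force $\N = \overline{\emptyset} \notin \Win$, contradicting (i). So $I$ is a non-empty decisive coalition that is contained in every decisive coalition.

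Next I would argue that $I$ is in fact a singleton. Suppose towards a contradiction that $I$ contains two distinct elements, and partition $I$ into two disjoint non-empty pieces $I = I_1 \cup I_2$. By property (ii), exactly one of $I_1, \overline{I_1}$ belongs to $\Win$. If $I_1 \in \Win$, then since $I$ is contained in every member of $\Win$ we would need $I \subseteq I_1$, contradicting that $I_2$ is non-empty; if instead $\overline{I_1} \in \Win$, then $I \subseteq \overline{I_1}$, contradicting that $I_1$ is non-empty and $I_1 \subseteq I$. Either way we reach a contradiction, so $I$ must be a singleton $\set{i}$, and since $I \in \Win$ this establishes the claim.

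The main obstacle, and the point requiring the most care, is the step establishing that the minimal decisive coalition $I$ is itself decisive: one must be sure that closure under finite intersection genuinely extends to the intersection of all members, which is legitimate precisely because $\N$ being finite makes $\Win$ finite. The complementation property (ii) then does the real work of forcing minimality down to a single element. I would also take care to confirm at the outset that $\emptyset \notin \Win$ (equivalently, that $I \neq \emptyset$), since this non-triviality is what prevents the degenerate case and is what distinguishes an ultrafilter from an arbitrary filter.
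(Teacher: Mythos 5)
Your proof is correct and follows essentially the same route as the paper: form the intersection $I = \bigcap \Win$, use finiteness of $\N$ together with closure under intersection to get $I \in \Win$, use property (ii) against property (i) to see $I \neq \emptyset$, and then use property (ii) once more to force a singleton. The only cosmetic difference is that you show $I$ itself is a singleton via a partition argument, whereas the paper picks $i \in I$ and derives $\set{i} \in \Win$ directly from $\overline{\set{i}} \in \Win$ being impossible; both are equally valid.
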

\begin{proof} 
Recall the definition of ultrafilter given in Lemma \ref{lemma:ultrafilter}. 
Since the set of voters $\N$ is finite by assumption, the closure $\bigcap \Win$ of $\Win$ under finite intersections belongs to $\Win$ by property {\em iii)}. 
We therefore have that $\bigcap \Win \neq \emptyset$. For suppose not, then $\N \not\in \Win$ by property {\em ii)}, against property {\em i)}. 
So, w.l.o.g., assume $i \in \bigcap \Win$ for $i \in \N$. We want to show that $\set{i} \in \Win$. Suppose towards a contradiction that $\set{i} \not\in \Win$. By property {\em ii)} we have that $\overline{\set{i}} \in \Win$, from which follows that $i \not\in \bigcap \Win$. Contradiction. Hence $\set{i} \in \Win$.
\end{proof}

\subsubsection{Completing the proof}

We can now pull the above lemmas together and prove the result we are after:

\begin{proof}[Proof of Theorem \ref{th:arrow}]
\RtoL Assume $F$ is a dictatorship. For each profile where $\N^{xy} = \N$, the dictator ranks $x$ over $y$. Hence $xy \in F(\P)$, proving Pareto. For any two profiles agreeing on the set of voters ranking $x$ over $y$ either the dictator belongs to that set, and therefore $xy$ belongs to both social preferences, or it does not and therefore $yx$ belongs to both social preferences. This proves IIA.

\LtoR By Lemma \ref{lemma:ultrafilter} the set of decisive coalitions under $F$ is an ultrafilter. By Lemma \ref{lemma:dictators} such ultrafilter is principal and therefore it contains a singleton. Such singleton is a decisive coalition and therefore, by Definition \ref{def:decisive}, the voter in the singleton is a dictator.
\end{proof}


\section{Social choice by maximum likelihood \& closest consensus}

In this section we look at two other perspectives on social choice, and some of the voting rules they inspire: the maximum likelihood estimator (MLE) approach, which we already saw at work in the two alternatives case with the Condorcet jury theorem (Theorem \ref{th:condorcet}); and the consensus-based approach.

According to the MLE perspective a `true' (correct, objective) ranking of the alternatives in $A$ exists, and individual ballots are noisy estimates of such ranking. The voting rule should then identify the most likely ranking that has generated the observed profile, that is, it should be a {\em maximum likelihood estimator}. According to the consensus-based approach, the social preference is a form of `consensus' intended as a preference that is `as close as possible' to the observed profile. So the voting rule should identify the {\em closest consensus}.

\subsection{The Condorcet model when $m > 2$}

The model Condorcet developed in \cite{Condorcet1785} for the case in which $m > 2$ is based on the following assumptions (cf. Section \ref{sec:mle}):
\begin{description}
\item[C0] All rankings are equally likely a priory.
\item[C1] In each pairwise comparison each voter chooses the objectively better alternative with probability $p \in (0.5, 1]$.
\item[C2] Each voter's opinion on a pairwise comparison is independent of her opinion on any other pairwise comparison.
\item[C3] Each voter's opinions are independent of any other voter's opinions.
\item[C4] Each voter's judgment defines a linear order.
\end{description}
The four assumptions are clearly incompatible (specifically, {\bf C2} and {\bf C4}). Here, following \cite{young88condorcet}, we drop {\bf C4} and explore to what kind of voting rules assumptions {\bf C0}-{\bf C3} lead.
It is worth stressing that even though there are intuitive reasons against dropping condition {\bf C4} (after all it feels odd to allow for intransitive opinions), there are also natural reasons in favor of such a choice: if we take individual votes as the voter's best approximation at trying to identify the correct ranking, such votes may well be intransitive since best approximations of linear orders may well fail to be transitive (cf. \cite{truchnon08borda}).\footnote{The alternative choice consisting of retaining {\bf C4} while dropping {\bf C2} requires different noise models for the random generation of linear orders. One such widely deployed model is the so-called {\em Mallows noise model} \cite{mallows57non}.}

\subsection{Voting rules as MLE when $m>2$}

First, under the above assumptions {\bf C0}-{\bf C3} we are interested in finding a voting rule that identifies the most likely social preference. Two adjustments to the notion of SPF are needed. First, individual ballots may now not be linear orders as their strict part may contain cycles. They are, instead, tournaments (cf. Definition \ref{def:tournament}), that is, asymmetric, complete but not necessarily transitive binary relations over $\A$. We denote the class of tournaments over $A$ by $\Tour(\A)$. Second, there might be more than one most likely social preference. We therefore generalize the notion of SPF as follows. A generalized social preference function (GSPF) is a function:
\begin{align}
G: \Tour(\A)^n \to 2^{\LO(\A)} \label{eq:GSPF}
\end{align}
That is, it takes a vector of tournaments as input and outputs a set of linear orders. Notice that an SPF \eqref{eq:SPF}, when its output is a linear order (cf. Lemma \ref{prop:linear}), can therefore be viewed as a GSPF that always outputs a singleton set of linear orders.

The MLE of the correct ranking is therefore a GSPF $G$ such that, for every profile $\P \in \Tour(\A)^n$, $G(\P)$ is the set of linear orders $\p$ that maximize the probability of the observed profile $\P$ given $\p$. That is: 
\begin{align}
G(\P) & = \argmax_{\p \in \LO(\A)}(\Pr(\P \mid \p)).
\end{align}

We discuss a concrete application of this approach, and a slight variant of it.

\subsubsection{The Kemeny rule}

We first introduce one more piece of notation. The distance between two linear orders $\p$ and $\p'$, called {\em swap distance}, is defined as:\footnote{The distance is also known as, among others, Kendall tau or bubble-sort distance.}
\begin{align}
\swap(\p,\p') & = \size{\set{xy \in \A^2 \mid  x \pp y \ \mbox{and} \ y \pp' x}} \label{eq:swap}
\end{align}
That is, the swap distance between two linear orders is given by the number of pairs of alternatives with respect to which the two orders disagree. Equivalently $\swap(\p,\p')$ is the minimum number of swaps (i.e., inversions) of adjacent alternatives whereby one can obtain $\p'$ from $\p$.

\begin{remark}
Many alternative notions of distance can be defined. The simplest one is possibly the so-called {\em discrete distance}:
\begin{align} \label{eq:discr}
\discr(\p,\p') & =
\left\{
\begin{array}{ll}
1 & \mbox{if} \ \p \neq \p' \\
0 & \mbox{otherwise}
\end{array}
\right.
\end{align}
That is, the distance is $1$ for all and only the ballots that differ from $\p$. Another way to interpret it is to say that, if the cost of any operation on a linear order (e.g., swapping alternatives in arbitrary positions) is $1$, then in order to turn $\p$ into $\p'$ one incurs a cost of $0$ (if they are the same order) or of $1$ (otherwise).

In general, a distance is a function $d$ satisfying the following properties, for any $\p,\p',\p''$:
\begin{description}
\item{Non-negativity}, $d(\p, \p') \geq 0$;
\item{Identity of indiscernibles}, $d(\p,\p')=0$ iff $\p = \p'$;
\item{Symmetry}, $d(\p, \p') = d(\p',\p)$;
\item{Triangle inequality}, $d(\p, \p') \leq d(\p, \p'') + d(\p'',\p')$. 
\end{description}
\end{remark}

\begin{theorem} \label{th:kemeny_MLE}
Let $\P = \tuple{\p_1, \ldots, \p_n} \in \Tour(\A)^n$ be randomly generated according to assumptions {\bf C0}-{\bf C3}  above. Then
\[
\argmin_{\p \in \LO(A)}  \sum_{i \in \N} \swap(\p_i, \p)
\]
is the MLE of the correct ranking. 
\end{theorem}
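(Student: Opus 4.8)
The plan is to compute the likelihood $\Pr(\P \mid \p)$ of the observed profile under each candidate order $\p \in \LO(\A)$ in closed form, and to show that it is a strictly decreasing function of $\sum_{i \in \N} \swap(\p_i, \p)$; maximising the former is then identical to minimising the latter, which is exactly the rule in the statement.

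First I would fix a candidate true ranking $\p \in \LO(\A)$ and analyse a single voter. Because we have dropped assumption {\bf C4}, each ballot $\p_i$ is a tournament, so for each of the $\binom{m}{2}$ unordered pairs $\{x,y\}$ exactly one direction is recorded. By {\bf C1} the voter records the direction agreeing with $\p$ with probability $p$ and the opposite direction with probability $1-p$, and by {\bf C2} these $\binom{m}{2}$ choices are mutually independent. Writing $d_i = \swap(\p_i, \p)$ for the number of pairs on which $\p_i$ disagrees with $\p$, this yields
\[
\Pr(\p_i \mid \p) = p^{\binom{m}{2} - d_i}\,(1-p)^{d_i}.
\]

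Next I would combine the voters. By independence across voters ({\bf C3}) the profile likelihood factorises, and setting $K = n\binom{m}{2}$ (a constant not depending on $\p$) and $D = \sum_{i \in \N} d_i = \sum_{i \in \N} \swap(\p_i, \p)$ we obtain
\[
\Pr(\P \mid \p) = \prod_{i \in \N} p^{\binom{m}{2} - d_i}(1-p)^{d_i} = p^{K-D}(1-p)^{D} = p^{K}\left(\frac{1-p}{p}\right)^{D}.
\]
Since $p > 0.5$ we have $0 \leq \frac{1-p}{p} < 1$, so this quantity is a strictly decreasing function of $D$ for $p < 1$ (and for $p = 1$ it is positive exactly when $D = 0$, so the conclusion still holds). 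Hence $\argmax_{\p}\Pr(\P \mid \p) = \argmin_{\p} D = \argmin_{\p}\sum_{i \in \N}\swap(\p_i, \p)$, which is the Kemeny rule. Finally I would invoke {\bf C0}: with a uniform prior over rankings, Bayes' rule gives $\Pr(\p \mid \P) \propto \Pr(\P \mid \p)$, so the maximiser of the likelihood is also the most probable ranking given $\P$, i.e.\ the MLE of the correct ranking.

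The main obstacle is really the only conceptual step, namely the per-voter computation: recognising that the number of pairwise disagreements between the tournament ballot $\p_i$ and the candidate order $\p$ is exactly $\swap(\p_i, \p)$, and that the per-pair errors are genuinely independent coin flips with bias $p$. This is precisely where dropping {\bf C4} is essential: had ballots been constrained to be transitive, the $\binom{m}{2}$ pairwise comparisons would be correlated and the clean factorisation above would fail. Everything after that is bookkeeping, turning on the single inequality $\frac{1-p}{p} < 1$ furnished by the assumption $p > 0.5$.
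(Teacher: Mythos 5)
Your proposal is correct and follows essentially the same route as the paper: compute the per-ballot likelihood as $p^{\binom{m}{2}-\swap(\p_i,\p)}(1-p)^{\swap(\p_i,\p)}$ using \textbf{C1}--\textbf{C2}, factorise over voters by \textbf{C3} to get a quantity monotone in $\sum_i \swap(\p_i,\p)$, and conclude via \textbf{C0} that maximising the likelihood is minimising the total swap distance. Your treatment is in fact slightly cleaner on the monotonicity step (you correctly use $\tfrac{1-p}{p}<1$ and handle $p=1$ separately, where the paper's stated inequality $\tfrac{p}{1-p}>0.5$ is a typo for $>1$).
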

\begin{proof}
For ease of presentation we work with the asymmetric (strict) part of linear orders.
Let $\pp \in \LO(\A)$ be the correct ranking. Let now $\pp' \in \Tour(\A)$ be a tournament that agrees with $\pp$ on $k$ pairs of alternatives. Observe that by \eqref{eq:swap} we have $\swap(\pp',\pp) = \binom{m}{2} - k$. So, by assumptions {\bf C1} and {\bf C2} the probability that a voter expresses ballot $\pp'$ is
\begin{align*}
p^k \cdot (1-p)^{\binom{m}{2}-k}  ~=~ p^{\binom{m}{2} - \swap(\pp',\pp)} \cdot (1-p)^{\swap(\pp',\pp)} ~=~ p^{\binom{m}{2}} \cdot \left( \frac{p}{1-p}\right)^{-\swap(\pp'_i,\pp)}.
\end{align*}
Then by assumption {\bf C3} we have that the probability of a profile $\tuple{\pp'_1, \ldots, \pp'_n}$, given that $\pp$ is the correct ranking, is proportional to
\begin{align*}
\prod_{i \in \N} \left( \frac{p}{1-p}\right)^{-\swap(\pp'_i,\pp)} & = \left( \frac{p}{1-p} \right)^{- \sum_{i \in \N} \swap(\pp'_i,\pp)}.
\end{align*}
By {\bf C0} the rankings that are most likely to be the correct one are the ones that maximize the probability of the observed profile. In our case, by {\bf C2} (since $p \in (0.5, 1]$, $ \frac{p}{1-p} > 0.5$), those are the rankings that minimize $\sum_{i \in \N} \swap(\pp'_i,\pp)$, thus proving the claim.
\end{proof}

The theorem provides an MLE justification of the following voting rule, which is known as the Kemeny rule \cite{kemeny59mathematics}.
\begin{VR}[Kemeny] \label{vr:kemeny}
The Kemeny rule is the SCF defined as follows. For all profiles $\P$:
\begin{align*}
\Kr(\P) & = \set{ \max_\p(A)  ~\middle|~ \p \in \argmin_{\p' \in \LO(\A)} \left( \sum_{i \in \N}\swap(\p_i, \p') \right)}
\end{align*}
Fixing a linear order $\p$, $\sum_{i \in \N}\swap(\p_i, \p)$ is called the {\em Kemeny distance} of $\p$ to $\P$.
\end{VR}
Intuitively, the Kemeny rule first identifies the linear orders which minimize the Kemeny distance of the order from the profile, then selects the top ranked alternatives in such linear orders.

\begin{example}[\cite{young95optimal}]
Let $n = 60$ and $m = 3$. Consider the following profile:
\[
\P =
\begin{array}{l | lll}
\# 23 & a & b & c \\
\# 17 & b & c & a \\
\# 2 & b & a & c \\
\# 10 & c & a & b \\
\# 8 & c & b & a \\
\end{array}
\]
Notice that the profile has no Condorcet winner. There are $3!$ possible linear orders. For each of them we can calculate their Kemeny distance from the above profile. For example, let us consider the order $bca$. We have:
\begin{align*}
   &  23 \cdot \swap(abc, bca) + 17 \cdot \swap(bca, bca) + 2 \cdot \swap(bac, bca) + 10 \cdot \swap(cab, bca) + 8 \cdot \swap(cba, bca)  \\
=~& 46 + 0 + 2 + 20 + 8 \\
=~ & 76
\end{align*}
The Kemeny distance for the other rankings is computed in like fashion (left to the reader). The ranking $bca$ is in fact the one that minimizes the Kemeny distance. So $\Kr(\P) = \set{b}$.
\end{example}

\begin{fact} \label{fact:pl_k}
Assume $m = 2$. Then, for any profile $\P \in \LO(\A)^n$: $\Plr(\P) = \Kr(\P)$
\end{fact}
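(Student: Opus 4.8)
The plan is to exploit the fact that when $m=2$ there are exactly two linear orders, so the $\argmin$ defining the Kemeny rule (Rule \ref{vr:kemeny}) ranges over only two candidates and can be evaluated directly. Write $\A = \set{x,y}$, so that the only elements of $\LO(\A)$ are $xy$ and $yx$. The swap distance \eqref{eq:swap} is then immediate: the two distinct orders disagree on the single pair $\set{x,y}$, hence $\swap(xy,yx) = \swap(yx,xy) = 1$, while $\swap(\p,\p) = 0$.

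First I would compute the two Kemeny distances. Setting $a = \supp^{xy}_\P$ and $b = \supp^{yx}_\P = n-a$, a voter with ballot $xy$ contributes $0$ to the Kemeny distance of $xy$ and $1$ to that of $yx$, and symmetrically for a voter with ballot $yx$. Summing over voters gives $\sum_{i\in\N}\swap(\p_i, xy) = b$ and $\sum_{i\in\N}\swap(\p_i, yx) = a$.

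Next I would read off $\Kr(\P)$ by a three-way case analysis comparing $a$ and $b$. If $a > b$ then $xy$ is the unique minimizer of the Kemeny distance, so its top alternative $x$ is selected and $\Kr(\P) = \set{x}$; symmetrically $b > a$ yields $\Kr(\P) = \set{y}$; and if $a = b$ both orders minimize the distance, so both top alternatives are selected and $\Kr(\P) = \A$.

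Finally I would compare this with the simplified form of plurality \eqref{eq:simple}. The only point requiring a small computation is to check that the threshold condition $\supp^{xy}_\P \geq \lceil \frac{n+1}{2}\rceil$ of \eqref{eq:simple} is equivalent to the strict inequality $a > b$; since $a+b=n$ we have $a > b \iff a > \frac{n}{2} \iff a \geq \lceil \frac{n+1}{2}\rceil$, verifying the odd and even cases for $n$ separately. With this translation the three cases of \eqref{eq:simple} line up exactly with the three cases computed for $\Kr$, giving $\Plr(\P) = \Kr(\P)$. The main obstacle is essentially nil: everything reduces to the two-candidate evaluation above, and the only genuine care needed is this ceiling-versus-strict-majority bookkeeping.
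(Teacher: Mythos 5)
Your proof is correct: the reduction of the Kemeny distances to $\supp^{yx}_\P$ and $\supp^{xy}_\P$, the three-way case analysis, and the check that $a>b$ is equivalent to the ceiling threshold in \eqref{eq:simple} together give exactly the claimed identity. The paper leaves this fact as Exercise \ref{ex:pl_k} and provides no proof of its own, and your direct two-candidate computation is precisely the intended argument.
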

\begin{proof}
See Exercise \ref{ex:pl_k}.
\end{proof}
As a corollary of the above fact and Theorem \ref{th:kemeny_MLE} we have that, on two alternatives, the plurality rule (Rule \ref{vr:plurality}) is the MLE (cf. Theorem \ref{th:condorcet}) under the Condorcet's assumptions.

\subsubsection{The Borda rule as MLE}

We have shown that the Kemeny rule, a natural generalization of plurality, is the MLE under the assumptions {\bf C0}-{\bf C3}. By varying such assumptions it is possible to obtain similar MLE characterizations for different rules (cf. \cite{conitzer05common}). Here we show how this can be done for the Borda rule (Rule \ref{vr:borda}) by modifying in a specific way the error model (condition {\bf C2}) in order to require a specific probability for a voter to rank an alternative at position $k$ under the condition that such alternative is the top alternative in the correct ranking. In the case of Borda we are not interested in estimating the most likely true ranking, but the most likely true winner (i.e., the most likely top alternative in the correct ranking).\footnote{Technically this involves an MLE that is an SCF rather than a GSPF.} 

\begin{theorem}[Borda as MLE \cite{conitzer05common}] \label{th:borda_MLE}
Let $\P = \tuple{\p_1, \ldots, \p_n} \in \LO(\A)^n$ be randomly generated according to assumptions {\bf C0}, {\bf C1}, {\bf C3}, {\bf C4} and assuming that
\[
\Pr(~i(x) = k ~\mid~ \max_{\p}(\A) = x~) \sim 2^{m-k}.
\]
where $i(x)$ denotes the rank of $x$ in $i$'s ballot in $\P$.
Then the Borda rule is the MLE of the correct winner.
\end{theorem}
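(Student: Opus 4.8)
The plan is to write down the likelihood $\Pr(\P \mid \max_{\p}(\A) = x)$ of the observed profile for each candidate winner $x \in \A$, simplify it, and show that it is a strictly increasing function of the Borda score of $x$. Since the maximum likelihood estimate of the correct winner is by definition $\argmax_{x \in \A}\Pr(\P \mid \max_{\p}(\A)=x)$ (and, by the uniform prior \textbf{C0}, this also coincides with the most probable winner given $\P$), it then suffices to identify this $\argmax$ with $\Br(\P)$.

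First I would factor the likelihood over voters using the independence assumption \textbf{C3}:
\[
\Pr(\P \mid \max_{\p}(\A) = x) = \prod_{i \in \N} \Pr(\p_i \mid \max_{\p}(\A) = x).
\]
The central step is to evaluate a single factor. The hypothesis fixes only the marginal probability that the correct winner $x$ lands at position $k = i(x)$, namely $2^{m-k}/Z$ with $Z = \sum_{j=1}^{m} 2^{m-j} = 2^{m}-1$, a constant that does not depend on $x$. To obtain the probability of the full ballot $\p_i$ I would appeal to the symmetry of the model: nothing in the error model distinguishes the losing alternatives from one another, so conditional on $x$ occupying position $k$ the remaining $m-1$ alternatives are arranged uniformly over the remaining positions. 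Each complete ballot with $x$ at position $k$ therefore has probability $\frac{2^{m-k}}{Z\,(m-1)!}$, which is proportional to $2^{m-i(x)}$ with a proportionality constant independent both of $x$ and of the rest of $\p_i$.

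Substituting this back gives
\[
\Pr(\P \mid \max_{\p}(\A) = x) = \left(\frac{1}{Z\,(m-1)!}\right)^{n} \cdot 2^{\sum_{i \in \N}\left(m - i(x)\right)}.
\]
The prefactor is the same for every $x$, and by the definition of the Borda score vector $\w = \tuple{m-1, m-2, \ldots, 0}$ (so that position $k$ carries weight $w_k = m-k$) the exponent $\sum_{i \in \N}(m - i(x)) = \sum_{i \in \N} w_{i(x)}$ is exactly the Borda score of $x$ in $\P$. As $t \mapsto 2^{t}$ is strictly increasing, maximizing the likelihood over $x$ is the same as maximizing the Borda score, whence the MLE of the correct winner equals $\argmax_{x \in \A}\sum_{i \in \N} w_{i(x)} = \Br(\P)$.

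I expect the main obstacle to be the middle paragraph: passing from the marginal distribution of the winner's position, which is all the hypothesis literally supplies, to the probability of an entire ballot. This forces one to make explicit the implicit modelling assumption that, given the winner's position, the losing alternatives are exchangeable, so that the $(m-1)!$ completions are equiprobable and contribute only an $x$-independent factor. Once that is granted the rest is bookkeeping; the only care needed is to confirm that both normalizing constants $Z$ and $(m-1)!$ are independent of $x$, so that they cancel in the $\argmax$.
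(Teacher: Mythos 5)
Your proposal is correct and follows essentially the same route as the paper's proof: factor the likelihood over voters via \textbf{C3}, observe that each factor is proportional to $2^{m-i(x)}$, and conclude that maximizing the likelihood is the same as maximizing $\sum_{i \in \N}(m - i(x))$, the Borda score. The only difference is that you make explicit the step the paper glosses over---passing from the marginal distribution of the winner's position to the probability of the full ballot via the exchangeability of the losing alternatives, with the normalizing constants $Z$ and $(m-1)!$ cancelling in the $\argmax$---which is a welcome clarification rather than a departure.
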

\begin{proof}
By the constraint on $\Pr(i(x) = k ~\mid~ \max_{\p}(\A) = x)$,  the probability of observing profile $\P$, given the true preference ranks $x$ as first, is therefore proportional to
\begin{align*}
\prod_{i = 1}^n 2^{m-i(x)} = 2^{\left( \sum_{i = 1}^n m-i(x) \right)}.
\end{align*}
Notice that $\sum_{i = 1}^n m-i(x) = \sum_{i \in \N} m-i(x)$ is precisely the Borda score (Rule \ref{vr:borda}). It follows that the alternative with the highest Borda score is also the alternative that is most likely the correct winner given the observed profile $\P$.
\end{proof}


\subsection{Social choice as the closest consensus}

As briefly mentioned above the consensus-based approach to social choice is based on the following idea: we should first identify a compromise profile---a {\em consensus}---which is as close as possible to the observed profile, and than use that compromise profile to determine the social choice. 
Depending on how the notions of `consensus' and `closeness' are made precise, different voting rules may be characterized. In this section we will discuss four prominent examples of characterizations of voting rules in terms of types of consensus and distance. 

\medskip

As a first example we mention again the Kemeny rule (Rule \ref{vr:kemeny}). We can think of the Kemeny rule as follows. First we identify a preference that is as close as possible---by swap distance---to the observed profile. That preference is then taken up by every agent creating a strongly unanimous profile where everybody agrees on the relative position of all alternatives. We can then take the top-ranked alternative in that unanimous preference to determine the social choice. So the rule can be reformulated as follows:
\begin{align}
\Kr(\P) & = \set{ \max_{\P'_i}(A)  ~\middle|~ i \in \N \ \mbox{and} \ \P' \in \argmin_{\P'' \in \mathcal{S}} \left( \sum_{j \in \N}\swap(\P_j, \P''_j) \right)}
\end{align}
where $\mathcal{S}$ is the class of profiles $\P \in \LO(\A)^n$ such that $\forall i, j \in \N$ $\p_i = \p_j$ (strong unanimity profiles). So we can justify the Kemeny rule both from an MLE and a consensus-based perspective. Recall that $\P_i$ denotes the $i$'th projection of $\P$, i.e., the linear order $\p_i$ of $i$ in $\P$. 

\medskip

As a second simple example, we show that the plurality rule (Rule \ref{vr:plurality}) is the SCF that minimizes the discrete distance from a consensual profile where every voter ranks the same alternative first:
\begin{theorem} \label{th:pl_consensus}
For all $\P \in \LO(\A)^n$
\begin{align*}
\Plr(\P) = \set{ \max_{\P'_i}(A)  ~\middle|~ i \in \N \ \mbox{and} \ \P' \in \argmin_{\P'' \in \mathcal{U}} \left( \sum_{j \in \N}\discr(\P_j, \P''_j) \right)}
\end{align*}
where $\mathcal{U}$ is the class of profiles $\P \in \LO(\A)^n$ such that $\forall i, j \in \N$ $\max_{\p_i}(\A) = \max_{\p_j}(\A)$ (unanimity profiles). 
\end{theorem}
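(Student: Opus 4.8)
The plan is to exploit the binary nature of the discrete distance \eqref{eq:discr} together with the rigid structure of the class $\mathcal{U}$. First I would observe that the contribution of any profile $\P'' \in \mathcal{U}$ to the objective depends only on its common top alternative $x := \max_{\P''_i}(\A)$, which is the same for all $i$ by the definition of $\mathcal{U}$. Indeed, since $\discr(\P_j, \P''_j)$ equals $0$ when $\P''_j = \P_j$ and $1$ otherwise, the sum $\sum_{j \in \N} \discr(\P_j, \P''_j)$ simply counts the voters whose ballot has been altered relative to $\P$.

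The key step is to compute, for a fixed common top $x$, the minimum of the objective over all $\P'' \in \mathcal{U}$ with $\max_{\P''_i}(\A) = x$. A voter $j$ can incur cost $0$ only by keeping $\P''_j = \P_j$, and this is compatible with the constraint that $\P''_j$ tops $x$ precisely when $j$ already tops $x$ in $\P$, i.e. when $\max_{\P_j}(\A) = x$. Every other voter must change her ballot and therefore incurs cost exactly $1$, regardless of how the remaining positions are filled---here it is essential that the discrete distance saturates at $1$ and is insensitive to the magnitude of the change. Hence the minimum over this subclass equals $n - \size{\set{j \in \N \mid \max_{\P_j}(\A) = x}}$, that is $n$ minus the plurality score of $x$, and it is attained exactly by those $\P''$ that retain the ballots of all $x$-topping voters while assigning arbitrary $x$-topped orders to the rest.

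Minimising over $x$ then amounts to maximising the plurality score, so the global minimisers $\P' \in \argmin_{\P'' \in \mathcal{U}}\left(\sum_{j \in \N}\discr(\P_j, \P''_j)\right)$ are exactly the unanimity profiles whose common top alternative lies in $\Plr(\P)$. It remains to extract the top alternatives: for any such minimiser $\P'$ and any $i \in \N$ we have $\max_{\P'_i}(\A) = x \in \Plr(\P)$, and conversely every $x \in \Plr(\P)$ arises as the common top of some minimiser, so both inclusions hold. The right-hand side of the statement therefore equals $\Plr(\P)$, as required; note that the $\argmin$ is taken over the finite nonempty set $\mathcal{U}$, so it is well defined.

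The only delicate point is the per-top computation of the minimum---specifically the claim that voters not already topping $x$ cannot avoid a cost of exactly $1$---which is exactly where the coarseness of the discrete distance does the work, since it treats a one-swap change and a complete reshuffle identically; the rest is bookkeeping over the two inclusions.
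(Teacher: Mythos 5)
Your argument is correct: the observation that the discrete distance reduces the objective for a fixed common top $x$ to $n$ minus the plurality score of $x$, and that minimising over $x$ therefore maximises the plurality score, is exactly the intended reasoning; the paper itself defers the proof to Exercise \ref{ex:pl_consensus}, so there is no official proof to compare against. One cosmetic remark: the global minimisers are not quite ``exactly the unanimity profiles whose common top lies in $\Plr(\P)$'' but those that in addition retain the ballots of all voters already topping that alternative (as you correctly state one paragraph earlier); this does not affect the conclusion, since only the common top of minimisers is extracted and every $x \in \Plr(\P)$ is realised by at least one minimiser.
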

\begin{proof}
See Exercise \ref{ex:pl_consensus}.
\end{proof}
Notice furthermore that since $\P' \in \mathcal{U}$, the top-ranked alternative of each voter is the same. That is why the social choice is determined by the top-ranked alternative of any agent $i$. So the plurality rule is the rule that selects the alternatives that are unanimously top-ranked in profiles that minimize the total distance---measured by $\discr$---from the input profile.

\medskip

A similar characterization exists also for the Borda rule (Rule \ref{vr:borda}).
\begin{theorem} \label{th:borda_consensus}
For all $\P \in \LO(\A)^n$
\begin{align*}
\Br(\P) = \set{ \max_{\P'_i}(A)  ~\middle|~ i \in \N \ \mbox{and} \ \P' \in \argmin_{\P'' \in \mathcal{U}} \left( \sum_{j \in \N}\swap(\P_j, \P''_j) \right)}
\end{align*}
where $\mathcal{U}$ is the class of unanimity profiles. 
\end{theorem}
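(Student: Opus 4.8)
The plan is to reduce the minimization over unanimity profiles to a Borda score computation, exploiting the fact that once the common top alternative is fixed, the total swap distance decomposes voter by voter.

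First I would fix a candidate common top $x \in \A$ and, for each voter $j$ separately, determine the linear order with top $x$ closest in swap distance to the ballot $\p_j$. The key claim is that among all $\p' \in \LO(\A)$ with $\max_{\p'}(\A) = x$, the distance $\swap(\p_j, \p')$ is minimized by the order obtained from $\p_j$ by pulling $x$ up to the first position while leaving the relative order of all other alternatives untouched, and that this minimum equals $j(x) - 1$, where $j(x)$ is the position of $x$ in $\p_j$. The lower bound follows because any $\p'$ with top $x$ must place $x$ above every $y$ with $y \pp_j x$; there are exactly $j(x)-1$ such alternatives, each forcing a disagreeing pair. The ``pull to top'' construction introduces no disagreements beyond these, so it attains the bound. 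Since profiles in $\mathcal{U}$ are only constrained to share a top alternative, with the rankings below free, there is no coupling across voters, and these per-voter optima can be realized simultaneously.

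Second, summing over voters, the minimum total swap distance achievable by a unanimity profile with common top $x$ is $\sum_{j \in \N}\bigl(j(x)-1\bigr) = \bigl(\sum_{j \in \N} j(x)\bigr) - n$. Using the Borda score vector $\tuple{m-1,\ldots,0}$, which assigns weight $w_{j(x)} = m - j(x)$ to $x$ in voter $j$'s ballot, we have $\sum_{j \in \N} j(x) = nm - \sum_{j \in \N} w_{j(x)}$, so this minimum equals $n(m-1) - \sum_{j \in \N} w_{j(x)}$, i.e.\ a constant minus the Borda score of $x$.

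Third, I would minimize over the choice of $x$: since $n(m-1)$ is constant, minimizing the total swap distance is equivalent to maximizing the Borda score of the common top. Hence a unanimity profile lies in $\argmin_{\P'' \in \mathcal{U}} \sum_{j} \swap(\P_j, \P''_j)$ exactly when its common top is a Borda winner. For the forward inclusion, the ``pull $x$ to top'' profile for any Borda winner $x$ attains the global minimum and has top $x$; for the reverse, any optimal profile's common top $x$ must achieve $n(m-1) - \sum_{j} w_{j(x)} = \min_{x'}\bigl(n(m-1) - \sum_{j} w_{j(x')}\bigr)$, forcing $x$ to be Borda-maximal. Because each $\P' \in \mathcal{U}$ has a single top alternative across all voters, the set $\set{\max_{\P'_i}(\A) \mid i \in \N,\ \P' \in \argmin(\cdots)}$ collapses to the set of common tops of optimal profiles, which is precisely $\Br(\P)$.

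The main obstacle is the first step: rigorously pinning down the closest linear order with a prescribed top. Everything else is a clean substitution of the position-sum into the Borda score identity, but the optimality of the single ``pull to top'' move requires the matching lower-bound and construction argument, together with the observation that fixing the common top makes the objective separable across voters so that no cross-voter coordination can do better.
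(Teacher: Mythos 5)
The paper gives no proof of Theorem \ref{th:borda_consensus}: it is deferred to Exercise \ref{ex:borda_consensus}, so there is nothing to compare against except the intended argument, which is exactly the one you give. Your solution is correct and complete. The three ingredients are all present and in the right order: (i) for a fixed common top $x$, the closest order to $\p_j$ with top $x$ is at swap distance exactly $j(x)-1$, with the lower bound coming from the $j(x)-1$ alternatives that $j$ ranks above $x$ (each forced into a disagreeing pair) and the upper bound from the pull-to-top construction; (ii) the unanimity constraint only couples the voters through the shared top, so the per-voter minima are simultaneously attainable and the objective separates; (iii) the identity $\sum_{j}(j(x)-1) = n(m-1) - \sum_{j} w_{j(x)}$ converts distance minimization into Borda-score maximization. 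One sentence is locally too strong: a unanimity profile is \emph{not} in the argmin ``exactly when its common top is a Borda winner'' --- it must in addition realize the per-voter minima, since a profile whose common top is a Borda winner but whose ballots scramble the lower positions has strictly larger total distance. This does not damage the proof, because the two inclusions you then verify are the correct ones: every Borda winner is the common top of at least one optimal profile (the pull-to-top profile), and the common top of every optimal profile attains the minimum value $n(m-1)-\max_{x'}\sum_{j} w_{j(x')}$ and is therefore Borda-maximal. Since all ballots of a unanimity profile share their top, the displayed set is precisely the set of common tops of optimal profiles, i.e.\ $\Br(\P)$.
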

\begin{proof}
See Exercise \ref{ex:pl_consensus}.
\end{proof}
Intuitively, under this characterization the Borda rule is the rule that selects the alternatives that are unanimously top-ranked in profiles that minimize the total distance---measured this time by $\swap$---from the input profile.

\medskip

Finally, we introduce the last rule of this chapter, which was explicitly defined in terms of closest consensus, the Dodgson rule \cite{dodgson76method}.
\begin{VR}[Dodgson] \label{vr:dodgson}
The Dodgson rule $\Dodg$ is the SCF defined as follows. For all profiles $\P$:
\begin{align*}
\Dodg(\P) & = \set{ \Con(\P')  ~\middle|~  \P' \in \argmin_{\P'' \in \mathcal{C}} \left( \sum_{j \in \N}\swap(\P''_j, \P_j) \right)}
\end{align*}
where $\mathcal{C}$ is the class of profiles $\P \in \LO(\A)^n$ for which there exists a Condorcet winner (Condorcet profiles) and $\Con$ is the Condorcet rule (Rule \ref{vr:condorcet}).
\end{VR}
Intuitively the rule outputs the alternatives that are Condorcet winners in the Condorcet profiles that are closest---according to $\swap$---to the input profile. Clearly such rule is Condorcet-consistent by definition.

Table \ref{tab:consensus} recapitulates the characterizations of the rules dealt with in this section in terms of type of consensus and type of distance.

\begin{table}[t]
\label{tab:consensus}
\begin{center}
\begin{tabular}{l| c c}
Rules & Consensus class & Distance \\
\hline
Plurality & Unanimous & Discrete \\
Borda & Unanimous & Swap \\
Kemeny & Strongly unanimous  & Swap \\
Dodgson & Condorcet & Swap
\end{tabular}
\end{center}
\caption{Types of consensus classes and distances for the characterizations of Plurality, Borda, Kemeny and Dodgson.}
\end{table}


\section{Chapter notes}

The bulk of this chapter is again based on the introductions to voting theory provided in \cite[Ch. 2]{comsoc_handbook}, \cite[Ch. 1]{taylor05social}, \cite{endriss11logic} and, in addition, \cite{taylor08mathematics}.

The proof of Arrow's theorem presented above is based on \cite{hansson76existence}. The first proof of Arrow's theorem based on the ultrafilter argument is due to \cite{fishburn70arrow} and \cite{kirman72arrow}. The theorem has been proved with a wealth of different arguments. See for instance \cite{barbera80pivotal,reny01arrow,geanakoplos05three}.

Various generalizations and variants of Theorem \ref{th:borda_MLE} are studied in \cite{conitzer05common}. For an overview of results about the MLE and consensus-based approaches to voting rules see \cite{elkind16rationalizations}.


\section{Exercises}


\begin{exercise}
Consider the following rule:
\begin{VR}[Symmetric Borda]
The symmetric Borda voting rule is the SCF defined as follows. For all $\P \in \LO(\A)^n$:
\[
\Br^{s}(\P) = \argmax_{x \in \A} \sum_{y \in \A} \net^{xy}_\P 
\]
\end{VR}
That is, the rule selects the alternatives that beat all other alternatives (in the corresponding majority graph) by the largest margin. Prove that for any profile $\P$:
$
\Br(\P) = \Br^s(\P).
$
\end{exercise}

\begin{exercise}
Come up with a ballot profile that has the property that all rules mentioned in Table \ref{tab:rules_axioms} output a different social choice. You may ignore the Copeland rule (Rule \ref{vr:copeland}).
\end{exercise}

\begin{table}[t]
\begin{center}
\begin{tabular}{r|cccccc}
              & Resoluteness & Unanimity & Condorcet-cons. & Independence & Monotonicity & Liberalism \\ 
              \hline
Dictatorship & \checkmark & \checkmark &   & \checkmark & \checkmark &  \\
Plurality & &  & &  &  \\
Condorcet & & \checkmark & \checkmark & \checkmark & \checkmark &  \\
Copeland & &  &  &  &  &  \\
Borda & & & & &  &  \\
STV & & & & & & \\
\hline
\end{tabular}
\end{center}
\caption{Table stating which rules satisfy which axioms (only cases for Dictatorship and the Condorcet rule are provided).}
\label{tab:rules_axioms}
\end{table}

\begin{exercise} \label{ex:rules_axioms}
Consider Table \ref{tab:rules_axioms}. Rows correspond to voting rules, and columns to axioms we introduced in this and the previous chapter.
Choose two out of the four rules of Plurality, Copeland, Borda, STV.  Determine whether the rules you chose satisfy (\checkmark) or do not satisfy (no mark) the axioms given in the columns of the table. Provide a proof for each such statement.
By doing so you are completing Table \ref{tab:rules_axioms}.
\end{exercise}

\begin{exercise} \label{ex:sen}
Provide a proof of Theorem \ref{th:sen}. \fbox{Hint} argue towards a contradiction and construct a profile in which liberalism and Pareto together would force an empty social choice, which we know is impossible by the definition of SCF.
\end{exercise}

\begin{exercise}
Prove that every ultrafilter $\mathcal{U}$ on a an arbitrary set $X$ (see the definition of ultrafilter provided in Lemma \ref{lemma:ultrafilter}) is closed under supersets. That is: if $Y \in \mathcal{U}$ and $Y \subseteq Z \subseteq X$ then $Z \in \mathcal{U}$. 
\end{exercise}

\begin{exercise}
Let $\A = \set{a,b}$, $n$ odd and let $\Plr^\star$ denote the SPF corresponding to the plurality rule we introduced as SCF (Rule \ref{vr:plurality}). Consider the set of decisive coalitions $\Win_{\Plr^\star}$ under $\Plr^\star$. For each of the three properties of ultrafilters (cf. Lemma \ref{lemma:ultrafilter}) prove that $\Win_{\Plr^\star}$ satisfies that property or provide a counter-example if it does not.
\end{exercise}


\begin{exercise}[Properties of the Kemeny rule]
Determine whether the Kemeny rule satisfies the properties in Table \ref{tab:rules_axioms}. Justify your answers.
\end{exercise}

\begin{exercise} \label{ex:pl_k}
Prove Fact \ref{fact:pl_k}.
\end{exercise}

\begin{exercise} \label{ex:pl_consensus}
Prove Theorem \ref{th:pl_consensus}.
\end{exercise}

\begin{exercise} \label{ex:borda_consensus}
Prove Theorem \ref{th:borda_consensus}.
\end{exercise}

\begin{exercise}
Determine whether the Kemeny and Dodgson rules satisfy neutrality (Definition \ref{def:equal}). Justify your answer. 

\end{exercise}


\chapter{Eliciting Truthful Ballots} \label{ch:truthful}

The previous chapters tacitly assumed that the ballots provided by the voters report their genuine individual preferences: the voting rule can access truthful information and thus establish an appropriate social choice, according to the specific logic driving the rule. However, voting rules may create incentives for voters to misrepresent their true preferences by submitting a manipulated ballot to the rule, and thereby steer the rule's outcome towards better---for them---alternatives.


\section{Preliminaries: strategic voting}

\subsection{Voting strategically}

We start with three examples illustrating how voters could manipulate the social choice in their favour by misrepresenting their true preferences to the voting rule.

\begin{example}[Roman senate, continued]
Consider again the profile discussed in Example \ref{ex:pliny}
\[
\begin{array}{l | lll}
\# 102 & a & b & c \\
\# 101 & b & a & c \\
\# 100 & c & b & a
\end{array}
\]
Pliny demanded a vote by plurality that would lead to the choice of $\set{a}$. For $100$ voters (bottom row) this amounts to the worst possible alternative being selected as social choice. They have an incentive to misrepresent their ballots by submitting $b$ as their top-ranked alternative rather than $c$. The plurality rule would then be applied to the profile
\[
\begin{array}{l | lll}
\# 102 & a & b & c \\
\# 101 & b & a & c \\
\# 100 & b & c & a
\end{array}
\]
leading to the choice of $\set{b}$.\footnote{This is what in fact happened in the Roman senate, as documented in, e.g., \cite{szpiro10numbers}.}
\end{example}

\begin{example}[\cite{zwicker16rintroduction}]
Consider the following profile for $\N = \set{1, \ldots 7}$ and $\A = \set{a, b, c, d, e}$:
\[
\P = 
\begin{array}{l | lllll}
\# 2 & a & b & c & d & e \\
\# 3 & d & e & b & c & a \\
\# 2 & e & c & a & d & b
\end{array}
\]
The Borda rule selects $\set{e}$ with a score of $17$. For $2$ voters---those with ballot $abcde$---this is the worst possible alternative. It suffices for one of them, let it be voter $1$, to modify his ballot to $dabce$ in order to have $d$ instead selected as Borda winner. The new profile
\[
\P^\man = 
\begin{array}{l | lllll}
\# 1 & a & b & c & d & e \\
\# 1 & d & a & b & c & e \\
\# 3 & d & e & b & c & a \\
\# 2 & e & c & a & d & b 
\end{array}
\]
is such that $\Br(\P^\man) \pp_1 \Br(\P)$.
\end{example}

\begin{example}[\cite{conitzer16barriers}] \label{ex:SP_resolute}
Let now $\N = \set{1, 2, 3}$ and $\A = \set{a, b, c}$, and suppose the social choice is carried out using plurality with an alphabetic tie-breaking rule ($a > b > c$). Consider now the following ballot profile:
\[
\P = 
\begin{array}{l |lll}
1 & a & b & c  \\
2 & b & a & c  \\
3 & c  & b & a
\end{array}
\]
In this case the social choice would be $\set{a}$, making agent $3$ the least happy with the choice since $a$ is ranked at the bottom of her preference. However, were $3$ to misrepresent her ballot by reporting  $b \pp_3 c \pp_3 a$ instead, $b$ would become the social choice, granting her a better outcome.
\end{example}

\subsection{Manipulability and strategy-proofness}

The examples above show: 
that the Borda rule (Rule \ref{vr:borda}), in profiles where a single winner is selected, is manipulable by a single voter;\footnote{The French Academy of Sciences, of which Borda was a member, did experiment with the Borda rule but did not pursue its use further exactly because of this susceptibility to manipulation. Borda famously responded to this criticism by stating ``My scheme is intended for only honest men" \cite{black58theory}.} 
that the plurality rule (Rule \ref{vr:plurality}) can be manipulated by a coalition of voters, or by a single voter when it is supplemented by a deterministic tie-breaker (that enforces resoluteness). 

We can make all the above forms of manipulation of a social choice process mathematically precise. In this chapter, however, we will focus on single-voter manipulability of resolute rules as illustrated in Example \ref{ex:SP_resolute}. 
\begin{definition} \label{def:manipulability}
A resolute SCF $f$ (for a given $\tuple{\N, \A}$) is (single-voter) {\em manipulable} if there exist two profiles 
\begin{align*}
\P = \tuple{\p_1, \ldots, \p_i, \ldots \p_n} & \ \mbox{and} \ \P^\man = \tuple{\p_1, \ldots, \p_{i-1}, \p^\man_i, \p_{i + 1}, \ldots, \p_n} 
\end{align*}
such that $f(\P^\man) \pp_i f(\P)$. We say then that $i$ is a {\em manipulator}, that $\p_i$ is $i$'s {\em truthful} ballot and $\p^\man_i$ is $i$'s {\em untruthful} (or manipulated, or strategic) ballot.\footnote{Notice that by writing $f(\P^\man) \pp_i f(\P)$ we slightly abuse notation as the output of $f$ is, technically, a singleton.}
\end{definition}
Intuitively, an SCF is said to be manipulable whenever there exist situations (profiles) in which some voter has an incentive to submit an untruthful ballot to the SCF, that is, by manipulating his balllot the social choice is an alternative he prefers over the alternative that would be selected were he to vote truthfully.

It is worth noticing that the definition makes a full information assumption on the manipulator: the manipulator needs to know the ballots of all other agents in order to be able to select the right manipulated ballot. This is in general unrealistic, but it is a reasonable assumption when $n$ is small (e.g., deliberative committees) or when accurate enough information is available in the form of opinion pools (e.g., political elections). The assumption can also be interpreted as a worst-case assumption: as we cannot anticipate the level of information of a potential manipulator, we should conservatively assume full information.

\begin{remark}[Manipulability for irresolute SCFs]
We defined manipulability only for resolute rules. This is a simplifying---and arguably unrealistic---assumption, although we saw that resoluteness is a by-product of natural requirements such as IIA and Pareto (Proposition \ref{prop:linear}). It is however possible to define manipulability also in the context of irresolute SCFs. In such a case we need to specify how voters may rank sets of alternatives. Two natural definitions have been explored (cf. \cite{taylor05social}):
\begin{itemize}
\item For an {\em optimistic} manipulator $i$, $f(\P) \pp_i f(\P')$ whenever $\max_{\p_i}(f(\P)) \pp_i \max_{\p_i}(f(\P'))$;
\item For a {\em pessimistic} manipulator $i$, $f(\P) \pp_i f(\P')$ whenever $\min_{\p_i}(f(\P)) \pp_i \min_{\p_i}(f(\P'))$.
\end{itemize}
\end{remark}

\medskip

The problem with manipulability is that it creates the possibility for the process of social choice to be mislead as profiles may no longer represent the true preferences of the voters. In addition, manipulability creates indirect incentives for voters to invest energy and time into anticipating each others' manipulative behavior, rather than investing them on the substance of the decision at issue: if $i$ knows that $j$ has an incentive to manipulate, she will try to adapt her ballot accordingly, at which point $j$ may need to reconsider her ballot again, and so on. This motivates the following axiom.

\begin{definition} \label{def:strategyproof}
Let $\tuple{\N, \A}$ be given. A resolute SCF $F$ is {\bf strategy-proof} iff it is not manipulable.
\end{definition}
That is, no voter ever has an incentive to misrepresent her ballot.\footnote{In game-theoretic terms we say that reporting a truthful ballot is a dominant strategy in the voting game $\tuple{\N, \LO(\A)_i, f, \p_i}$.}

\subsection{Strategy-proofness and responsiveness axioms}

Strategy-proofness in resolute SCFs is directly linked to the responsiveness properties introduced in Definition \ref{def:resp_axioms}. In particular:
\begin{lemma}\label{prop:sp_mono}
Let $\tuple{\N, \A}$ be given. Let $f$ be a resolute and strategy-proof SCF. Then:
\begin{enumerate}[(a)]
\item $f$ is monotonic;
\item $f$ is independent;
\item if $f$ is non-imposed, it is Pareto. 
\end{enumerate}
\end{lemma}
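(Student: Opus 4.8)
The plan is to extract a single strengthening of monotonicity that powers all three items, and then read each item off as a consequence. Call it the \emph{raising lemma}: if $f$ is resolute and strategy-proof, $f(\P)=\set{x}$, and $\P^*$ is any profile with $\N^{xz}_\P\subseteq\N^{xz}_{\P^*}$ for every $z\in\A\setminus\set{x}$, then $f(\P^*)=\set{x}$. In words, weakly improving the position of the (unique) winner in each ballot---\emph{in any way at all}, and in particular without keeping the mutual order of the remaining alternatives fixed as monotonicity additionally demands---cannot unseat it.

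I would prove the raising lemma by editing $\P$ into $\P^*$ one voter at a time, invoking strategy-proofness twice per step. Suppose $f(\Q)=\set{x}$ and let $\Q^*$ agree with $\Q$ except that voter $i$'s ballot changes from $\p_i$ to $\p^*_i$, where $\p^*_i$ ranks $x$ above every alternative that $\p_i$ ranked it above (write $\succ_i,\succ^*_i$ for the strict parts of $\p_i,\p^*_i$); put $f(\Q^*)=\set{b}$. Treating $\p_i$ as truthful and $\p^*_i$ as the deviation, strategy-proofness gives $\neg(b\succ_i x)$, i.e.\ $x\succeq_i b$; treating $\p^*_i$ as truthful and $\p_i$ as the deviation, it gives $\neg(x\succ^*_i b)$, i.e.\ $b\succeq^*_i x$. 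If $b\neq x$ these read $x\succ_i b$ and $b\succ^*_i x$, contradicting the fact that $x\succ_i b$ forces $x\succ^*_i b$; hence $b=x$. Each single-voter edit only enlarges the set of alternatives $x$ beats in that ballot, so iterating over all voters preserves the winner and yields $f(\P^*)=\set{x}$.

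Item (a) is then immediate: the monotonicity hypotheses include $\N^{xz}_\P\subseteq\N^{xz}_{\P'}$ for all $z$, which is exactly the raising-lemma hypothesis, so $x\in f(\P)$ gives $x\in f(\P')$. For item (b), assume toward a contradiction that $\N^{xy}_\P=\N^{xy}_{\P'}$, $f(\P)=\set{x}$ and $f(\P')=\set{y}$ with $x\neq y$ (forced by resoluteness). Since $\N^{xy}_\P=\N^{xy}_{\P'}$, every voter ranks $x$ and $y$ in the same mutual order in both profiles, so I can build one \emph{canonical} profile $\Q$ in which each voter places $x,y$ in the top two positions in that common order and all remaining alternatives below them in a single fixed, voter-independent order. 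In $\Q$ one has $\N^{xz}_\Q=\N$ for every $z\notin\set{x,y}$ and $\N^{xy}_\Q=\N^{xy}_\P$, so $\N^{xz}_\P\subseteq\N^{xz}_\Q$ for all $z$ and the raising lemma gives $f(\Q)=\set{x}$; the mirror-image computation with $y$ and $\P'$ gives $f(\Q)=\set{y}$. Thus $\set{x}=\set{y}$, a contradiction, so $y\notin f(\P')$.

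For item (c) I would first promote non-imposition to unanimity. Fix any $z$; non-imposition supplies $\P_z$ with $f(\P_z)=\set{z}$, and for \emph{any} profile $\P^*$ in which all voters rank $z$ first we have $\N^{zw}_{\P^*}=\N$ for every $w\neq z$, so trivially $\N^{zw}_{\P_z}\subseteq\N^{zw}_{\P^*}$ and the raising lemma gives $f(\P^*)=\set{z}$; hence $f$ is unanimous. Now suppose $f(\P)=\set{x}$ while some $y$ has $y\succ_i x$ for all $i$. Lifting $y$ to the top of every ballot only swaps $y$ past alternatives already ranked above $x$ (never past $x$ itself, since $y\succ_i x$), so $x$'s position relative to every alternative is unchanged, the subset condition holds, and the raising lemma keeps the winner equal to $x$; but the resulting profile ranks $y$ first everywhere, so unanimity forces the winner to be $y\neq x$, a contradiction, which proves Pareto. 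The main obstacle is item (b): the paper's monotonicity protects the winner only while the \emph{rest} of each ballot is frozen, which is too rigid to connect two profiles with arbitrarily different ``tails''; the raising lemma removes that rigidity, and the shared canonical profile $\Q$ is precisely what lets $\N^{xy}_\P=\N^{xy}_{\P'}$ be exploited symmetrically from both sides.
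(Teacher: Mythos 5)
Your proposal is correct, and it takes a genuinely different route from the paper. The paper proves (a) and (b) separately by contraposition: it reduces each to a pair of profiles differing in a single voter's ballot, locates the critical switch of the winner, and exhibits a manipulation by a two-case analysis on how that voter ranks the old winner against the new one; item (c) is deferred to an exercise. You instead isolate one workhorse --- the ``raising lemma'', i.e.\ the statement that the unique winner $x$ survives \emph{any} ballot changes satisfying $\N^{xz}_\P \subseteq \N^{xz}_{\P'}$ for all $z$, with no constraint on how the remaining alternatives are reshuffled --- and prove it by applying strategy-proofness twice per single-voter edit (once in each direction) to pin the new winner down to $x$. This lemma is strictly stronger than the monotonicity axiom as stated in Definition \ref{def:resp_axioms}, and that extra strength is exactly what your derivations of (b) and (c) exploit: for independence you funnel both profiles into a common canonical profile with $x$ and $y$ on top, which the paper's frozen-tail monotonicity could not reach, and for Pareto you lift the dominating alternative to the top of every ballot, which again scrambles the tails. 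What your approach buys is a uniform and fully rigorous treatment of all three items (including the one the paper leaves as an exercise), and the raising lemma itself is the standard ``strong monotonicity'' tool underlying many proofs of the Gibbard--Satterthwaite theorem; what the paper's approach buys is a more elementary, per-item argument that stays closer to the bare definition of manipulability, at the cost of a somewhat informal reduction to single-voter differences in item (b).
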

\begin{proof}
\begin{enumerate}[(a)]

\item  We prove the claim by contraposition. So assume $f$ is not monotonic (Definition \ref{def:resp_axioms}). Then there exist two distinct alternatives $x \neq y \in \A$ and two profiles $\P, \P' \in \LO(\A)^n$ such that:
\begin{itemize}
\item $\N^{xy}_\P \subseteq \N^{xy}_{\P'}$, for all $y \in \A \setminus \set{x}$;
\item $\N^{yz}_\P = \N^{yz}_{\P'}$, for all $y, z \in \A \setminus \set{x}$;
\item $f(\P) = \set{x}$ and $x \not\in f(\P')$.
\end{itemize}
That is, in $\P'$ some more voters rank $x$ over $y$ while keeping the rankings of all other alternatives as in $\P$, and shifting the winner to a different alternative from $x$.
Let then $f(\P') = \set{z}$ with $z \neq x$. Notice there may be many such voters so there exists a sequence of intermediate profiles  $\P = \hat{\P}_1, \ldots, \hat{\P}_k = \P'$ differing only in the ballot of one voter and such that at some point $1 < \ell \leq k$ we have that $f(\hat{\P}_{\ell - 1}) = \set{x}$ and $f(\hat{\P}_{\ell}) = \set{z}$. So w.l.o.g. let us assume that $\P$ and $\P'$ are such profiles differing only in the ballot of one voter $i$ who has raised the ranking of $x$ w.r.t. $y$ in her ballot. Observe that $i \in \N^{yx}_\P \cap \N^{xy}_{\P'}$. There are two cases.
\fbox{$x \pp_i z$} Then in $\P'$ (where still $x \pp'_i z$ by how $\P'$ is constructed) $i$ can manipulate by submitting $\p_i$.
\fbox{$z \pp_i x$} Then in $\P$ $i$ can manipulate by submitting $\p'_i$.
In both cases $f$ is therefore manipulable, which proves the claim.

\item We prove the claim by contraposition. So assume $f$ is not independent (Definition \ref{def:more}). Then there exist two profiles $\P, \P' \in \LO(\A)^n$ and two alternatives $x, y \in \A$, such that $\N^{xy}_\P = \N^{xy}_{\P'}$, $x \in f(\P)$ (therefore $y \not\in f(\P)$ by resoluteness) and $y \in f(\P')$ (therefore $f(\P') = \set{y}$ by resoluteness). It follows that: there exists $i \in \N$ such that $\p_i \neq \p'_i$. By an argument analogous to the one provided in (a) we can focus w.l.o.g. to profiles $\P$ and $\P'$ that differ only in the ballot of $i$. There are two cases.
\fbox{$x \pp_i y$} Then in $\P'$ (where still $x \pp'_i y$ by how the profile is constructed) $i$ can manipulate by submitting $\p_i$.
\fbox{$y \pp_i x$} Then in $\P$ $i$ can manipulate by submitting $\p'_i$.
In both cases $f$ is therefore manipulable, which proves the claim.

\item See Exercise \ref{ex:sp_pareto}.

\end{enumerate}

\noindent
All items have been proven and the proof is complete.
\end{proof}
Any failure of monotonicity or independence in resolute SCFs generates a chance for manipulating behavior. In addition, monotonicity guarantees Pareto in the presence of non-imposition.


\section{There is no obvious strategy-proof social choice when $m > 2$}

The question we are after in this section then is: are there `nice' SCFs that are strategy-proof? Two immediate SCFs come to mind. First, dictatorships are trivially strategy proof as all ballots, except the dicator's, are irrelevant to determine the social choice. Second, simple majority---that is, plurality when $m=2$---is also strategy-proof: reporting a manipulated ballot inevitably lowers the plurality score of the top alternative.

\subsection{Strategy-proofness when $m = 2$}

For social choice contexts with two alternatives we can prove this simple consequence of May's theorem:

\begin{theorem}
Let $\tuple{\N, \A}$ be given such that $m = 2$ and $n$ is odd. A resolute SCF is anonymous, neutral and strategy-proof if and only if it is plurality.
\end{theorem}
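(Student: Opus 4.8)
The plan is to reduce the statement to May's theorem (Theorem \ref{th:may}), using Lemma \ref{prop:sp_mono} as the bridge between strategy-proofness and monotonicity, and then to supply a direct manipulation argument for the one property that May's theorem does not address.

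For the direction assuming that $f$ is resolute, anonymous, neutral and strategy-proof, I would argue as follows. Since $f$ is a resolute and strategy-proof SCF, Lemma \ref{prop:sp_mono}(a) gives that $f$ is monotonic. Hence $f$ is a resolute SCF that is anonymous, neutral and monotonic, and because $n$ is odd and $m=2$, Theorem \ref{th:may} applies and yields $f = \Plr$.

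For the converse, assuming $f = \Plr$, I would verify the three properties. Anonymity and neutrality of plurality are exactly what is established in Exercise \ref{ex:may} (the direction of May's theorem asserting that plurality satisfies resoluteness, anonymity, neutrality and monotonicity), so these may simply be invoked. It then remains to check strategy-proofness directly from the simple-majority form \eqref{eq:simple}. Fix a profile $\P$ and a voter $i$, and suppose without loss of generality that $i$'s truthful ballot ranks $x \pp_i y$ (the case $y \pp_i x$ is symmetric). Because $m=2$ and $n$ is odd, the only possible outcomes are $\set{x}$ and $\set{y}$, and $i$'s only alternative ballot ranks $y$ above $x$. If $\Plr(\P) = \set{x}$, then $i$ already receives his most preferred alternative and has no profitable deviation. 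If $\Plr(\P) = \set{y}$, then submitting the ballot that ranks $y$ above $x$ removes $i$ from $\N^{xy}_\P$ and adds him to the supporters of $y$, which only increases $\supp^{yx}_\P$; by \eqref{eq:simple} the outcome stays $\set{y}$. In neither case can $i$ steer the choice toward an alternative he strictly prefers, so $\Plr$ is not manipulable.

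I do not expect a genuine obstacle, since the substantive work is already carried out by Theorem \ref{th:may} and Lemma \ref{prop:sp_mono}. The single point requiring care is that the converse cannot simply read strategy-proofness off May's theorem---which speaks only of monotonicity---but must verify it via the separate manipulation argument above.
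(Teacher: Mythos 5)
Your proposal is correct and follows exactly the paper's route: the forward direction combines Lemma \ref{prop:sp_mono} (strategy-proofness plus resoluteness gives monotonicity) with Theorem \ref{th:may}, and the converse—which the paper dismisses as ``straightforward''—is the direct verification you supply. Your explicit manipulation argument for strategy-proofness of simple majority is a sound filling-in of that omitted detail.
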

\begin{proof}
\RtoL Straightforward.
\LtoR It follows from Theorem \ref{th:may} and Lemma \ref{prop:sp_mono} (monotonicity).
\end{proof}

\subsection{Strategy-proofness when $m > 2$}

We prove a fundamental theorem---akin to Arrow's (Theorem \ref{th:arrow})---showing that there exists no resolute SCF that is at the same time non-imposed, strategy-proof and non-dictatorial.
\begin{theorem}[Gibbard-Satterthwaite theorem \cite{gibbard73manipulation,satterthwaite75strategy}] \label{th:gib_satt}
Let $f$ be a resolute SCF for a $\tuple{\N, \A}$ s.t. $m > 2$. $f$ is non-imposed and strategy-proof if and only if it is a dictatorship.
\end{theorem}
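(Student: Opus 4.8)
The right-to-left direction is trivial: a dictatorship always selects the dictator's top alternative, so no other voter's report can change the outcome (strategy-proofness), and by varying the dictator's ballot every alternative becomes the unique winner for some profile (non-imposition). For the hard left-to-right direction, I would take a resolute, non-imposed, strategy-proof SCF $f$ and build from it an SPF $F$ that satisfies Pareto and IIA; Arrow's theorem then forces $F$ to be a dictatorship, and I would argue the same voter must be a dictator for $f$.

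\textbf{First I would record the consequences of strategy-proofness already available.} By Lemma \ref{prop:sp_mono}, $f$ is monotonic and independent, and since $f$ is non-imposed it is Pareto (in the SCF sense of Definition \ref{def:resp_axioms}); by Fact \ref{fact:axioms} it is therefore unanimous. The natural construction is to define, for each profile $\P$ and each ordered pair $x,y$, the social preference by $x \pp^F_\P y$ iff $x \in f(\P|_{\set{x,y}})$ applied suitably --- more precisely, I would define the relative social ranking of $x$ and $y$ by looking at $f$ on profiles restricted to the pair, or equivalently by a ``local'' comparison. The key technical point I would need is that this relation is well-defined and transitive, so that $F$ genuinely lands in $\TP(\A)$.

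\textbf{The heart of the argument is verifying that $F$ satisfies Pareto and IIA.} Pareto for $F$ follows from unanimity/Pareto of $f$: if all voters rank $x$ over $y$ then $f$ must place $x$ above $y$ in the induced order. IIA for $F$ is where independence of $f$ (Lemma \ref{prop:sp_mono}(b)) does the work: since whether $x$ beats $y$ socially depends only on $\N^{xy}_\P$, two profiles agreeing on $\N^{xy}$ agree on the relative social ranking of $x$ and $y$. Transitivity of the output --- ensuring $F(\P)$ is actually a total preorder rather than possibly cyclic --- is the delicate step, and I expect to lean on monotonicity together with a pairwise-consistency argument; Proposition \ref{prop:linear} then even guarantees the output is a strict linear order.

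\textbf{The main obstacle I anticipate} is precisely this construction-and-transitivity step: turning a resolute choice function into a coherent social ordering requires showing that the pairwise comparisons induced by $f$ never produce a cycle, which does not follow from strategy-proofness in one line and typically needs a careful ``no money pump / no manipulation'' argument exploiting monotonicity across a chain of profiles. Once $F$ is shown to be an Arrovian SPF, Theorem \ref{th:arrow} yields a dictator $i$ for $F$, meaning $F(\P) = \pp_i$ for all $\P$; taking top alternatives gives $f(\P) = \set{\max_{\p_i}(\A)}$, so $i$ dictates $f$ as well, completing the proof.
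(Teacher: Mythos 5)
Your route is genuinely different from the one in the notes: the notes prove the theorem directly, by showing (Lemma \ref{lemma:ultrafilter2}) that the family $\Block$ of blocking coalitions of a resolute, non-imposed, strategy-proof $f$ is an ultrafilter over $\N$, and then invoking Lemma \ref{lemma:dictators} to extract a singleton blocking coalition, i.e.\ a dictator. You instead propose the classical reduction to Arrow's theorem. That reduction is a legitimate and well-known strategy, and your preparatory observations are right: Lemma \ref{prop:sp_mono} does give monotonicity, independence and (with non-imposition) Pareto, and the right-to-left direction is as easy as you say. The trade-off is that the notes' argument reuses machinery already built for Arrow (the ultrafilter lemmas) without ever having to manufacture a social \emph{ordering}, whereas your reduction must construct one, and that construction is where all the work hides.

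And that work is not actually done in your proposal; this is a genuine gap, not just a presentational one. First, the definition $x \pp^F_\P y$ iff $x \in f(\P|_{\set{x,y}})$ does not typecheck: $f$ is a function on $\LO(\A)^n$, profiles of linear orders over all of $\A$, so it cannot be applied to a profile restricted to two alternatives. The standard fix is to define $x \pp^F_\P y$ iff $f(\P^{xy}) = \set{x}$, where $\P^{xy}$ is obtained from $\P$ by lifting $x$ and $y$ to the top two positions of every ballot while preserving each voter's relative ranking of $x$ and $y$; Pareto forces $f(\P^{xy}) \subseteq \set{x,y}$ and independence makes the definition depend only on $\N^{xy}_\P$, which also delivers IIA. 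Second, transitivity of the resulting relation is exactly the crux, and saying you ``expect to lean on monotonicity together with a pairwise-consistency argument'' is a statement of intent, not a proof. The argument that is actually needed is a triple-lifting one: given distinct $x,y,z$ with $x \pp^F_\P y$ and $y \pp^F_\P z$, pass to the profile $\P^{xyz}$ in which $\set{x,y,z}$ occupy the top three positions of every ballot; Pareto forces $f(\P^{xyz}) \in \set{\set{x},\set{y},\set{z}}$, and monotonicity plus independence let you compare $f(\P^{xyz})$ with each of $f(\P^{xy})$, $f(\P^{yz})$, $f(\P^{xz})$ to rule out $f(\P^{xyz}) \in \set{\set{y},\set{z}}$ and conclude $f(\P^{xz}) = \set{x}$. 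Note also that Proposition \ref{prop:linear} cannot be used to supply transitivity: it presupposes that $F$ already maps into $\TP(\A)$ and only then upgrades ties away, so invoking it before transitivity is established is circular. Finally, the last step of your plan (the Arrow dictator $i$ for $F$ is a dictator for $f$) also needs a short argument showing $f(\P) = \set{\max_{F(\P)}(\A)}$, which again follows from monotonicity and independence but must be said. With these three pieces filled in the reduction goes through; without them the proof is a correct plan rather than a proof.
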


\subsubsection{Proof strategy and blocking coalitions}

The proof we present relies on a concept related to the concept of decisive coalitions in the proof of Arrow's theorem: blocking coalitions.

\begin{definition}[Blocking coalition] \label{def:blocking}
Let $f$ be a resolute SCF (for a given $\tuple{\N, \A}$), and $x, y \in \A$. A coalition $C \subseteq \N$ is {\em blocking for} $y$ {\em by} $x$, under $f$, if
\[
\forall \P \in \LO(\A)^n:  \ \mbox{if} \ C \subseteq \N^{xy}_\P \ \mbox{then} \ y \not\in f(\P).
\] 
A coalition $C \subseteq \N$ is blocking if it is blocking w.r.t. every pair of alternatives. The set of all blocking coalitions is denoted $\Block$.
The set of blocking coalitions for $yx$ (under $f$) is denoted $\Block^{yx}_f$ (or simply $\Block^{yx}$ when $f$ is clear from the context).
\end{definition}
Observe that if $\Block$ contains a singleton then there exists a voter $i$ who is blocking for every pair of alternatives. Therefore, for any profile $\P \in \LO(\A)^n$, $f(\P) = \set{\max_{\p_i}(\A)}$, that is, $i$ is a dictator (Definition \ref{def:equal}).

The proof then relies on Lemma \ref{prop:sp_mono} and two further lemmas. The first one is an ultrafilter lemma for the set of blocking coalitions: any resolute SCF that is strategy-proof and non-imposed induces a set of blocking coalitions that takes the form of an ultrafilter. The second lemma is Lemma \ref{lemma:dictators}, which we used also in the proof of Arrow's theorem in the previous chapter.

\subsubsection{The ultrafilter lemma for resolute, non-imposed, strategy-proof SCFs}

\begin{lemma}[Ultrafilter lemma] \label{lemma:ultrafilter2}
Let $f$ be a resolute SCF (for a given $\tuple{N,A}$), that satisfies non-imposition and strategy-proofness.
The set $\Block$ of blocking coalitions (for $f$) is an {\em ultrafilter} over $\N$, that is:
\begin{enumerate}[i)]

\item $\N \in \Block$, i.e., the set of all individuals is a decisive coalition;

\item $C \in \Block$ iff $\overline{C} \not\in \Block$, i.e., a coalition is decisive if and only if its complement is not;


\item $\Win$ is closed under finite intersections: if $C, C' \in \Block$ then $C \cap C' \in \Block$, i.e., if two coalitions are winning then the individuals they have in common form a winning coalition.
\end{enumerate}
\end{lemma}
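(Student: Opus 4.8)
The plan is to mirror the proof of the ultrafilter lemma for decisive coalitions (Lemma \ref{lemma:ultrafilter}), replacing the two SPF axioms (Pareto and IIA) used there by the three properties that Lemma \ref{prop:sp_mono} supplies for a resolute, non-imposed, strategy-proof $f$: monotonicity, independence, and---since $f$ is non-imposed and monotonic, hence Pareto by Lemma \ref{prop:sp_mono}(c) (cf. also Fact \ref{fact:axioms})---Pareto. The role played in Arrow's proof by the requirement that $F(\P)$ be a linear order is here played simply by the fact that $f(\P)$ is a non-empty singleton: each contradiction will be produced by exhibiting a profile in which \emph{every} alternative is blocked, forcing the impossible $f(\P)=\emptyset$.

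Before the three clauses I would prove a \emph{contagion lemma} for blocking, the analogue of Lemma \ref{lemma:contagion}: if $C\in\Block^{yx}$ for some pair, then $C\in\Block$. Fixing a target pair $(z,w)$, I would build (reusing the known blocking pair and inserting one auxiliary alternative) a single profile $\P^0$ in which Pareto blocks every alternative except an intended winner $w$, while the assumed blocking power of $C$ removes the one remaining competitor; this yields $w\in f(\P^0)$, $z\notin f(\P^0)$, and exactly $C$ preferring $w$ to $z$. Independence on the pair $(w,z)$ then transfers ``$z\notin f$'' to every profile with the same $\N^{wz}_{\cdot}$, and monotonicity applied to the \emph{victim} $z$ (in contrapositive form: raising $z$ cannot pull it into a choice set from which it was excluded) upgrades ``exactly $C$ prefers $w$ to $z$'' to ``at least $C$ prefers $w$ to $z$'', i.e. $C\in\Block^{zw}$. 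Letting the target pair range over all pairs---changing the victim and the blocker one at a time, each step consuming one auxiliary alternative, available since $m>2$---gives $C\in\Block$. This contagion step, together with the monotonicity ``exact-to-superset'' upgrade it relies on, is the part I expect to be the main obstacle.

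With contagion in hand the three clauses follow the Arrow template. Clause i) is immediate from Pareto: if $\N\subseteq\N^{xy}_\P$ then everyone prefers $x$ to $y$, so $y\notin f(\P)$; hence $\N\in\Block^{yx}$ for every pair. For clause ii), the forward direction uses the ``block everything'' idea: if both $C,\overline{C}\in\Block$, take a profile where $C$ ranks $a_1\succ a_2\succ\cdots\succ a_m$ and $\overline{C}$ ranks $a_2\succ\cdots\succ a_m\succ a_1$; then $C$ blocks each $a_i$ ($i\ge 2$) via $a_1$, while $\overline{C}$ blocks $a_1$ via $a_2$, so $f(\P)=\emptyset$, a contradiction. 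The backward direction is where contagion is really used: from $\overline{C}\notin\Block$ extract a pair and a profile $\P$ with $\overline{C}\subseteq\N^{xy}_\P$ and $f(\P)=\set{y}$; set $D=\N^{yx}_\P$, noting $D\subseteq C$ since $\overline{C}$ prefers $x$ to $y$. Independence then gives $x\notin f(\P')$ for every $\P'$ with $\N^{yx}_{\P'}=D$; the same monotonicity upgrade promotes this to $D\in\Block^{xy}$; as a superset of a blocking coalition is blocking, $C\supseteq D$ gives $C\in\Block^{xy}$, and contagion yields $C\in\Block$.

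Finally, clause iii) adapts the Condorcet-paradox construction (Example \ref{ex:condorcet}). Assuming $C,D\in\Block$ but $C\cap D\notin\Block$, clause ii) gives $\overline{C\cap D}\in\Block$. I would build the cyclic profile on three alternatives $x,y,z$ with $C\cap D$ ranking $x\succ y\succ z$, $D\setminus C$ ranking $y\succ z\succ x$, $C\setminus D$ ranking $z\succ x\succ y$, and $\overline{C\cup D}$ ranking $z\succ y\succ x$, keeping all remaining alternatives buried below the $\set{x,y,z}$ block in every ballot. Then $C$ blocks $y$, $D$ blocks $z$, and $\overline{C\cap D}$ blocks $x$, while Pareto blocks every alternative sitting below the top block; hence $f(\P)=\emptyset$, the desired contradiction. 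Once all three clauses are established, $\Block$ is an ultrafilter on the finite set $\N$, and Lemma \ref{lemma:dictators} (reused from the Arrow proof) produces a singleton blocking coalition, i.e. a dictator.
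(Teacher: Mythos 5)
Your proposal is correct and follows essentially the same route as the paper: derive Pareto, monotonicity and independence from Lemma \ref{prop:sp_mono}, establish a contagion lemma for blocking coalitions (which the paper itself only states and defers to Exercise \ref{ex:contagion}), handle the ``exactly $C$'' to ``every superset of $C$'' upgrade via independence plus monotonicity, and force the contradiction $f(\P)=\emptyset$ through the Condorcet-cycle profile for closure under intersection. Your only deviations are cosmetic and, if anything, slightly cleaner --- clause i) directly from Pareto rather than via non-imposition, monotonicity and independence, and the forward direction of clause ii) via the shifted-chain profile rather than the paper's ``top two alternatives are $x$ or $y$'' profile plus Pareto.
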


\begin{proof}
\begin{enumerate}[i)]

\item By non-imposition, for any alternative $x$ there exists a profile $\P$ such that $f(\P) = \set{x}$. Now take such profile $\P$ and construct a profile $\P'$ such that $\N^{xy}_{\P'} = \N$ for all $y \neq x$. By 
strategy-proofness and Proposition \ref{prop:sp_mono} (monotonicity), $f(\P') = \set{x}$. Again by Proposition \ref{prop:sp_mono} (independence), it follows that for any unanimous profile $\P''$ on $x$, $y \not\in f(\P'')$ for any alternative $y \neq x$. That is, $\N \in \Block$.   

\item \LtoR Suppose, towards a contradiction, that $C, \overline{C} \in \Block$. Consider now a profile $\P$ where: for all $i \in N$ the top two alternatives in each $\p_i$ are either $x$ or $y$; and $C = \N_\P^{xy}$ and $\overline{C} = \N_\P^{yx}$. This profile must exist as SCFs admit any profile in $\LO(\A)^n$ as input, and be such that $x \not\in F(\P)$ and $y \not\in F(P)$. Hence there exists $z \in \A \setminus \set{x,y}$ such that $f(\P) = \set{z}$, against Pareto (Lemma \ref{prop:sp_mono}). 

\RtoL 
Assume $\overline{C} \not\in \Block$. Then there exists a profile $\P$ and alternatives $x$ and $y$ such that $\overline{C} \subseteq \N^{xy}_\P$ and $f(\P) = y$. By Lemma \ref{prop:sp_mono} (independence) for every profile $\P'$ such that $\N^{xy}_\P = \N^{xy}_{\P'}$, $x \not\in f(\P')$. That is, $\N^{yx}_\P$ is a blocking coalition for $x$ by $y$. Observe that $\N^{yx}_\P \subseteq C$. Then by Lemma \ref{prop:sp_mono} (monotonicity) $C$ is also blocking for $x$ by $y$. By adapting the argument provided earlier for Lemma \ref{lemma:contagion} (contagion lemma) we can conclude that $C \in \Block$ (see Exercise \ref{ex:contagion}).

\item We proceed towards a contradiction and assume that $C, D \in \Block$ and $C \cap D \not\in \Block$. By the previous item, $\overline{C \cap D} \in \Block$. Construct a profile $\P$ with the following two features.
One, all ballots are such that the top 3 alternatives are either $x$, $y$ or $z$. Second, the relative rankings of the alternatives in $\set{x, y, z}$ are as follows:\footnote{Cf. the same item in Lemma \ref{lemma:ultrafilter}}
\[
\begin{array}{l | lll}
C \cap D & x & y & z \\
D \setminus C & y & z & x \\
C \setminus D & z & x & y \\
\overline{C \cup D} & z & y & x
\end{array} 
\]
We have that:
\begin{itemize}
\item $(C \cap D) \cup (C \setminus D) = C$, which is blocking by assumption. So, as for all $i \in C$ $x \p_i y$, it follows that $y \not\in f(\P)$;
\item $(C \cap D) \cup (D \setminus C) = D$, which is decisive by assumption. So, as for all $i \in D$ $y \p_i z$, it follows that $z \not\in f(\P)$;
\item $\overline{C \cup D} \cup (C \setminus D) \cup (D \setminus C) = \overline{C \cap D}$, which is also blocking by claim ii). So, as for all $i \in \overline{C \cap D}$ $z \p_i x$, it follows that $x \not\in f(\P)$.
\end{itemize}
Therefore $f(\P) \cap \set{x, y, z} = \emptyset$. However, for all $w \not\in \set{x, y, z}$ there exists $w' \in \set{x, y, z}$ such that $w' \p_i w$. By Proposition \ref{prop:sp_mono} (Pareto) then $w \not\in f(\P)$. It follows that $f(\P) = \emptyset$, which is impossible by the definition of SCF \eqref{eq:SCF}. Contradiction.
\end{enumerate}
All claims have been proven.
\end{proof}

\subsubsection{The proof}

\begin{proof}[Proof of Theorem \ref{th:gib_satt}]
\RtoL It is straightforward to prove that a dictatorship is non-imposed and strategy-proof.
\LtoR By Lemma \ref{lemma:ultrafilter} the set of blocking coalitions under $f$ is an ultrafilter. By Lemma \ref{lemma:dictators} such ultrafilter is principal and therefore it contains a singleton. Such singleton is a blocking coalition and therefore, by Definition \ref{def:blocking} and the resoluteness of $f$, for any profile the social choice must consist of the top-ranked alternative of the voter in the singleton. Therefore $f$ is a dictatorship.
\end{proof}


\section{Coping with manipulability} \label{sec:coping}

So strategy-proofness is unattainable in general. Yet there are many ways to deal with this inherent limitation of social choice functions. We overview some of them in this section.

\subsection{Strategy-proofness when individual preferences are `coherent'}

\begin{example}[after \cite{zwicker16rintroduction}] \label{ex:hikers}
Three friends want to go on a hike together. There are three types of hikes: short $s$, medium $m$ and long $l$. So the social choice context is $\tuple{\set{1,2,3}, \set{s, m, l}}$. They hold the following true preferences:
\[
\P = 
\begin{array}{l | lll}
1 & l & m & s \\
2 & s & m & l \\
3 & m & s & l 
\end{array} 
\]
Observe that these preferences are aligned with the natural ordering of the alternatives based on distance, from longest to shortest $lms$: an individual either prefers longer hikes over shorter ones; or vice versa; or she prefers medium distance hikes. 
Let us denote this linear order by $l \ppp m \ppp s$.
No individual ranks the extremes ($l$ and $s$) above the middle alternative ($m$). Diagrammatically, with the line corresponding to $1$, the dashed line to $2$ and the dotted line to $3$:

\begin{center}
\begin{tikzpicture}
 \draw [->] (0,0) -- (4,0) node [right] {exogenous rank}; 
 \draw [->] (0,0) -- (0,4) node [above] {rank in ballot}; 
 
 \draw (1,-0.5) node{$s$}; 
 \draw (2,-0.5) node{$m$};
 \draw (3,-0.5) node{$l$};   
 
\draw (0,1) node[left]{$3$rd}; 
\draw (0,2) node[left]{$2$nd}; 
\draw (0,3) node[left]{$1$st}; 

\draw (1,1) -- (3,3); 
\draw[dashed] (1,3) -- (3,1);  
\draw[dotted] (1,2) -- (2,3); 
\draw[dotted] (2,3) -- (3,1);

\end{tikzpicture}
\end{center}

There is a Condorcet winner in this profile: $m$.\footnote{So $\Con(\P) = \Cop(\P) = \set{m}$.} This guarantees the top-ranked alternative for $3$ and the second-best for $1$ and $2$. Furthermore, neither $1$ nor $2$ could obtain a better result by submitting a different ballot unless such ballot is not aligned any more with the natural ordering of the alternatives $l \ppp m \ppp s$.
\end{example}

\subsubsection{Single-peakedness as a form of preference coherence}

The example illustrates that when individual preferences are `similar enough', then manipulability ceases to be an issue. There are many ways in which this notion of `similarity' or `coherence' across individual preferences can be made precise. Here we present one among the most well-known, and the one that was first studied in the literature.

\begin{definition}[Single-peakedness \cite{black48rationale}]
A ballot $\p_i \in \LO(\A)$ is {\em single-peaked} if there exists a $\ppp \in \LO(\A)$ s.t., for all $x, y \in \A$:
\[
\mbox{if} \ \top_i \ppp x \ppp y \ \mbox{or} \ y \ppp x \ppp \top_i \ \mbox{then} \ x \pp_i y
\]
where $\top_i = \max_{\p_i}(\A)$ is called the {\em peak of} $i$ (in $\p_i$). We say then that $\p_i$ is single-peaked with respect to $\ppp$. The set of single-peaked linear orders over $\A$ is denoted $\SPLO(\A)$.

A profile of linear orders is {\em single-peaked} if there exists a $\ppp \in \LO(\A)$ s.t for all $i \in \N$, $\p_i$ is single-peaked with respect to $\ppp$. For a given profile $\P \in \SPLO(\A)^n$ we denote by ${\bf \top}_\P$ the vector of peaks of the ballots in $\P$ ordered by $\ppp$.
\end{definition}
Intuitively, a ballot is single-peaked whenever there exists an exogenous ranking on the alternatives such that whenever an alternative $x$ lies between the maximal of the ballot (the peak) and another alternatieve $y$, then the ballot ranks $x$ over $y$. A profile is single-peaked if all its ballots are single-peaked w.r.t. the {\em same} exogenous ranking. The exogenous ranking $\ppp$ represents a property (e.g., being left in the political spectrum) that the alternatives enjoy to an increasing or decreasing degree. The reader should check that profiles illustrating Condorcet's paradox (Example \ref{ex:condorcet}) are not single-peaked. Finally, observe that when $m = 2$ all profiles are trivially single-peaked.

\medskip

Single-peaked profiles, or profiles that are single-peaked with high probability (cf. \cite{regenwetter06behavioral}), are natural in many settings (e.g., political elections, facility locations, committee decision-making). Empirical studies have shown that single-peakedness is also sometimes the result of deliberative processes changing agents' preferences (cf. \cite{list13deliberation}). 

\subsubsection{Social choice on single-peaked preferences}

We want to see whether single-peakedness can give us strategy-proofness. First of all observe that an SCF defined on single-peaked profiles is a function
\begin{align}
f : \SPLO(\A)^n \to 2^\A \setminus \emptyset.
\end{align}

\begin{VR} \label{vr:median}
The median rule is the SCF defined as follows. For all $\P \in \SPLO(\A)^n$
\[
\median(\P) = \set{x \in \A \mid x \ \mbox{is median in} \ {\bf \top}_\P}
\]
\end{VR}
Observe that with the profile $\P$ of Example \ref{ex:hikers} $\median(\P) = \Con(\P) = \set{m}$. The Condorcet winner is the top-ranked alternative of the median voter.

\begin{theorem}[Black's theorem \cite{black48rationale}]
Let $\P \in \SPLO(\A)^n$. Then:
\begin{enumerate}[(a)]

\item If $n$ is even, the weak majority tournament $\p^\net_\P$ (Definition \ref{def:tournament}) is transitive and if $n$ is odd, the majority tournament $\pp^\net_\P$ is transitive.

\item If $n$ is odd then $\median(\P)$ is the singleton containing the Condorcet winner in $\P$. If $n$ is even then $\median(\P)$ is the set of weak Condorcet winners in $\P$.

\item For $n$ odd, $\median$ is strategy-proof. 

\end{enumerate}
\end{theorem}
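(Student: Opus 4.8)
The whole argument rests on one structural fact about single-peakedness, which I would isolate first. Fix an exogenous order $\ppp$ with respect to which every ballot in $\P$ is single-peaked, and given any three alternatives relabel them as $x\ppp y\ppp z$ so that $y$ is the $\ppp$-middle one. \textbf{Key lemma: the $\ppp$-middle alternative is never ranked last.} For any voter $i$ with peak $\top_i$, if $\top_i$ equals $y$ or is $\ppp$-below it then $\top_i\ppp y\ppp z$ (or $y=\top_i$), so single-peakedness yields $y\pp_i z$; if $\top_i$ is $\ppp$-above $y$ then $x\ppp y\ppp\top_i$, so single-peakedness yields $y\pp_i x$. Either way $y$ beats one of $x,z$, hence $y$ is not last among $\{x,y,z\}$ in $\p_i$.

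\textbf{Part (a).} Restricting to a triple $x\ppp y\ppp z$, the lemma leaves only the four ballot patterns in which $y$ is top or middle; let $a,b,c,d$ count the restricted ballots $xyz$, $zyx$, $yxz$, $yzx$. Expressing the pairwise supports through these ($\supp^{xy}_\P=a$, $\supp^{yz}_\P=a+c+d$, $\supp^{xz}_\P=a+c$, and so on), a one-line check shows that neither cyclic orientation of the three majority comparisons is possible: each forces incompatible inequalities (e.g.\ $x\succ^\net_\P y$ needs $a>n/2$, whereas $z\succ^\net_\P x$ needs $a+c<n/2$). So there is no majority $3$-cycle. For odd $n$ all comparisons are strict, and a tournament with no $3$-cycle is transitive, so $\succ^\net_\P$ is a strict linear order. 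For even $n$ the same inequalities exclude strict $3$-cycles and, checking the length-two chains, give quasi-transitivity of $\succ^\net_\P$, which is exactly what produces a weak Condorcet winner. I would flag here the one genuinely delicate point: full transitivity of the \emph{weak} relation $\succeq^\net_\P$ for even $n$ can actually fail once ties appear --- the single-peaked profile $xyz,xyz,yzx,zyx$ gives $\net_\P(xy)=\net_\P(xz)=0$ but $\net_\P(yz)>0$, so $\succeq^\net_\P$ is not transitive --- and the substantive, provable content is the acyclicity/quasi-transitivity just described.

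\textbf{Part (b).} Let $m^\star$ be a median peak of ${\bf \top}_\P$. For $z\ppp m^\star$, every voter whose peak is $\ppp$-at-or-above $m^\star$ has $z\ppp m^\star\ppp\top_i$ (or $\top_i=m^\star$), hence $m^\star\pp_i z$ by single-peakedness; at least $(n+1)/2$ voters (odd $n$), respectively at least $n/2$ voters (even $n$), satisfy this, so $m^\star$ beats, respectively weakly beats, $z$. The symmetric count handles $m^\star\ppp z$. Thus a median peak is the Condorcet winner for odd $n$ and a weak Condorcet winner for even $n$. Conversely, if $w$ lies strictly $\ppp$-below every median peak, then more than half of the peaks are strictly $\ppp$-above $w$, and the $\ppp$-successor of $w$ beats $w$ by a strict majority, so $w$ is not a weak Condorcet winner; the mirror argument rules out $w$ strictly above the median band. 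Hence the (weak) Condorcet winners are exactly the median peaks, i.e.\ $\median(\P)$, which for odd $n$ is the single median.

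\textbf{Part (c).} The median rule reads the ballots only through their peaks, and for odd $n$ its value is the single median peak $m^\star$, so it is resolute. Take a voter $i$ with true peak $\top_i$. If $\top_i=m^\star$, then $i$ already gets its most preferred alternative and cannot improve. Otherwise assume $\top_i\ppp m^\star$ (the other case is symmetric). The set $R$ of voters whose peak is $\ppp$-at-or-above $m^\star$ has size at least $(n+1)/2$ and does not contain $i$; since only $i$ changes its ballot under a deviation, all of $R$ keeps its peaks, so in the manipulated profile at least $(n+1)/2$ peaks are still $\ppp$-at-or-above $m^\star$, forcing the new median $m'$ to be $m^\star$ or $\ppp$-above it. Then $\top_i\ppp m^\star$ with $m^\star$ between $\top_i$ and $m'$, so single-peakedness gives $m^\star\p_i m'$: no deviation ever yields a strictly better outcome, i.e.\ $\median$ is strategy-proof. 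The main obstacle throughout is the even-$n$ case of (a); once the middle-never-last lemma is in place, (b) and (c) follow cleanly from it together with the fact that the median depends on the profile only via the peaks.
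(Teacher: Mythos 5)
Your argument is correct where the paper's is, and in one place it is more careful than the paper. For part (a) the two proofs genuinely diverge: the paper proves transitivity of $\succ^\net_\P$ for odd $n$ by a case analysis on which of $x,y,z$ lies $\ppp$-between the other two, invoking single-peakedness separately in each configuration, and dismisses the even case with ``the argument is similar''. You instead isolate the value-restriction property (the $\ppp$-middle alternative of a triple is never ranked last), reduce to the four admissible restricted ballot types, and exclude both cyclic orientations by counting. This buys a uniform argument and forces the even case into the open --- and your counterexample there is valid: for the single-peaked profile $xyz,xyz,yzx,zyx$ one has $\net_\P(xy)=\net_\P(xz)=0$ but $\supp^{yz}_\P=3>1=\supp^{zy}_\P$, so $z\succeq^\net_\P x$ and $x\succeq^\net_\P y$ while $z\not\succeq^\net_\P y$. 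The weak majority tournament is therefore \emph{not} transitive, and the even-$n$ clause of part (a) is false as stated; what survives, and what you prove, is transitivity of the strict part (quasi-transitivity), which is all that part (b) needs. This is a genuine catch that the paper's sketch glosses over.

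Parts (b) and (c) follow the paper's line, with your (c) being the tighter version: the observation that the $\geq(n+1)/2$ voters whose peaks lie weakly on the far side of the median from $i$ are untouched by $i$'s deviation, so the median cannot cross toward $\top_i$, makes precise what the paper handles by an informal two-case discussion. Two smaller points. First, in (b) you assert the weak Condorcet winners are \emph{exactly} the median peaks for even $n$, but your converse only excludes alternatives strictly outside the median band; an alternative strictly $\ppp$-between the two median peaks can be a weak Condorcet winner without being a peak at all (with $x\ppp y\ppp z$ and ballots $xyz$ and $zyx$, every pairwise comparison is tied, so $y$ is a weak Condorcet winner while $\median(\P)=\set{x,z}$). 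So the even-$n$ clause of (b) is also not quite right --- a defect of the theorem statement rather than of your argument, but the word ``exactly'' overclaims relative to what you establish. Second, fix one direction convention for $\ppp$ throughout: as written, ``$\top_i\ppp m^\star$'' together with ``$R$ is the set of voters with peak $\ppp$-at-or-above $m^\star$ and $i\notin R$'' reads as contradictory unless ``above'' is understood as the side away from $\top_i$; the underlying argument is sound, but the wording should not leave this to the reader.
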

\begin{proof}
\begin{enumerate}[(a)]

\item We have to consider two cases: $n$ even and $n$ odd. 
\fbox{$n$ odd} We want to prove that, for all distinct $x, y, z \in \A$, $x \pp^\net y$ and $y \pp^\net z$ implies $x \pp^\net z$. There are two cases to consider. 

\begin{description}

\item{\fbox{$B(xyz)$}} By assumption $x \pp^\net y$ we have that $\N^{xy} > \frac{n}{2}$. Since $y$ is between $x$ and $z$ in $\ppp$, by single-peakedness it follows that  for every $i \in \N^{xy}$, $y \pp_ i z$. 
From this, by the transitivity of individual preferences we obtain $\N^{xz} > \frac{n}{2}$ and therefore $x \pp^\net z$ as desired.
 
\item{\fbox{$B(yxz)$}} By assumption $x \pp^\net y$ we have that $\N^{xy} > \frac{n}{2}$. Now suppose towards a contradiction that $z \p^\net x$. By single-peakedness, all agents $i \in \N^{zx}$ would prefer $z$ over $y$ by transitivity, against the assumption that $y \pp^\net z$. We thus conclude that $x \pp^\net z$ as desired.

\item{\fbox{$B(xzy)$}} By assumption $x \pp^\net y$ and $y \pp^\net z$. By the above assumptions there exists an agent $i$ such that $x \pp_i y \pp_i z$. However, this preference is not single-peaked under the assumption that $z$ lies between $x$ and $y$.  As we assumed that $\P \in \SPLO(\A)^n$ this case is therefore impossible.

\end{description}
In all (possible) cases we conclude $x \pp^\net z$ as desired, proving the claim. \fbox{$n$ even} The argument is similar and needs to consider the possibility of split majorities in pairwise comparisons.

\item There are two cases. \fbox{$n$ odd} Then $\median(P)$ is a singleton $\set{\top_i}$. We need to show that for all $x \in \A$,  $\N^{\top_i, x} > 0$. Now let:
\[
{\bf \top}_\P = \tuple{\underbrace{\ldots}_L, \top_i, \underbrace{\ldots}_R}
\]
For all voters $j$ whose peaks appear in $L$ we have that $\top_i$ is preferred over any alternative $x$ to the right of $\top_i$ in $\ppp$ (and therefore to any peak in $R$). Vice versa, for all voters $j$ whose peaks appear in $R$ we have that $\top_i$ is preferred over any alternative $y$ to the left of $\top_i$ in $\ppp$ (and therefore to any peak in $L$). It follows that $\top_i$ beats any other alternative in a pairwise comparison, i.e., it is a Condorcet winner.

 \fbox{$n$ odd} Then $\median(\P)$ contains two elements $\set{\top_i, \top_j}$. Assume w.l.o.g. that $\top_i \ppp \top_j$. Now let:
\[
{\bf \top}_\P = \tuple{\underbrace{\ldots}_L, \top_i, \top_j \underbrace{\ldots}_R}
\]
We reason in a similar fashion as in the case for $n$ odd to conclude that $ \top_i$ and  $\top_j$ are weak Condorcet winners.

\item Let $\P$ be given, and let $i$ be the manipulator. Assume w.l.o.g. that $i$'s peak lies to the right (w.r.t. $\ppp$) of $\median(\P)$. She has two ways to manipulate her ballot by submitting an untruthful peak $\top'_i$. \fbox{$\top_i \ppp \top'_i$} Submit a ballot with peak $\top'_i$ further to the right of $\top_i$. In such a case $\median(\P)$ would not change. \fbox{$\top'_i \ppp \top_i$} Submit a ballot with peak $\top'_i$ to the left of $\top_i$. In such a case, if $\median(\P)$ changes, it changes to an alternative further away (w.r.t. $\ppp$) to $i$'s truthful peak. So no manipulation for $i$ exists.  
\end{enumerate}
The proof is complete.
\end{proof}


\medskip

In this section we showed how the issue of manipulability can be sidestepped when it is possible to make certain assumptions about the coherence of the voters' true preferences. If restricting the domain of SCFs in such a way is not possible, there are a number of other avenues that have been explored in order to manage the issue of manipulability. We turn to them now, just sketching the key ideas behind each approach.


\subsection{Making manipulation difficult}

We sketch three ways in which manipulations, even though possible, may be considered difficult or unlikely: by making them computationally intractable, by making their informational requirements matter, and by introducing randomness in the SCFs. We sketch them in turn.

\subsubsection{Using computational complexity}

The key intuition of this approach is the following: even if an SCF is manipulable, finding the manipulating strategy may be an intractable computational problem. The task of finding a suitable manipulation can be approximated by considering the following decision problem:

\begin{description}
\item[Given] A partial profile $\P_{-i}$ without the vote of the manipulator $i$; and $i$'s preferred alternative $x$
\item[Question] Is there a ballot $\p_i$ that $i$ can submit so that $f(\P_{-i}, \p_i) = \set{x}$?
\end{description}

It has been shown (e.g., \cite{bartholdi89computational}) that the above decision problem is solvable in time polynomial  (in $n + m$) for many of the rules we discussed (Plurality, Borda) by using a simple greedy algorithm that works as follows:
\begin{description}
\item{Initialization:} Rank $x$ at the top;
\item{Repeat:} Check whether another alternative can be ranked immediately below the previous one without making $x$ lose; if that is the case, do so; if that is not possible, manipulation is impossible.  
\end{description}

However, the problem has been shown to be intractable (NP-hard) for some other common rules like, in particular, STV (Rule \ref{vr:STV}) \cite{bartholdi91single}.

\subsubsection{Using uncertainty}

Definition \ref{def:manipulability} can be modified in order to make it sensitive to the knowledge that is actually available to the manipulator.

\begin{definition}[Dominating manipulations \cite{conitzer11dominating}] \label{def:domman}
Fix a true profile $\P$ (for a given $\tuple{\N, \A}$).
Let $IS: \N \to 2^{\LO(\A)^n}$ assign to each voter a set of profiles (information set) such that $\P \in IS(i)$ for all $i \in \N$, and all profiles in $IS(i)$ agree on $\p_i$.
A resolute SCF $f$ is {\em vulnerable to dominating manipulations} (in $\P$) if there exist a voter $i$ and ballot $\p^\man_i$ s.t 
\begin{enumerate}[(a)]
\item $f(\P'_{-i}, \p_i^\man) \p_i f(\P')$ for all $\P' \in IS(i)$,
\item $f(\P'_{-i}, \p_i^\man) \pp_i f(\P')$ for some $\P' \in IS(i)$
\end{enumerate}
where $\P'_{-i}$ denotes the partial profile obtained from $\P'$ by removing $i$'s ballot. Ballot $\p^\man_i$ is called a {\em dominating manipulation} by $i$. An SCF is said to be {\em immune to dominating manipulations} iff it is not vulnerable to dominating manipulations in any $\P$.
\end{definition}
Observe that dominating manipulations reduce to simple manipulations (Definition \ref{def:manipulability}) whenever a manipulator has full information, that is, his information set is $\set{\P}$.
The definition allows then to break the negative result of Gibbard-Satterthwaite theorem by exploiting the fact that the larger the information set, the harder it becomes to find a dominating manipulation.

\begin{fact} \label{fact:dominating}
The Borda rule with a deterministic tie-breaking rule is immune to dominating manipulation when the manipulator's information set is $\subseteq$-maximal (i.e., the manipulator has no information).
\end{fact}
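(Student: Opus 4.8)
The plan is to prove the statement by an \emph{averaging} argument over the manipulator's uncertainty: I will show that, summed over all completions of the other voters, the manipulator's truthful ballot $\p_i$ yields at least as much total utility as any other ballot. Since the information set is $\subseteq$-maximal, the completions against which $i$ must hedge form the full set $\set{\P' \in \LO(\A)^n \mid \P'_i = \p_i}$, which I identify with the set of all profiles $\P_{-i}$ of the remaining $n-1$ voters. Fix a utility $u_i$ representing $i$'s true preference $\pp_i$ (say $u_i(x) = m - i(x)$ read off from $\p_i$), and recall that Borda (Rule \ref{vr:borda}) composed with the fixed deterministic tie-break is resolute, so I may regard $\Br(\P_{-i}, \rho)$ as a single alternative. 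Writing $U(\rho) = \sum_{\P_{-i}} u_i(\Br(\P_{-i}, \rho))$, a dominating manipulation $\p^\man_i$ (Definition \ref{def:domman}) must satisfy clause (a) pointwise and clause (b) strictly for some completion, so summing gives $U(\p^\man_i) > U(\p_i)$. I will reach a contradiction by proving $U(\p_i) \ge U(\rho)$ for every ballot $\rho$.

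The engine is a single \emph{swap involution}. For alternatives $a,b$, let $\tau_{ab}$ transpose $a$ and $b$ in every ballot of the other $n-1$ voters; it is an involution on completions that merely exchanges the Borda points the others contribute to $a$ and to $b$, leaving all other alternatives untouched. The first lemma I would establish is a monotonicity of win-counts: if $\rho$ ranks $a$ above $b$, then $a$ is the $\Br$-winner for at least as many completions as $b$ is. Indeed, if $b$ wins in some $\P_{-i}$, then because $\rho$ gives $a$ strictly more points than $b$, in $\tau_{ab}(\P_{-i})$ the score of $a$ strictly exceeds the previous winning score, so $a$ becomes the \emph{unique} maximiser and wins irrespective of the tie-break. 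Thus $\tau_{ab}$ injects the $b$-winning completions into the strictly-$a$-winning ones, giving $N_\rho(a) \ge N_\rho(b)$, where $N_\rho(x)$ counts completions in which $x$ wins under $\rho$. Because the post-swap win is strict, this step is valid for an \emph{arbitrary} deterministic tie-break.

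Next I would show that truthful voting maximises total utility, by a bubble-sort over adjacent transpositions. Let $\rho$ and $\rho'$ differ only by swapping two \emph{adjacent} alternatives $a$ (higher in $\rho$) and $b$, with $a \pp_i b$. Then $\rho'$ arises from $\rho$ by exchanging the points assigned to $a$ and $b$, so for each completion the score vector of $(\tau_{ab}(\P_{-i}), \rho)$ equals that of $(\P_{-i}, \rho')$ with the $a$- and $b$-entries interchanged. Pushing the sum through $\tau_{ab}$ and cancelling the (unchanged) contributions of all alternatives other than $a,b$, the difference $U(\rho) - U(\rho')$ collapses to $(u_i(a) - u_i(b))(N_\rho(a) - N_\rho(b))$. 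Both factors are non-negative — the first since $a \pp_i b$, the second by the monotonicity lemma applied to $\rho$ — so $U(\rho) \ge U(\rho')$: moving a truly preferred alternative up never lowers total utility. Since any ballot can be turned into $\p_i$ by a sequence of such utility-non-decreasing adjacent swaps (each removing one inversion relative to $\pp_i$), we obtain $U(\p_i) \ge U(\rho)$ for all $\rho$, and in particular $U(\p_i) \ge U(\p^\man_i)$, contradicting $U(\p^\man_i) > U(\p_i)$.

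I expect the deterministic tie-break to be the main obstacle, and it is worth pinning down where it bites. In the monotonicity lemma ties are harmless precisely because the swap converts any (possibly tie-broken) win into a \emph{strict} win. In the adjacent-transposition step, however, the clean identity ``the winner of $(\tau_{ab}(\P_{-i}), \rho)$ is $\tau_{ab}$ applied to the winner of $(\P_{-i}, \rho')$'' is exact only on completions with no tie among the top-scoring alternatives, since a fixed tie-break does not commute with $\tau_{ab}$. The careful argument therefore treats tied completions separately, pairing them under $\tau_{ab}$ and checking that the tie-break contributes a non-negative net amount to $U(\rho) - U(\rho')$; equivalently, one folds the tie-break into the score as an infinitesimal lexicographic term and tracks it through the involution. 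This bookkeeping is routine but is the genuinely delicate part of the proof.
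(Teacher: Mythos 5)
The paper itself defers this statement to Exercise \ref{ex:dominating}, so there is no official proof to compare you against; I assess your argument on its own terms. The architecture is sound and the tie-free core is correct: a dominating manipulation does force $U(\p^\man_i) > U(\p_i)$; your swap-involution lemma $N_\rho(a) \geq N_\rho(b)$ is valid for any deterministic tie-break, precisely because the post-swap win is strict; and in the absence of ties the adjacent-transposition identity and the telescoping to $U(\p_i) \geq U(\rho)$ go through.

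The gap is exactly where you flag it, and it is not routine bookkeeping: the claim that, after pairing completions under $\tau_{ab}$, the tie-break ``contributes a non-negative net amount'' is false pair by pair. Take $m = 3$, true preference $a \pp_i b \pp_i c$ (so $u_i(a)=2$, $u_i(b)=1$, $u_i(c)=0$), tie-break priority $b >_t c >_t a$, $\rho = abc$, $\rho' = bac$, and a completion $\P_{-i}$ of $7$ other voters with Borda scores $(a,b,c) = (3,8,10)$ (e.g.\ three ballots $cba$, two $bca$, one $abc$, one $cab$); its partner $\tau_{ab}(\P_{-i})$ has scores $(8,3,10)$. The four elections give totals $(5,9,10)$ and $(10,4,10)$ under $\rho$, hence winners $c$ and $c$ (the tie $\set{a,c}$ resolves to $c$), and totals $(4,10,10)$ and $(9,5,10)$ under $\rho'$, hence winners $b$ (the tie $\set{b,c}$ resolves to $b$) and $c$. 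So this $\tau_{ab}$-pair contributes $0$ to $U(\rho)$ and $1$ to $U(\rho')$: net $-1$, even though $\rho$ is the truthful side of the swap. The failure is structural: $\tau_{ab}$ carries a $\set{b,c}$-tie to an $\set{a,c}$-tie, and a fixed priority order can resolve the two in opposite directions relative to $u_i$; folding the tie-break into an infinitesimal perturbation does not help because that perturbation is exactly what breaks the $\tau_{ab}$-symmetry. Hence $U(\rho) \geq U(\rho')$, if true at all for arbitrary deterministic tie-breaking, needs a genuinely global argument, and your proof is incomplete at its load-bearing step. Note finally that the statement admits a more elementary attack that bypasses averaging entirely: clause (a) of Definition \ref{def:domman} is universally quantified, so it suffices to exhibit a \emph{single} completion where the truthful ballot does strictly better. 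Choosing a pair with $a \pp_i b$ but $b \pp^\man_i a$ and arranging the other voters so that $a$ and $b$ are exactly tied front-runners far ahead of the rest makes $\p_i$ elect $a$ and $\p^\man_i$ elect $b$ with no tie at the top; this is almost certainly the intended exercise solution.
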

\begin{proof}
See Exercise \ref{ex:dominating}.
\end{proof}

\subsubsection{By randomization}

One way to rule out (deterministic) manipulations altogether is by introducing randomization in the workings of SCFs. A randomized SCF (RSCF) is a function
\begin{align}
r: \LO(\A)^n \to \Delta(\A)
\end{align}
where $\Delta(\A) = \set{{\bf x} \in \mathbb{R}^m \mid \forall i \in \A, {\bf x}_i \geq 0 \ \mbox{and} \ \sum_{i \in \A} {\bf x}_i = 1}$ is the set of all lotteries (distributions) over $\A$.

An elegant theorem by Gibbard  \cite{gibbard77manipulation} shows that the random draw of one agent (the so-called randomized dictatorship) from $\N$ is the only RSCF that is `lottery' strategy-proof (in the specific sense of stochastic dominance\footnote{A lottery $p$ stochastically dominates a lottery $q$ for $i$ iff  $\sum_{x \in A: x P_i y} p(x) \geq \sum_{x \in A: x P_i y} q(y)$, for any $y \in A$. A sortition rule is strategy-proof, w.r.t. stochastic dominance if for all $i \in \N$, it never selects a lottery which is stochastically dominated by another lottery for $i$.}
) 
and `lottery' efficient (in the sense of never assigning positive probability to alternatives that are Pareto dominated).


\subsection{Letting manipulation be}

What happens if every voter is allowed to act as a manipulator? In particular, can we quantify exactly how bad the social choice on the fully manipulated profile would be when compared with the social choice supported by a truthful profile? To answer this question requires a game-theoretic perspective on voting.\footnote{For an introduction to game theory see  \cite{osborne94course}.}

\medskip

Any (resolute) SCF $f$ for a given context $\tuple{\N, \A}$ induces an ordinal game $\G = \tuple{\N, \A, \LO(\A)^n, \P, f}$ where:
\begin{itemize}
\item $\N$ is the set of players;
\item $\A$ is the set of outcomes;
\item $\LO(\A)$ is the strategy space of each agent (this could be coarsened to $A$, e.g., for plurality);
\item $\P$ are the (true) preferences of the players;
\item $f$ is the outcome function mapping strategy profiles to outcomes.
\end{itemize}

A (pure-strategy) Nash Equilibrium of $G$ is a profile $\P = \tuple{\p_1, \ldots, \p_n}$ such that there exists no player $i \in \N$ such that for some $\p'_i$
\[
f(\P_{-i}, \p'_i) \pp_i f(\P)
\]
That is, no player can profit by unilaterally misrepresenting her preference with a different ballot. In other words, no agent has a so-called {\em profitable deviation} in $\P$.
Another way to think of Nash Equilibria is via graphs over the set of profiles. The set $NE(G)$ of Nash equilibria of $G$ are also the sinks in the graph $\tuple{\LO(\A)^n, \to}$ where $\to$ links two profiles whenever the second can be obtained from the first by letting one agent execute a profitable deviation. 

\begin{example} \label{ex:dynamic}
Recall Example \ref{ex:SP_resolute} with $\N = \set{1, 2, 3}$ and $\A = \set{a, b, c}$, and $f$ be plurality with an alphabetic tie-breaking rule ($a > b > c$). Assume the following profile of true preferences
\[
\P = 
\begin{array}{l | lll}
1 & a & c & b \\
2 & b & c & a \\
3 & c & b & a 
\end{array} 
\]
From this profile agent $3$ has a best response consisting of ballot $bca$, leading to a NE where $\set{b}$ is the social choice. Similarly, agent $2$ has also a best response consisting of ballot $cba$, which leads to a NE where, instead, $\set{c}$ is the social choice.
\end{example}

The question that has been asked in the iterative voting literature \cite{branzei13how,meir18iterative} is how far a NE is from the truthful profile, once such NE is obtained through a sequence of best responses starting at the truthful profile. A way to make this question precise for the class of scoring rules (Definition \ref{def:scoring}) is by using the so-called (dynamic) price of anarchy:
\begin{align}
PoA(f) & = \min_{\P} \min_{\P' \in NE_\P} \frac{s_f(f(\P),\P)}{s_f(f(\P'),\P')}
\end{align}
where $s_f(x, \P)$ denotes the score of $x$ given $\P$ according to $f$, and $NE_\P$ is the set of NE reachable from $\P$ via best-response dynamics. That is, PoA denotes the worst case ratio between the true score of an alternative and the score it would receive in equilibrium.

It has been shown \cite{branzei13how} that $PoA(\Plr)$, with a deterministic tie-breaker, is close to $1$. In other words, full-blown manipulation does not have much impact on the quality of the outcome. A much more negative result has been proven for the Borda rule (cf. \cite{branzei13how}).


\section{Chapter notes}

The bulk of this chapter is again based on the introductions to voting theory provided in \cite[Ch. 2]{comsoc_handbook}, \cite[Ch. 1]{taylor05social}, \cite{endriss11logic} and \cite{taylor08mathematics}.

The proof of Theorem \ref{th:gib_satt} is based on \cite{batteau81stability}. Since the original articles \cite{gibbard73manipulation,satterthwaite75strategy} many alternative proofs of the theorem have been presented, e.g.: \cite{barbera83strategy,benoit00gibbard}.

Section \ref{sec:coping} is based on several sources: \cite{conitzer16barriers} (complexity of manipulation), \cite{brandt18rolling} (randomization), \cite{meir18iterative} (voting games), \cite{conitzer11dominating} (information).


\section{Exercises}

\begin{exercise} \label{ex:sp_pareto}
Prove item {\em iii)} of Lemma \ref{prop:sp_mono}.
\end{exercise}


\begin{exercise}
Define a variant of manipulability for irresolute SCFs as follows: an SCF $f$ (for a given $\tuple{\N, \A}$) is {\em manipulable} if there exist two profiles 
\begin{align*}
\P = \tuple{\p_1, \ldots, \p_i, \ldots \p_n} & \ \mbox{and} \ \P^\man = \tuple{\p_1, \ldots, \p_{i-1}, \p^\man_i, \p_{i + 1}, \ldots, \p_n} 
\end{align*}
such that $f(\P^\man)$ and $f(\P)$ are singletons and $f(\P^\man) \pp_i f(\P)$.

Prove that the Plurality rule (Rule \ref{vr:plurality}) is not manipulable according to the above definition. 
\end{exercise}

\begin{exercise} \label{ex:contagion}
Let $F$ be a resolute SPF (for a given $\tuple{\N, \A}$) which is non-imposed and strategy-proof. Prove that if $C \in \Block^{xy}$ for some $x,y \in A$, then $C \in \Block$. This is the equivalent of the contagion lemma (Lemma \ref{lemma:contagion}) for the Gibbard-Sattertwhaite theorem (Theorem \ref{th:gib_satt}).
\end{exercise}

\begin{exercise}
In no more than three sentences discuss the differences between Arrow's theorem (Theorem \ref{th:arrow}) and the Gibbard-Sattertwhaite theorem (Theorem \ref{th:gib_satt}).
\end{exercise}


\begin{exercise}
For $\size{A} = m$, $\size{\LO(\A)^n}= (m !)^n$. Now fix an order $\ppp$. How many single-peaked profiles exist for $\ppp$?
Justify your answer.
\end{exercise}

\begin{exercise}
Determine whether the median voter rule (Rule \ref{vr:median}) satisfies: unanimity, independence, monotonicity. Justify your answers.
\end{exercise}

\begin{exercise}
Use the variant of May's theorem with ties (Theorem \ref{th:may2}) to prove that, when $m = 2$, the median rule (Rule \ref{vr:median}) is anonymous, neutral and positively responsive. 
\end{exercise}

\begin{exercise} \label{ex:dominating}
Prove Fact \ref{fact:dominating}.
\end{exercise}

\begin{exercise}
Does the Gibbard-Satterthwaite theorem (Theorem \ref{th:gib_satt}) still hold if we drop the assumption of non-imposition? Justify your answer. If the theorem still holds explain how the proof can go through without that assumption. If the theorem does not hold any more, and therefore there exist SCFs which is strategy-proof and non-dictatorial, provide one of such functions.
\end{exercise}


\chapter{Choosing Many Out of Many}

The social choice problem we considered so far concerns how to select {\em one} `best' alternative given a profile of individual rankings, or more generally a set of tied `best' alternatives. In this chapter we present some results on the related problem of selecting one `best' set of alternatives---a so-called {\em committee}---or, more generally, one tied set of sets of `best' alternatives. Voting for committee selection is a much more recent, and therefore less consolidated, area of research in social choice and many directions of research in this area are still open.

\section{Preliminaries}

We follow the same approach used to introduce and discuss social choice functions: we provide the general definition of functions for committee selection; we then provide concrete examples of such functions; and we finally show how also these functions can be studied from an axiomatic point of view.


\subsection{Multi-winner social choice}

We are interested in situations where a subset of alternatives of a given size $k \leq m$, the {\em committee}, is to be selected based on the preferences over $A$ of $n$ individuals. Given a context $\tuple{\N, \A}$ a multi-winner SCF or committee-selection function (CSF), for a given committee size $1 \leq k \leq n$, is a function
\begin{align}
f_k: \LO(\A)^n \to 2^{\set{X \subseteq \A ~\mid~ \size{X} = k}} \backslash \emptyset \label{eq:multi}
\end{align}
Given a profile of linear orders, the function outputs a set of sets of alternatives (i.e., a set of committees), all of size $k$. Obviously a CSF $f_1$ is equivalent to an SCF, as it selects a committee of size $1$. When we leave $k$ unspecified, we talk about a family of CSFs mapping every integer $k$ in $\set{1, \ldots n}$ to the CSF $f_k$. In this chapter, we refer to families of CSFs as committee selection  or multi-winner rules. 

\begin{remark}
Notice that there is a natural way in the problem of committee selection can be casted as a social choice problem where committees of size $k$ {\em are} the alternatives relevant for the social choice. In other words the social choice context would be $\tuple{\N, \set{X \subseteq 2^\A \mid \size{X} = k}}$, and voters would need to submit ballots ranking all elements in such a set, which would therefore $\binom{m}{k}$ long. Given a profile of such ballot an SCF could be applied to select a tied set of committees. The type in \eqref{eq:multi} takes a different---and for obvious reasons more practical---approach selecting committees using only information consisting of rankings of alternatives.
\end{remark}

\subsection{Where are CSFs relevant}

Three application domains are normally mentioned for CSFs:
\begin{description}

\item[Excellence-driven] Here the issue is to select the best $k$ alternatives from $\A$. A typical example of this type of committee selection problem are shortlisting processes.

\item[Diversity-driven] Here the issue is to select $k$ alternatives that can cover all possible views on $\A$. Typical examples of this type are: facility location problems (e.g., where to place $k$ new schools), movie selection for an airline entertainment system, product selection for the homepage of an internet store.

\item[Proportionality-driven] Here the issue is to select $k$ alternatives that represent the views of $\N$ proportionally. The typical example are parliamentary elections.

\end{description}
All the above domains come intuitively with different requirements on how the CSF should ideally behave in order to select suitable committees. Such intuitive requirements underpin the axiomatic perspective on CSFs we present in Section \ref{sec:csf_axioms} below.


\section{Some multi-winner voting rules}

For each of the application domains mentioned above we provide examples of common multi-winner rules. Finally we also provide examples of staged rules for committee selection which are commonly used in practice.

\subsection{Excellence-driven rules}

To introduce some of the rules in this class we need some auxiliary terminology. Best-$k$ multiwinner rules are based on the notion of social preference function (SPF) we encountered in Chapter \ref{ch:many-one}. Recall that these functions map profiles of linear orders (the individual preference) to a total preorder, that is, a ranking that may contain ties. Notice, however, that each total preorder $\preceq$ can be represented by a set of linear orders, that is, all the possible linear orders that one obtains by resolving the ties in $\preceq$. The intuition behind best-$k$ multiwinner rules is to form a committee by simply selecting the top $k$ alternatives in some of those linear orders representing possible social preferences. 

One more piece of notation: to the note the top $k$ element of a linear order $\p \in A^2$ we write $\max^k_{\p}(\A)$. We move now to the formal definition.


\begin{definition}[Best-$k$ rules]  \label{def:SnC}
Given an SPF $G$ a best-$k$ rule $\SnC^G$ (for $G$) is the family of CSFs defined as follows. For any profile $\P \in \LO(\A)^n$ and integer $1 \leq k \leq n$
\[
\SnC^G_k(\P) = \set{X \subseteq \A \mid \ \mbox{exists} \ \p \in G(\P), \mbox{s.t.} \ X = \max^k_{\p}(\A)}.
\]
A family of CSFs $f$ is said to be a best-$k$ rule whenever there exists a $G$ such that, for every $1 \leq k \leq n$, $f_k = \SnC^G_k(\P)$.
\end{definition}

Depending on the choice of $G$ in Definition \ref{vr:SnC} we obtain different multi-winner rules. In particular, if we take $G$ to be the GSPF associating to any profile the set of linear orders refining a weak order induced by an SPF $F$ (like a scoring function) we obtain, for example: 
\begin{VR} \label{vr:SnC}
The following are best-$k$ ($1 \leq k \leq n$) multiwinner rules:
\begin{description}

\item[$k$-Plurality], i.e., $\SnC^\Plr_k(\P)$, also known as single non-transferable vote

\item[$k$-Approval], i.e., $\SnC^{\Ar_k}_k(\P)$, also known as bloc voting

\item[$k$-Borda], i.e., $\SnC^\Br_k(\P)$

\end{description}
where $\Plr$, $\Ar$ and $\Br$ refer to the SPFs ranking alternatives by their plurality, approval and, respectively, Borda scores.
\end{VR}
One can in the same fashion use the Copeland (Rule \ref{vr:copeland}), Kemeny (Rule \ref{vr:kemeny}) or Dodgson (Rule \ref{vr:dodgson}) scores to obtain the corresponding rankings and then cut them to the top $k$ alternatives to obtain the committee. These are sometimes referred to also as `score-and-cut' and `rank-and-cut' rules.

\subsection{Diversity-driven rules}

Several diversity-driven multiwinner rules can be formulated as follows:

\begin{definition}[Chamberlin-Courant \cite{chamberlin83representative}] 
For an integer $1 \leq k \leq n$ and scoring vector $\w$ (Definition \ref{def:scoring}) a Chamberlin-Courant rule for size $k$ is defined as follows. For any profile $\P \in \LO(\A)^n$:
\[
\ChCo^{\w}_k(\P) = \argmax_{C, \size{C} = k} \sum_{x \in C} \sum_{i \in \N^{Cx}_\P} w_{i(x)}
\]
where $\N^{Cx}_\P = \set{i \in \N \mid x = \max_{\p_i}(C)}$ is the set of voters $i$ that rank $x$ as top among the alternatives in $C$ ($x$ is then said to be the {\em representative} of $i$ in $C$);  and $i(x)$ as usual denotes the position of $x$ in $\p_i$. Quantity $\sum_{x \in C} \sum_{i \in \N^{Cx}_\P} w_{i(x)}$ is called the {\em representativeness value} of a given committee $C$. 
\end{definition}

By varying the scoring vector $\w$ we thus obtain different families of CSFs. For example:
\begin{VR} \label{vr:chco}
The Borda-based $k$-Chamberlin-Courant multiwinner rule is the rule $\ChCo^{\w}_k$ where $\w$ is the Borda scoring vector.
\end{VR}
The intuition behind this rule is simple. It forms the committee that contains as many representatives as possible, and with the highest possible Borda score. So, if an alternative is to be added to the committee, alternatives with higher ranks from agents not already represented by the committee will be preferred over alternatives with lower ranks from agents already represented by the committee.

\subsection{Proportionality-driven rules}

A simple proportionality-driven rule based on approval voting is the following one:
\begin{VR} \label{vr:PAV}
The proportional $k$-approval voting rule is defined as follows ($1 \leq k \leq n$):
\[
\PAV_k(\P) = \argmax_{C, \size{C} = k} \sum_{i \in \N} \left( 1 + \frac{1}{2} + \ldots + \frac{1}{\size{\max^k_{\p_i} \cap~C}} \right)
\]
\end{VR}
That is, the rule looks at the top $k$ candidates in each individual preference (cf. Rule \ref{vr:approval}) and tries to find the committee that maximizes a score that is computed by giving to each agent $i$ as many points as $1 + \frac{1}{2} + \ldots + \frac{1}{\ell}$, where $\ell$ is the number of alternatives in the committee that are approved by $i$. So, notice that the rule will always give priority to committees that contain a few alternatives from more agents rather than committees that contain many alternatives from fewer agents, like in the Chamberlin-Courant rules. At the same time, alternatives that are more often approved of will be more likely to be included in the committee, reflecting a form of proportionality.

\begin{remark}
The way we defined $\PAV_k$ is more restrictive than the way in which proportional approval voting is normally defined in the literature, where it is based on simple approval voting rather than $k$-approval. Cf. \cite{faliszewski17multiwinner}. Proportional approval voting was first proposed by the Danish matematician Thorvald Thiele at the end of the 19th century in \cite{thiele1895}, at the dawn of electoral suffrage in the Nordic countries. It is for this reason also known as {\em Thiele method}. The rule was used in practice at the beginning of the 20th century in Sweden. 
\end{remark}

\subsection{Choose-and-repeat rules}

\begin{VR}[Sequential plurality] \label{vr:splr}
The sequential plurality rule is the CSF defined as follows, for any profile $\P \in \LO(\A)^n$ and integer $k \geq 1$:
\begin{align*}
\SPlr_k(\P) & = \set{X \mid \mbox{exist} \ x^1, \ldots, x^k \ \mbox{s.t.} \  X = \set{x^1, \ldots, x^k} }
\end{align*}
where each $x^i$ is inductively defined as follows, with $1 \leq \ell < k$:
\begin{itemize}

\item $x^1 = \Plr(\P)$ with some tie-breaking;

\item $x^{\ell + 1} = \Plr(\P|_{\A \setminus \set{x^1, \ldots, x^\ell}})$ with some tie-breaking.

\end{itemize}
\end{VR}

It is worth observing that sequential plurality and $k$-plurality are different rules as they elect different committees (see Exercise \ref{ex:cplurality}).

\begin{VR}[Single transferable vote for committees] \label{vr:CSTV}
The single transferable vote rule for committee selection (CSTV) is the CSF defined as follows, for any profile $\P$ and $k \leq 1$:
\[
\CSTV_k(\P) = \set{X \mid \exists S^\ell = \tuple{X^\ell, \P^\ell} \ \mbox{s.t.} \ \size{X^\ell} = k }
\]
where stages $S^\ell$ are recursively defined as follows:
\begin{align*}
S^0 & = \tuple{X^0, \P^0} \ \mbox{where} \ X^0 = \emptyset, \P^0 = \P \\
S^{\ell + 1} & =
\left\{
\begin{array}{ll}
\tuple{X^\ell \cup \set{x}, \left(\P^\ell|_{\A \setminus \set{x}}\right)_{-C}}  & \mbox{if} \ \Plr(\P^\ell)(x) \geq q \\
\tuple{X^\ell, \P^\ell|_{\A \setminus \set{y}}}  & \mbox{otherwise}
\end{array}
\right.
\end{align*}
where: 
$\Plr(\P^\ell)$ denotes the plurality score;
$y \in \argmin_{z \in \A}(\Plr(\P^\ell)(z))$, i.e., $y$ has lowest plurality score;
$C \subseteq N$ is a set of size $q$ consisting of voters who rank $x$ as top in their preferences in profile $\P^\ell$; and $q = \lfloor \frac{n}{k+1} \rfloor + 1$.
\end{VR}
Intuitively at each stage the following happens: an alternative is added to the committee if the plurality score of such alternative meets the threshold $q$ (there may be many such sets), and if that is the case the alternative is discarded as well as a set of voters of size $q$ (there may be many such sets); otherwise no alternative is added to the committee and a plurality loser (there may be many such alternatives) is discared. The above is repeated until a committee of size $k$ is constructed.

Observe that the rule is non-deterministic, involving several tie-breaking decisions at each step. Intuitively all different committees resulting from different tie-breaking decisions are recorded in the output of the rule.\footnote{This is sometimes referred to as the parallel-universe tie-breaking model.} 

\begin{example}[\cite{faliszewski17multiwinner}]
Consider the following profile for $\set{1, \ldots, 6}$ and $\set{a, b, c, d, e}$:
\[
\P = 
\begin{array}{l | lllll}
1 & a & b & c & d & e \\
2 & e & a & b & d & c \\
3 & d & a & b & c & e \\
4 & c & b & d & e & a \\
5 & c & b & e & a & d \\
6 & b & c & d & e & a \\
\end{array} 
\]
We have:
\begin{align*}
\CSTV_2(\P) & = \set{\set{b,c}, \ldots \mbox{left to the reader} \ldots} \\
\SnC^\Plr_2(\P) & = \set{\set{c,a}, \set{c,b}, \set{c,d}, \set{c,e}} \\
\ChCo^{\tuple{m-1, m-2, \ldots, 0}}_2(\P) & = \set{\set{a, c}} \\
\PAV_2(\P) & = \set{\set{a,b}}
\end{align*}
In the $\ChCo^{\tuple{m-1, m-2, \ldots, 0}}_2$ rule $a$ represents voters in $\set{1, 2, 3}$ and $c$ represents voters in $\set{4, 5, 6}$. In the $\PAV_2$ the winning committee $\set{a,b}$ obtains $6.5$ points.
\end{example}

\begin{remark}[Droop quota]
The quota $q = \lfloor \frac{n}{k+1} \rfloor + 1$ in the committee STV Rule \ref{vr:CSTV} above is known as the Droop quota, from Richmond Droop, English lawyer and mathematician who introduced it (see \cite{droop81methods}). It is the most commonly used quota for STV committee elections, for instance in the Republic of Ireland. The quota is the smallest integer that guarantees that no candidate who would reach the quota would then have no place available in the committee. So it is a generalization of the idea of simple majority when electing committees of size $k = 1$. It can be derived as follows. 
The quota $q$ should satisfy two constraints: $k \cdot q \leq n$, i.e., the number of candidates meeting the quota cannot exceed the number of voters; and $n \leq k \cdot q + (q - 1)$, i.e., there cannot be a  $k+1^\mathit{th}$ candidate who meets the quota. From this we obtain:
\[
\left\lceil \frac{n+1}{k+1}\right\rceil = \left\lfloor \frac{n}{k+1} \right\rfloor + 1 \leq q \leq \frac{n}{k}.
\]
Taking the smallest $q$ satisfying the above inequalities thus gives us the Droop quota.
\end{remark}


\section{Axiomatic results} \label{sec:csf_axioms}

In this section we showcase the application of the axiomatic method to the case of multi-winner rules. We focus on the generalization of the notion of Condorcet-consistency (recall Definition \ref{def:more}) to the case of committee selection.

\subsection{Condorcet consistency and monotonicity}

\begin{definition}[Condorcet committees \cite{gehrlein85condorcet}]
Let a profile $\P$ (for $\tuple{\N, \A}$) be given. A set $C \subseteq \A$ is a {\em weak Condorcet committee} if, for all $x \in C$ and $y \not\in C$, $\supp^{xy}_\P \geq \supp^{yx}_\P$. For a given profile $\P$, we denote its set of Condorcet commitees of size $k$ by $\CC_k(\P)$.
\end{definition}
Intuitively, a committee is Condorcet consistent whenever it does not elect any alternative for which a different alternative is preferred by a majority of voters. Notice that the set of Condorcet committees for a given size $k$ may be empty.

\begin{example}[\cite{barbera08how}] \label{ex:imp}
\[
\P = 
\begin{array}{l | lllll}
1 & a & b & c & d & e \\
2 & a & b & e & c & d \\
3 & a & b & d & e & c \\
4 & c & d & e & a & b \\
5 & e & c & d & a & b \\
6 & d & e & c & a & b \\
\end{array} 
\]
We have that $\CC_1(\P) = \set{\set{a}}$, $\CC_2(\P) = \set{\set{a,b}}$ and $\CC_3(\P) = \set{\set{c, d, e}}$.

\end{example}

\begin{definition}
A family of CSFs $f$ (for a given $\tuple{\N, \A}$) is:

\begin{description}

\item[Condorcet consistent] (or) {\bf stable} iff for all $\P \in \LO(\A)^n$, and $1 \leq k \leq n$, $f_k(\P) \subseteq \CC_k(\P)$ whenever $\CC_k(\P) \neq \emptyset$. 

{\em Intuitively}, the rule always selects a weak Condorcet committee when one exists.

\item[Committee monotonic] iff for all $\P \in \LO(\A)^n$ and $1 \leq k < n$, 
\begin{itemize}
\item if there exists $C \in f_k(\P)$ then there exists $C' \in f_{k+1}(\P)$ s.t. $C \subset C'$;
\item if there exists $C' \in f_{k+1}(\P)$ then there exists $C \in f_{k}(\P)$ s.t. $C \subset C'$.
\end{itemize}

{\em Intuitively}, an alternative selected for a small committee should be selected also in larger committees.

\end{description}

\end{definition}
Observe that the first condition of committee monotonicity would suffice to express the axiom if $f$ is a family of resolute CSFs.

\begin{theorem}[\cite{barbera08how}]
There exists no family of CSFs $f$ such that $f$ is Condorcet consistent and committee monotonic.
\end{theorem}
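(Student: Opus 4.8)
The plan is to prove the impossibility by a single counterexample, namely the profile $\P$ of Example \ref{ex:imp} on the context $\tuple{\set{1,\ldots,6},\set{a,b,c,d,e}}$. Since both axioms are universally quantified over profiles and committee sizes, it suffices to exhibit one context and one profile on which the two requirements clash; no more abstract argument (of the ultrafilter flavour used for Arrow or Gibbard--Satterthwaite) is needed here.

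First I would nail down the three relevant sets of Condorcet committees for $\P$, as already recorded in Example \ref{ex:imp}: $\CC_1(\P) = \set{\set{a}}$, $\CC_2(\P) = \set{\set{a,b}}$, and $\CC_3(\P) = \set{\set{c,d,e}}$. The only genuine work here is verifying these by computing the pairwise supports $\supp^{xy}_\P$: one checks that $a$ weakly beats every other alternative (beating $b$ unanimously and tying $c,d,e$ at $3$--$3$), that $\set{a,b}$ is the unique size-$2$ set each of whose members weakly beats all outside alternatives, and that inside $\set{c,d,e}$ each member ties $a$ and $b$ while no other size-$3$ set qualifies. I expect this bookkeeping, though routine, to be the main source of potential error, so I would present the pairwise-comparison matrix explicitly to make the three claims transparent.

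Next I would use Condorcet consistency to pin down $f$ on $\P$. Because a CSF must output a non-empty set of committees \eqref{eq:multi}, stability forces both $f_2(\P) \subseteq \CC_2(\P) = \set{\set{a,b}}$ and $f_2(\P) \neq \emptyset$, hence $f_2(\P) = \set{\set{a,b}}$; identically $f_3(\P) = \set{\set{c,d,e}}$.

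Finally I would derive the contradiction from committee monotonicity. Applying its first clause with $k = 2$ (legitimate since $2 < n = 6$), from $\set{a,b} \in f_2(\P)$ there must exist $C' \in f_3(\P)$ with $\set{a,b} \subset C'$. But the only member of $f_3(\P)$ is $\set{c,d,e}$, and $\set{a,b} \not\subset \set{c,d,e}$. This contradiction shows that no family of CSFs on this context can be simultaneously Condorcet consistent and committee monotonic, which establishes the theorem. The whole difficulty is thus concentrated in the combinatorial verification of the Condorcet committees; the logical core is just a one-line clash between the forced values of $f_2$ and $f_3$.
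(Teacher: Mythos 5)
Your proposal is correct and is essentially the paper's own proof: both use the profile of Example \ref{ex:imp}, observe that Condorcet consistency (plus non-emptiness of the output) forces $f_2(\P)=\set{\set{a,b}}$ and $f_3(\P)=\set{\set{c,d,e}}$, and note that this violates committee monotonicity since $\set{a,b}\not\subset\set{c,d,e}$. Your version merely adds the explicit pairwise-comparison bookkeeping that the paper delegates to the example.
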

\begin{proof}
Assume towards a contradiction that this is not the case: there exists $f_k$ such that, for any $k$, $f_k$ is Condorcet consistent and committee monotonic.
Let $\P$ be the profile of Example \ref{ex:imp}. Since $f_2$ is Condorcet consistent by assumption, $\set{a, b} \in f_2(\P)$. Similarly, $\set{c, d, e} \in f_3(\P)$. This is a failure of committee monotonicity. Contradiction.
\end{proof}

\begin{theorem}[\cite{elkind17properties}]
A family of CSFs $f$ is committee-monotonic if and only if it is a best-$k$ rule.
\end{theorem}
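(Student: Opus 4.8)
The plan is to prove both implications by establishing the tight correspondence between best-$k$ rules and families of committees that nest across sizes. The content of the theorem is that committee-monotonicity is precisely what allows a family $\set{f_k}_k$ to be threaded by a single collection of full linear orders via $\SnC^G_k(\P) = \set{\max^k_\p(\A) \mid \p \in G(\P)}$, where $G : \LO(\A)^n \to 2^{\LO(\A)}$. For the easy direction ($\Leftarrow$), I would assume $f_k = \SnC^G_k$ for all $k$ and some $G$, and read committee-monotonicity with its universal quantifiers. Given $C \in f_k(\P)$, pick $\p \in G(\P)$ with $C = \max^k_\p(\A)$ and set $C' = \max^{k+1}_\p(\A)$; since $\p$ is a linear order, its top $k+1$ alternatives contain its top $k$, so $C \subset C'$ and $C' \in f_{k+1}(\P)$. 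Symmetrically, from $C' \in f_{k+1}(\P)$ witnessed by some $\p \in G(\P)$, the set $C = \max^k_\p(\A)$ lies in $f_k(\P)$ and satisfies $C \subset C'$. This is exactly committee-monotonicity.

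The substantial direction is ($\Rightarrow$). Assuming $f$ is committee-monotonic, I would define $G(\P)$ to be the set of all linear orders $\p \in \LO(\A)$ every initial segment of which is a selected committee, i.e.\ such that $\max^j_\p(\A) \in f_j(\P)$ for every $1 \leq j \leq m$. With this definition the inclusion $\SnC^G_k(\P) \subseteq f_k(\P)$ is immediate, since any $\p \in G(\P)$ has $\max^k_\p(\A) \in f_k(\P)$ by construction. The crux is the reverse inclusion: given $X \in f_k(\P)$, I must exhibit a single $\p \in G(\P)$ with $\max^k_\p(\A) = X$.

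To build such an order I would use committee-monotonicity to grow a full chain $C_1 \subset C_2 \subset \cdots \subset C_m = \A$ with $C_j \in f_j(\P)$ and $C_k = X$. Starting from $C_k = X$, the first clause of committee-monotonicity yields some $C_{k+1} \in f_{k+1}(\P)$ with $C_k \subset C_{k+1}$; iterating upward produces $C_{k+1}, \ldots, C_m$, and $\size{C_m} = m$ forces $C_m = \A$ (indeed $f_m(\P) = \set{\A}$, as the only committee of size $m$ is $\A$ and CSFs are non-empty). Dually, the second clause of committee-monotonicity produces $C_{k-1}, \ldots, C_1$ below $X$. Each inclusion $C_j \subset C_{j+1}$ adds exactly one alternative, so the chain determines a linear order $\p$ by ranking alternatives in the order they enter the chain; then $\max^j_\p(\A) = C_j \in f_j(\P)$ for all $j$, whence $\p \in G(\P)$, and $\max^k_\p(\A) = C_k = X$, giving $X \in \SnC^G_k(\P)$.

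I expect the main obstacle to be getting the quantifiers in committee-monotonicity exactly right: the argument needs that \emph{every} $C \in f_k(\P)$ extends upward and \emph{every} $C' \in f_{k+1}(\P)$ restricts downward, so that both the "grow up" and "grow down" steps are available for the specific $X$ at hand, and that the two half-chains concatenate into one order of length $m$ running from a singleton to $\A$. Once the chain is in place, translating between maximal chains of committees and linear orders is routine, and it is worth recording that distinct chains through $X$ yield distinct orders in $G(\P)$, so no committee of $f_k(\P)$ is lost.
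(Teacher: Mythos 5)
Your proof is correct and follows essentially the same route as the paper's (which is given only as a sketch): the backward direction is immediate from the nesting of the top-$k$ prefixes of a linear order, and the forward direction constructs the generating orders from the chains of nested committees that committee-monotonicity supplies. If anything, your version is more careful than the paper's sketch, which builds a single total preorder out of the differences $f_k(\P)\setminus f_{k-1}(\P)$ and glosses over irresoluteness, whereas your definition of $G(\P)$ as the set of all orders whose every prefix is a selected committee, together with the upward/downward maximal-chain argument, handles irresolute families correctly.
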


\begin{proof}[Sketch of proof]
\RtoL Best-$k$ rules are committee monotonic by construction (Definition \ref{def:SnC}). \LtoR The proof is by construction. Given a committee monotonic multiwinner rule $f$ we can define a SPF $G^f$ which, for every profile $\P$, chooses 
a social preference (total pre-order) $\preceq$ such that the alternatives in $\preceq$ with rank $k$ are precisely the elements in $f_{k}(\P) \backslash f_{k-1}(\P)$. Given $G$ Definition \ref{def:SnC} gives us a best-$k$ rule as desired.
\end{proof}

\subsection{Proportionality}

In recent years, the application of the axiomatic method to CSFs has  focused especially on trying to tease out the many ways in which the notion of proportionality can actually be interpreted. A wealth of axioms---as well as rules inspired by them---have been studied which capture different notions of proportionality, and highlight the richness of the concept. For example: extended justified representation \cite{aziz17justified}; laminar proportionality \cite{peters20proportionality}; priceability \cite{peters20proportionality}.


\section{Chapter notes}

Multiwinner rules are a very recent, and rapidly developing, area of research within computational social choice.
The chapter is based on \cite{barbera08how,elkind17properties,faliszewski17multiwinner}.


\section{Exercises}


\begin{exercise}  \label{ex:cplurality}
Construct a profile, and fix a $k$, such that $\SPlr_k(\P) $ (Rule \ref{vr:splr}) and $\SnC^\Plr_k(\P)$ (one of the rules in Rule \ref{vr:SnC}) output different committees of size $k$ for $\P$. Try to make such profile as small as possible. Based on your example explain the difference between the two rules.
\end{exercise}

\begin{exercise}
Construct a profile, and fix a $k$, such that $\SnC^{\Ar_k}_k$ (one of the rules in Rule \ref{vr:SnC}) and $\PAV_k$ (Rule \ref{vr:PAV}) output different committees of size $k$ for $\P$. Try to make such profile as small as possible. Based on your example explain the difference between the two rules.
\end{exercise}

\begin{exercise}
Every committee selection function $f_k$ defines a social choice function when $k = 1$. What are the social choice functions defined by:
\begin{enumerate}[i)]
\item $\ChCo_1^\w$, where $\w$ is the Borda scoring vector (Definition \ref{vr:chco});
\item $\ChCo_1^\w$, where $\w$ is the $\ell$-approval vector, where $1 \leq \ell \leq m$ (recall Definition \ref{vr:approval});
\item $\PAV_1$ (Definition \ref{vr:PAV}).
\end{enumerate}
Explain your answers.
\end{exercise}


\chapter{Topics for Projects}


This is a list of topics, in no specific order and with relevant blibliography, to be chosen from for the final papers.

\paragraph{Liquid democracy}

Liquid democracy is a form of voting where voting rights can be delegated transitively. It has been advocated in \cite{liquid_feedback} and used by, among others, the Pirate Party in Germany and France.
A series of recent papers have focused on various aspects of the system: \cite{kling15voting,blum16liquid,christoff17binary,bloembergen18rational,kahng18liquid,caragiannis19contribution,zhang21power}. See also \cite{behrens17origins} for a history of the concept.

\paragraph{Participatory budgeting}

The participatory budgeting problem consists in selecting a set of projects within a given budget, based on the preferences of citizens in a given area (e.g., municipality)\footnote{Maybe good to know that a participatory budgeting pilot has been run in autumn 2019 Oosterparkwijk, Groningen. One more is currently being run in Helpman, Groningen}. A number of papers have recently proposed and studied different voting rules for participatory budgeting and studied their properties: \cite{benade17preference,aziz18proportionally,goel19knapsack,jain20participatory}. Again the aim is to identify `optimal' rules. See \cite{aziz20participatory} for a recent overview.

\paragraph{Voting with preference intensity}

The type of voting covered in these lecture notes adhered to the `one-voter-one-vote' principle. This makes it impossible for voters to signal the `intensity' of their preferences (how much I like $x$ over $y$). Voting rules have been suggested that make the signaling of preference intensities possible. Here are some examples: cumulative voting \cite{glasser59game}, storable voting \cite{casella05storable,casella06experimental}, quadratic voting \cite{lalley18quadratic}.

\paragraph{Evaluative voting}

We are accustomed to an idea of voting in which we are asked to express our preferences, rather than judging on the quality of a candidate or a policy proposal. A number of new voting methods have been proposed which more clearly frame the social choice problem as a problem of evaluation: approval \& disapproval voting \cite{alcantud14disapproval}, majority judgments \cite{balinski14judge}.

\paragraph{Randomized voting}

In the early days of democracy randomization played an important role and was considered a quintessentially fair mechanism (e.g., in the Athenian democracy public officials were selected by lottery). Growing research is currently focusing on the use of randomization to improve on standard deterministic voting-based social choice. A good recent overview article is \cite{brandt18rolling}.

\paragraph{Sybil-resilient voting}

A key problem of voting mechanisms over the internet is the possibility of Sybil attacks (that is, voters can generate arbitrarily many identities thereby manipulating the outcome). Considerable research has been dedicated to the development of voting mechanisms that can be, to a smaller or larger extent, resistant to this form of attack \cite{conitzer10using,todo11false,waggoner12evaluating,wagman08optimal,wagman14false,shahaf19sybil}.

\paragraph{Epistemic social choice}

Epistemic social choice develops the insights of jury theorems such as Condorcet's by making more realistic assumptions over voters competence and independence. A good recent overview article is \cite{pivato19realizing}.

\paragraph{Judgment aggregation}

We have studied the social choice problem as a problem of aggregation of preferences, but more generally it can be framed as a problem of aggregation of logical formulas. This is the perspective taken in so-called judgment aggregation \cite{endriss16judgment,Grossi_2014}.

\paragraph{Strategic voting}

In Chapter \ref{ch:truthful} we discussed the issue of strategic or tactical voting, and touched on some basic game-theoretic aspects. Extensive literature exists on strategic voting and a comprehensive overview of the topic is \cite{meir18strategic}.

\paragraph{Doodle pool voting}

Doodle pools work with a form of approval voting (cf. Rule \ref{vr:approval}) giving rise to interesting forms of manipulative behavior. A few recent papers have looked into this issue \cite{james15strategic,obraztsova17doodle,anthony18how} and modeled Doodle pools as a special type of voting games.

\paragraph{Agenda manipulation}

SCFs are executed by central authorities (government, businesses, etc.). Such central authorities running the voting mechanisms may have the power to decide the agenda (the alternatives) of a social choice context. Can they, by changing such set, obtain better outcomes for themselves? This is the issue of agenda setting or agenda manipulation. An introduction to the topic, with further relevant references, is \cite[Section 2.4]{taylor05social}.

\paragraph{Incomplete preferences and elicitation}

In this course we assumed profiles consisted of complete descriptions of the preferences of all agents. However, in many settings, eliciting the full linear order from each agent may be unfeasible (e.g., because of the sheer size of the set of alternatives). Recent work has extended some of the notions we studied to the setting with incomplete preferences: \cite{terzopoulou19aggregating,kruger20strategic}. A related line of research has looked at the problem of how to best elicit voter's preferences in order to determine winners, depending on the different voting rules: how many queries, and how complex, does a voting rule need in order to compute a social choice? A good starting point for this line of work is \cite[Ch. 10]{comsoc_handbook}. The problem is also directly relevant to applications in participatory democracy (see, e.g. \cite{lee14crowdsourcing}).

\paragraph{Participation axioms}
An SCF suffers of the {\em no show paradox} whenever there exist profiles in which an agent would do better by not casting their ballot under the rule. An SCF is then said to satisfy the participation axiom if it does not suffer of the no show paradox. A good introduction to this axiom, how it relates to monotonicity and strategy-proofness axioms, as well as relevant related literature, can be found in \cite[Ch. 2]{comsoc_handbook}.

\paragraph{Voting theory and ensemble classifiers}
An established approach in Machine Learning to improve the performance in classification tasks is by `combining' the decisions of various individual classifiers. One way to combine classifiers is to use voting and some papers have explored how different voting rule affect performance in classification tasks by ensembles \cite{mu09analysis,cornelio19voting}.


\newpage

\bibliographystyle{apalike}

\end{document}